\newcommand{\Cl}{\mathbbm{C}}
\newcommand{\Rl}{\mathbb{R}}
\newcommand{\Nl}{\mathbb{N}}
\definecolor{lightgray}{rgb}{0.8,0.8,0.8}
\newcommand{\Om}{\Omega}
\newcommand{\om}{\omega}
\newcommand{\te}{\theta}
\newcommand{\la}{\lambda}
\newcommand{\La}{\Lambda}
\newcommand{\eps}{\varepsilon}
\newcommand{\A}{\mathcal{A}}
\newcommand{\R}{\mathcal{R}}
\newcommand{\B}{\mathcal{B}}
\newcommand{\M}{\mathcal{M}}
\newcommand{\K}{\mathcal{K}}
\newcommand{\Hil}{\mathcal{H}}
\newcommand{\E}{\mathcal{E}}
\newcommand{\F}{\mathcal{F}}
\newcommand{\Ss}{\mathscr{S}}   
\newcommand{\Rti}{\tilde{R}}
\newcommand{\fti}{\tilde{f}}
\newcommand{\ghat}{\hat{g}}
\newcommand{\PG}{\mathcal{P}}
\newcommand{\frS}{\textfrak{S}}
\def\bp{\boldsymbol{p}}
\def\bx{\boldsymbol{x}}
\def\by{\boldsymbol{y}}
\def\bof{{\mbox{\boldmath{$f$}}}}
\def\bxi{{\mbox{\boldmath{$\xi$}}}}
\DeclareMathOperator{\supp}{supp}
\newcommand{\dom}{\mathrm{dom}\,}
\newcommand{\re}{\mathrm{Re}}
\DeclareMathOperator{\im}{Im}
\newcommand{\ot}{\otimes}
\newcommand{\tp}[1]{^{\otimes #1}}    
\newcommand{\zd}{z^{\dagger}}
\newcommand{\ad}{a^{\dagger}}
\newtheorem{theorem}{Theorem}[section]
\newtheorem{proposition}[theorem]{Proposition}
\newtheorem{lemma}[theorem]{Lemma}
\newcounter{problem}
\theoremstyle{definition}
\newtheorem{definition}[theorem]{Definition}
\numberwithin{equation}{section}
\newlength{\dinwidth}
\newlength{\dinmargin}
\newcommand{\bg}{{\boldsymbol{g}}}
\newcommand{\bHil}{\boldsymbol{\mathcal H}}
\newcommand{\bpsi}{\boldsymbol{\psi}}
\newcommand{\bvarphi}{\boldsymbol{\varphi}}
\newcommand{\bPsi}{\boldsymbol{\Psi}}
\newcommand{\dS}{\text{dS}}
\def\ben{\begin{equation}}
\def\een{\end{equation}}
\def\bena{\begin{eqnarray}}
\def\eena{\end{eqnarray}}
\newcommand{\RR}{\mathbb{R}}
\newcommand{\HH}{\mathbb{H}}
\renewcommand{\SS}{\mathbb{S}}
\newcommand{\ZZ}{\mathbb{Z}}
\newcommand{\half}{\tfrac{1}{2}}
\newcommand{\dd}{{\rm d}}
\renewcommand{\H}{\mathcal{H}}
\newcommand{\bth}{{\underline{\theta}}}
\newcommand{\cO}{{\mathscr O}}
\date{May 30, 2017}
\begin{document}

\title{${SO(d,1)}$-invariant Yang-Baxter operators\newline and the dS/CFT correspondence}
\author{Stefan Hollands\thanks{Institut für Theoretische Physik, Universität Leipzig, Germany, stefan.hollands@uni-leipzig.de}\; and Gandalf Lechner\thanks{School of Mathematics, Cardiff University, UK, LechnerG@cardiff.ac.uk}}

\maketitle

\begin{abstract}
	We propose a model for the dS/CFT correspondence. The model is constructed in terms of a ``Yang-Baxter operator'' $R$ for unitary representations of the de Sitter group $SO(d,1)$. This $R$-operator is shown to satisfy the Yang-Baxter equation, unitarity, as well as certain analyticity relations, including in particular a crossing symmetry. With the aid of this operator we construct: a) A chiral (light-ray) conformal quantum field theory  whose {\em internal} degrees of freedom transform under the given unitary representation of $SO(d,1)$. By analogy with the $O(N)$ non-linear sigma model, this chiral CFT can be viewed as propagating in a de Sitter spacetime. b) A (non-unitary) Euclidean conformal quantum field theory on $\RR^{d-1}$, where $SO(d,1)$ now acts by conformal transformations in (Euclidean) {\em spacetime}. These two theories can be viewed as dual to each other if we interpret $\RR^{d-1}$ as conformal infinity of de Sitter spacetime. Our constructions use semi-local generator
  fields defined in terms 
of $R$ and abstract methods from operator algebras.
\end{abstract}

\section{Introduction}

Non-linear sigma models in 1+1 dimensions play an important role in several areas of theoretical and mathematical physics, see e.g. \cite{Ketov:2000} for a review. They become accessible to analytical methods in
particular when the target manifold is a coset manifold (maximally symmetric), such as $\mathbb{S}^{N-1} = O(N)/O(N-1)$. This model by construction has a manifest internal $O(N)$-symmetry as well as further hidden symmetries that make it integrable.
Its target space is the Riemannian manifold ${\mathbb S}^{N-1}$, and its internal symmetry group is compact. String theory is closely related to non-linear sigma models, where the target space comes into play as the spacetime in which the strings
propagate. Often, it is assumed to be of the form $\RR^{3+1} \times K_n$, where $K_n$ is a suitable compact Riemannian manifold representing the $n$ extra dimensions\footnote{Note that sigma-models are related to Nambu-Goto strings roughly speaking only after the 
reparameterization invariance of the world sheet has been taken into account in a suitable manner. 
In a quantum theory of strings, this forces $n$ to have certain well-known special values, depending whether or not one includes fermions. We will not in detail 
consider in this paper how this procedure would work in our case, so the correspondence to string theory is not a direct one.}. 

However, there is no a priori reason not to consider more general target spaces where not just the extra dimensions are curved. An example of this is the 
supersymmetric sigma-model in Anti-de Sitter spacetime arising from the quantization of the Green-Schwarz superstring in 
${\rm AdS}_5 \times \mathbb{S}^5$, which has been studied extensively in the literature, see~\cite{AF09} for a review. These studies are motivated by 
``AdS/CFT-correspondence''~\cite{Maldacena:1998,Witten:1998}, and therefore to a considerable extent aimed at the connection to gauge theories, 
see e.g.~\cite{Bei12,Bom16} for reviews.

It is also of obvious interest to consider the de Sitter spacetime ${\rm dS}_d = SO(d,1)/SO(d)$ as the target manifold, which, like the sphere or AdS space, 
is a coset manifold (maximally symmetric space). This type of non-linear sigma model would be expected to have an internal $SO(d,1)$-symmetry (which is not compact) together, perhaps, with further hidden symmetries that are in principle expected in any non-linear sigma model in a maximally symmetric space.

To exploit the hidden symmetries, say, in the $O(N)$-model, one may take advantage of the fact that its scattering matrix must be factorizing. In combination with the internal $O(N)$ symmetry, natural assumptions about the ``particle spectrum'' (basically the
representation of $O(N)$), hints from perturbation theory, and the highly constraining relations imposed by crossing symmetry, Yang-Baxter relation, analyticity, etc.,
one can often guess the form of the 2-body scattering matrix, which then consistently determines the $n$-body scattering matrix \cite{ZamolodchikovZamolodchikov:1978, AbdallaAbdallaRothe:2001}. In order to derive from such a scattering matrix quantities associated with the local operators of the theory, 
one can for example follow the bootstrap-form factor program~\cite{Smirnov:1992,BabujianFoersterKarowski:2006}. The aim of this program is to determine the matrix elements of local operators between in- and out-states (form factors), which are found using the scattering matrix and
various a priori assumptions about the form factors. The program
is largely successful but runs into technical difficulties when attempting to compute higher correlation functions in terms of the form factors, or, indeed, when trying to even show that the corresponding series converge.

\medskip

An alternative approach is to construct the operator algebras generated by the local quantum fields by abstract methods (see \cite{SchroerWiesbrock:2000-1,Lechner:2003}, and \cite{Lechner:AQFT-book:2015} for a review). The input is again the scattering matrix, but the
procedure is rather different. First, one constructs certain half-local generator fields. These ``left local fields''  $\phi(x)$ play an auxiliary role and are constructed
in such a way that with each of them, there is an associated ``right local field'' $\phi'(x')$ such that $[\phi(x), \phi'(x')]=0$ if $x$ and $x'$ are space like related points in 1+1 dimensional Minkowski space {\em and} $x'$ is to the right of $x$ in a relativistic sense. We will actually work with a similar construction for a ``chiral half'' of a massless theory on a lightray, where $[\phi(u), \phi'(u')] = 0$ if $u'> u$ with $u,u'\in\Rl$ lightray coordinates \cite{BostelmannLechnerMorsella:2011}.\footnote{There exists no meaningful scattering theory on a single lightray, and the underlying 2-body operator can here no longer be interpreted as a scattering operator. It rather serves as an algebraic datum (``R-matrix'', or ``Yang-Baxter operator'') which defines the theory.}

Both on two-dimensional Minkowski space and in the chiral lightray setting, the left and right local fields generate left and right local operator algebras, and suitable intersections of these algebras then contain the truly local fields \cite{BuchholzLechner:2004}. The latter are thereby characterized rather indirectly, and indeed, the local fields do not have a simple expression in terms of the auxiliary semi-local objects, but rather reproduce the full complexity of the form factor expansion \cite{BostelmannCadamuro:2012}. 

\medskip

In this article, we will consider such constructions for non-compact target spaces such as ${\rm dS}_d$. Since the internal symmetry group, $SO(d,1)$, is non-compact, its non-trivial unitary representations must necessarily be infinite-dimensional. This is an obvious major difference, say, to the $O(N)$-model, where the basic representation under which the single particle states transform is the fundamental representation, which is finite ($N$-) dimensional. Despite this difference, one
may proceed and ask whether the algebraic method can be generalized to non-compact groups such as $SO(d,1)$. For this, one first needs a 2-body scattering matrix (or rather, an ``$SO(d,1)$-invariant Yang-Baxter operator'', see Sect.~\ref{subsection:invariant-YB-ops}) satisfying suitable properties such as Yang-Baxter-relation, crossing symmetry, analyticity, unitarity, etc.
In turns out that the precise algebraic and analytic properties required to make the method work are related to each other in a rather intricate way, and one does not, a priori, see an obvious way to generate simple solutions to the requirements.
One result of our paper is to provide\footnote{
Some cases of our Yang-Baxter operator were previously derived in~\cite{DerkachovKorchemskyManashov:2001,DerkachovManashov:2006,DerkachovManashov:2008,ChicherinDerkachovIsaev:2013} using the powerful method of ``RLL-relations''. Apart from using a different formalism, the 
crucial difference to our work lies in the fact that we also investigate the analyticity properties, and in particular the crossing symmetry relation, which requires non-trivial adjustments. 
We also note that ~\cite{DerkachovKorchemskyManashov:2001,DerkachovManashov:2006,DerkachovManashov:2008,ChicherinDerkachovIsaev:2013} typically consider complex Lie-algebras rather than their real forms, so questions of unitarity -- resulting e.g. in the different types of series -- are not emphasized. Both crossing symmetry and unitarity play an essential role in our work.
} 
such a Yang-Baxter operator for the spin-$0$ principal, complementary and discrete series representations of $SO(d,1)$ in Sect.~\ref{section:YBOps-LorentzGroup}. This is facilitated by using an invariant geometrical description of the corresponding representations due to Bros, Epstein, 
and Moschella~\cite{BrosMoschella:1996,EpsteinMoschella:2014} (Sect.~\ref{section:UIR-LorentzGroup}). Our Yang-Baxter operator $R$ can hence be used to define left- and right half-local operator algebras for this model, 
as we show in a general framework in Sect.~\ref{section:crossing}, and more concretely in Sect.~\ref{subsection:chiral-models}. If it could be shown that suitable intersections of such algebras are sufficiently large, then this would indeed correspond to (a chiral half of) a local 1 + 1 dimensional field theory. 

Since the internal symmetry group of this model 
is by construction $SO(d,1)$, and since we believe that our $R$'s are, up to certain dressing factors, 
unique, it seems natural to expect that our theories have a relationship with quantized 
non-linear sigma models with de Sitter target space. To fix the dressing factors -- or more generally to establish a relationship with the bosonic string propagating in de Sitter -- one should complement our construction with an analysis along the lines of~\cite{AF09}\footnote{\cite{AF09} directly deals with the super-coset. 
In the de Sitter case, the corresponding super groups are given in table~\ref{tab:tableNahm} in the conclusion section.}. We leave this to a future investigation.

\medskip

The group $SO(d,1)$ is not just the isometry group of $d$-dimensional de Sitter spacetime ${\rm dS}_d$, but also the conformal isometry group of $(d-1)$-dimensional {\em Euclidean} flat space $\RR^{d-1}$. This dual role becomes geometrically
manifest if one attaches a pair of conformal boundaries ${\mathscr I}^\pm$ to ${\rm dS}_d$. Each of these boundaries is isometric to a round sphere ${\mathbb S}^{d-1}$, which in turn may be viewed as a 1-point conformal compactification of $\RR^{d-1}$ via the stereographic
projection. The induced action of $SO(d,1)$ by this chain of identifications provides the action of the conformal group on $\RR^{d-1}$.  
This well-known correspondence is at the core of the conjectured ``dS/CFT-correspondence/conjecture''~\cite{Strominger:2006,GuicaHartmannSongStrominger:2008}
which can be viewed as a cousin of the much better studied AdS/CFT-correspondence~\cite{Maldacena:1998,Witten:1998}. 
The idea behind this correspondence is that with the pair of infinities ${\mathscr I}^\pm$ there is associated a pair of Euclidean conformal field theories\footnote{
Such theories would not be expected to be reflection positive, i.e. have a unitary counterpart in Minkowski space $\RR^{d-2,1}$. One way to see this is that unitary representations of $SO(d,1)$ do not correspond to
unitary representations of $SO(d-1,2)$ via ``analytic continuation'' \cite{FrohlichOsterwalderSeiler:1983}.} acted upon by $SO(d,1)$. On the other hand, with the ``bulk'' ${\rm dS}_d$, there is associated a corresponding
``string-theory'' with internal symmetry group $SO(d,1)$. The action of the group essentially connects these two theories.

\medskip

Inspired by this circle of ideas, one might be tempted to ask  whether one can, in our setup, also naturally construct a Euclidean conformal field theory on $\RR^{d-1}$ associated with our sigma models with target space ${\rm dS}_d$.  In our approach, the core datum is a 2-body scattering matrix / Yang-Baxter operator $R$.
In Section~\ref{EuclQFT} we will outline an abstract procedure how to obtain a corresponding Euclidean conformal field theory from such an object. Thus, within our framework, there is a sense in which the essentially algebraic quantity $R$ can relate
a Euclidean conformal field theory in $(d-1)$ dimensions and a kind of ``string theory'' in $d$-dimensional de Sitter target space, 
and thereby gives a model for the dS/CFT correspondence. 

Our model of the dS/CFT correspondence may be described more concretely as follows. 
If $u=x_1+x_0$ is a lightray coordinate, the left-local chiral fields of the lightray CFT are given by 
\ben\label{field1}
\phi_R^{\rm Chir.}(u,X)
	=
	\int_{p, P} \left\{e^{iup}\,(X\cdot P)^{-\alpha-i\nu}\cdot\zd_R(\log p,P) 
	+ \dots
	\right\}
	\,,
\een
where $X$ is a point in de Sitter space and $(P \cdot X)^{-\alpha+i\nu}$ are ``de Sitter waves'' of ``momentum'' $P$ analogous to plane waves in Minkowski spacetime. 
The creation operators $z^\dagger_R(\theta, P)$ create a ``particle'' of lightray-rapidity $\theta=\log p$ and de Sitter ``momentum'' $P$ and obey a generalized Zamolodchikov-Faddeev algebra, $z_R(\theta, P) z_R(\theta', P') + R_{\theta-\theta'} z_R(\theta', P') z_R(\theta, P)=0$, 
where $R_\theta$ is our $R$-operator\footnote{See the main text for the full algebra and the concrete form of the $R$-operator.}.  
\begin{center}
  \begin{figure}[h]
		    \includegraphics[width=0.95\textwidth]{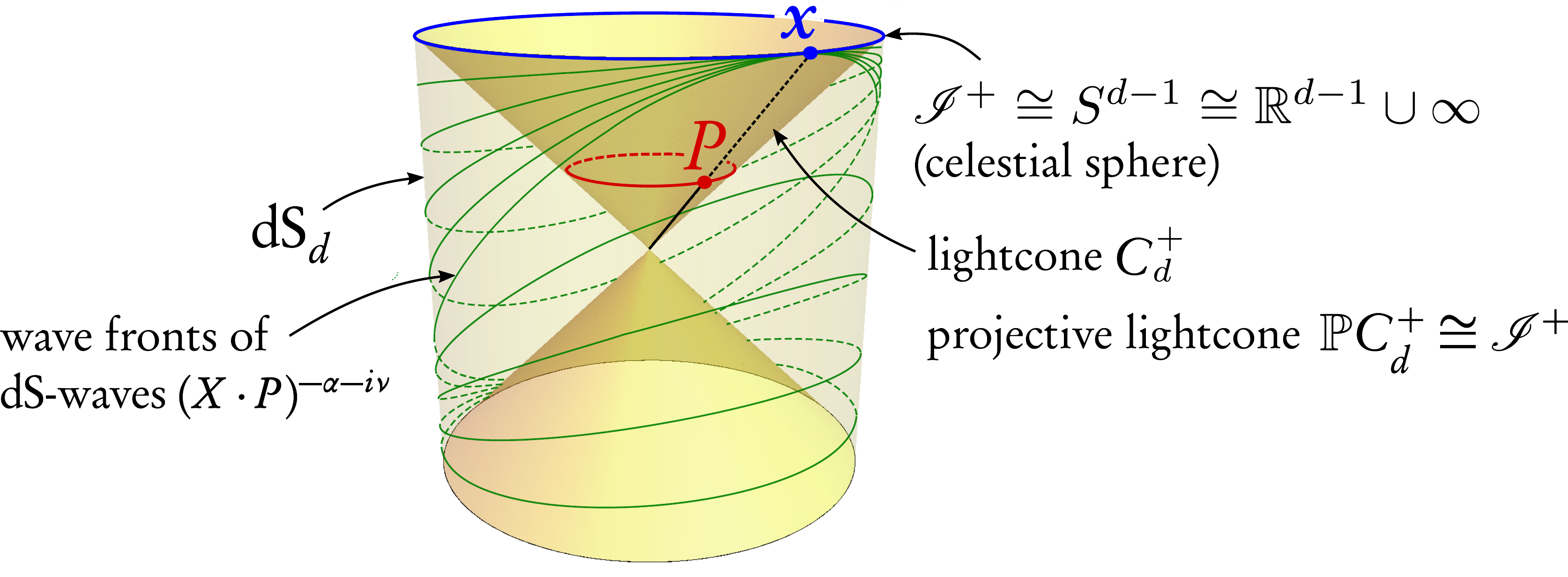}
		\caption{\em Correspondence $P\leftrightarrow\bx$}
  \end{figure}
  \end{center}
The de Sitter momentum $P$ is an element of the projective lightcone ${\mathbb P}C_d^+$. It can be identified with a point $\bx \leftrightarrow P$ in $(d-1)$-dimensional Euclidean space $\RR^{d-1} \cup \infty \cong {\mathbb S}^{d-1}$, as depicted in the figure above. This ``celestial sphere'' is identified with ${\mathscr I}^+$.
If we restrict $\theta$ to $N$ discrete values $\{\theta_1, \dots, \theta_N\}$ and set $z_{R,j}(\bx) = z_R(\theta_j, P)$ under the correspondence $\bx \leftrightarrow P$, 
we can also define a multiplet of $N$ {\em Euclidean} quantum fields
on $\RR^{d-1}$ as
\ben\label{field2}
\phi^{\rm Eucl.}_{R,j}(\bx) = z^\dagger_{R,j}(\bx) + z_{R,j}(\bx)\ , \quad j=1,\dots, N \ .
\een
The idea is that the duality maps the states created by the operators $\zd_{R}(\theta_j, P)$~\eqref{field1} to the states created by the operators $\zd_{R,j}(\bx)$~\eqref{field2} in the ``continuum limit'' $N \to \infty$. This map ``intertwines'' the unitary representation of the de~Sitter group $SO(d,1)$ on the respective Hilbert spaces. In particular, both sides of the dualtiy are based on the same irreducible unitary representations of the de Sitter group. The action of the de Sitter group is by construction geometrical on both sides, moving points $\bx$ of ${\mathscr I}^+$ in Euclidean CFT, and de~Sitter momenta $P$ in the sigma-model. Furthermore, fields at past null infinity~$\mathscr{I}^-$ are related by a TCP operator $\Theta$ which is introduced in the main text. Although our proposed correspondence is in principle mathematically precise in this sense, it remains to be seen how it is related to, say, a quantization of the  string in de~Sitter spacetime, say, along the lines of \cite{AF09}. It would 
also be interesting to see what the relation might be to other, rather different-looking proposals that are based on topological field theories, e.g. in \cite{H98}. We must leave this to a future investigation.

\section{Invariant Yang-Baxter operators and functions}\label{section:yb-ops}

The main input into all our constructions is a solution of the Yang-Baxter equation (YBE) with additional symmetries. In this section, we consider solutions to the YBE that are compatible with a representation  $V$ of a group $G$ and an associated conjugation $\Gamma$. In our subsequent construction of field-theoretic models, we will be interested in the case where $G=SO^\uparrow(d,1)$, $V$ is an irreducible (spin-0, principal, complementary or discrete series) representation of it, and $\Gamma$ the corresponding TCP operator. 

In order to compare with $O(N)$ sigma models and related constructions, we introduce in Section~\ref{subsection:invariant-YB-ops} Yang-Baxter operators and functions in the general context of a unitary representation of an arbitrary group $G$, and discuss some examples. The relevant aspects of the representation theory of $SO^\uparrow(d,1)$ are recalled in Section~\ref{section:UIR-LorentzGroup} in a manner suitable for our purposes, and the connection to the Klein-Gordon equation on de Sitter space is recalled in Section~\ref{subsection:Klein-Gordon}. These representations are then used in Section~\ref{section:YBOps-LorentzGroup}, where examples of invariant Yang-Baxter operators for the Lorentz group are presented.

\subsection{Definitions and examples}\label{subsection:invariant-YB-ops}

In the following, a {\em conjugation} on a Hilbert space means an antiunitary involution, and the letter $F$ is reserved for the flip $F:\K\ot\K\to\K\ot\K$, $F(k_1\ot k_2)=k_2\ot k_1$, on the tensor square of a Hilbert space $\K$. When it is necessary to emphasize the space, we will write $F_\K$ instead of $F$. For identities on various spaces, we write $1$ and only use more specific notation like $1_\K$ or $1_{\K\ot\K}$ where necessary.

\begin{definition}\label{definition:Invariant-YB-Operator}
     Let $G$ be a group, $V$ a unitary representation of $G$ on a Hilbert space $\K$, and $\Gamma$ a conjugation on $\K$. 
     \begin{enumerate}
     	\item An {\em invariant Yang-Baxter operator} (for $V,\Gamma$) is an operator $R\in\B(\K\ot\K)$ such that 
		\begin{enumerate}
			\item[(R1)] $R$ is unitary.
			\item[(R2)] $[R,\,V(g)\ot V(g)]=0$ for all $g\in G$.
			\item[(R3)] $[R,\,(\Gamma\ot\Gamma)F]=0$.
			\item[(R4)] $(R\ot1)(1\ot R)(R\ot1)=(1\ot R)(R\ot 1)(1\ot R)$ as an equation in $\B(\K\ot\K\ot\K)$ (i.e., with $1=1_\K$).
			\item[(R5)] $R^2=1_{\K\ot\K}$.
		\end{enumerate}
		The family of all invariant Yang-Baxter operators for a given representation $V$ and conjugation $\Gamma$ will be denoted $\R_{\rm op}(V,\Gamma)$.
		\item An {\em invariant Yang-Baxter function} (for $V$, $\Gamma$) is a function $R\in L^\infty(\Rl\to\B(\K\ot\K))$ such that for almost all $\te,\te'\in\Rl$,
	  \begin{itemize}
			\item[(R1')] $R(\te)$ is unitary.
			\item[(R2')] $[R(\te),\,V(g)\ot V(g)]=0$ for all $g\in G$.
			\item[(R3')] $(\Gamma\ot\Gamma) FR(-\te)F(\Gamma\ot\Gamma)
			=R(\te)$.
			\item[(R4')] $(R(\te)\ot1)(1\ot R(\te+\te'))(R(\te')\ot1)=(1\ot R(\te'))(R(\te+\te')\ot1)(1\ot R(\te))$ as an equation in $\B(\K\ot\K\ot\K)$ (i.e., with $1=1_\K$).
			\item[(R5')] $R(-\te)=R(\te)^{-1}$.
	  \end{itemize}
	The set of all invariant Yang-Baxter functions will be denoted $\R_{\rm fct}(V,\Gamma)$.
     \end{enumerate}
\end{definition}

Our main interest is in invariant Yang-Baxter {\em operators}, the Yang-Baxter functions serve as an auxiliary tool to construct them. Independent of $G,V,\Gamma,\K$, the four operators $\pm1, \pm F$ are always elements of $\R_{\rm op}(V,\Gamma)$; these are the trivial unitaries satisfying the constraints (R1)---(R5). The structure of $\R_{\rm op}(V,\Gamma)$ and $\R_{\rm fct}(V,\Gamma)$ depends heavily on the representation $V$ and group $G$, as we will see later in examples.

It is clear from the definition that any (say, continuous) function $R\in\R_{\rm fct}(V,\Gamma)$ defines an invariant Yang-Baxter operator $R(0)\in\R_{\rm op}(V,\Gamma)$, and any operator $R\in\R_{\rm op}(V,\Gamma)$ defines a (constant) Yang-Baxter function $R(\te):=R$. Furthermore, any invariant Yang-Baxter function defines an invariant Yang-Baxter operator on an enlarged space. This is spelled out in the following elementary construction, which we will use later on in the context of our QFT models.

\begin{lemma}\label{lemma:YBF->YBO}
	Let $R\in\R_{\rm fct}(V,\Gamma)$ (for some group $G$, on some Hilbert space $\K$), and consider the enlarged Hilbert space $\underline\K:=L^2(\Rl,d\te)\ot\K\cong L^2(\Rl\to\K,d\te)$, with $G$-representation $1\ot V$ and conjugation $((C\ot\Gamma)\psi)(\te):=\Gamma\psi(\te)$. On $\underline\K\ot\underline\K\cong L^2(\Rl^2\to\K\ot\K,d\te_1d\te_2)$, define the operator
	\begin{align}\label{eq:YBF->YBO}
		(\boldsymbol{R}\Psi)(\te_1,\te_2)
		:=
		R(\te_1-\te_2)\Psi(\te_2,\te_1)\,.
	\end{align}
	Then $\boldsymbol{R}\in\R_{\rm op}(1\ot V,C\ot\Gamma)$.
\end{lemma}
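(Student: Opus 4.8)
The plan is to verify the five operator axioms (R1)--(R5) for $\boldsymbol{R}$ one at a time, in each case reducing the claim to the corresponding primed axiom (R1')--(R5') satisfied by the function $R$. Throughout I would use the identification $\underline\K\ot\underline\K\cong L^2(\Rl^2\to\K\ot\K)$ and write $\boldsymbol{R}=M_R P$, where $P$ is the (unitary) flip of arguments $(P\Psi)(\te_1,\te_2)=\Psi(\te_2,\te_1)$ and $M_R$ is pointwise multiplication by the operator-valued function $(\te_1,\te_2)\mapsto R(\te_1-\te_2)$. Since $R\in L^\infty$ takes a.e.\ unitary values by (R1'), $M_R$ is unitary, hence so is $\boldsymbol{R}=M_R P$; this gives (R1). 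For (R5) I would compute $(\boldsymbol{R}^2\Psi)(\te_1,\te_2)=R(\te_1-\te_2)R(\te_2-\te_1)\Psi(\te_1,\te_2)$ and invoke $R(\te_2-\te_1)=R(\te_1-\te_2)^{-1}$ from (R5'), so that $\boldsymbol{R}^2=1$.

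For (R2) and (R3) the key is to compute how the lifted representation, the conjugation and the flip act on the function realization. The representation $(1\ot V)\ot(1\ot V)$ acts fiberwise as $V(g)\ot V(g)$ without touching the arguments, so commuting $\boldsymbol{R}$ past it reduces exactly to $[R(\te_1-\te_2),\,V(g)\ot V(g)]=0$, i.e.\ (R2'). For (R3) I would note two points: the flip $F$ on $\underline\K\ot\underline\K$ both swaps the arguments and applies the fiber flip $F_\K$, i.e.\ $(F\Psi)(\te_1,\te_2)=F_\K\Psi(\te_2,\te_1)$, whereas $(C\ot\Gamma)\ot(C\ot\Gamma)$ acts purely fiberwise as $\Gamma\ot\Gamma$ (the scalar conjugation being absorbed by antilinearity). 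Writing $J:=(\Gamma\ot\Gamma)F_\K$, the big conjugation $\Theta:=((C\ot\Gamma)\ot(C\ot\Gamma))F$ then acts as $(\Theta\Psi)(\te_1,\te_2)=J\,\Psi(\te_2,\te_1)$, and a short computation shows that $[\boldsymbol{R},\Theta]=0$ is equivalent to $R(\te)J=J\,R(-\te)$ for a.e.\ $\te$. This in turn is just a rewriting of (R3') using $J^2=1$.

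The hard part will be (R4), which is pure but delicate bookkeeping on $\underline\K^{\ot 3}\cong L^2(\Rl^3\to\K^{\ot 3})$. Writing $R_{12}(\te)=R(\te)\ot 1_\K$ and $R_{23}(\te)=1_\K\ot R(\te)$, I would first record the elementary actions $((\boldsymbol{R}\ot1)\Psi)(\te_1,\te_2,\te_3)=R_{12}(\te_1-\te_2)\Psi(\te_2,\te_1,\te_3)$ and $((1\ot\boldsymbol{R})\Psi)(\te_1,\te_2,\te_3)=R_{23}(\te_2-\te_3)\Psi(\te_1,\te_3,\te_2)$, and then carefully compose the three factors on each side, tracking both the adjacent-fiber $R$-factors and the successive permutations of the three arguments. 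Both triple products turn out to act as an operator-valued prefactor applied to $\Psi(\te_3,\te_2,\te_1)$, and the braid relation (R4) for $\boldsymbol{R}$ reduces to
\begin{equation*}
R_{12}(\te_1-\te_2)R_{23}(\te_1-\te_3)R_{12}(\te_2-\te_3)=R_{23}(\te_2-\te_3)R_{12}(\te_1-\te_3)R_{23}(\te_1-\te_2).
\end{equation*}
Setting $\te=\te_1-\te_2$ and $\te'=\te_2-\te_3$ (so that $\te+\te'=\te_1-\te_3$) identifies this with (R4') verbatim. The main point, and the only genuine subtlety, is that the constant braid relation for $\boldsymbol{R}$ turns into the spectral-parameter-dependent relation (R4') precisely because the argument swaps convert the fixed arguments $\te_i$ into the rapidity differences in which (R4') is phrased; getting the order of the factors and the signs of these differences right is where all the care must go.
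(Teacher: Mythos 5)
Your proposal is correct and takes exactly the approach the paper intends: the paper omits the proof with the remark that it ``amounts to inserting the definitions,'' and your verification of (R1)--(R5) is precisely that insertion, carried out accurately. In particular, your bookkeeping for (R4) is right: both triple products act as operator prefactors on $\Psi(\te_3,\te_2,\te_1)$, and the substitution $\te=\te_1-\te_2$, $\te'=\te_2-\te_3$ turns the resulting identity into (R4') verbatim, which is the one genuinely nontrivial point.
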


The proof of this lemma amounts to inserting the definitions and is therefore skipped. See \cite{LechnerSchutzenhofer:2013,BischoffTanimoto:2013} for similar results.

\medskip

For later use, we mention that we can also work with a different measure $d\nu(\te)$ than Lebesgue measure $d\te$ in this construction. For example, we can take a finite number $N$ of point measures, located at $\te_1,...,\te_N\in\Rl$. In that case, $L^2(\Rl^2,d\nu(\te_1)d\nu(\te_2))\cong\Cl^N\ot\Cl^N$, with orthonormal basis $\{e_{jl}\}_{j,l=1}^N$, invariant under the conjugation $C\ot C$. On vectors $\Psi_{jl}:=\Psi\ot e_{jl}\in \K\tp{2}\ot\Cl^N\ot\Cl^N$, our invariant Yang-Baxter operators then take the form 
\begin{align}\label{eq:R-amp}
	(R\Psi)_{jl}:=R(\te_j-\te_l)\Psi_{lj}\,.
\end{align}

This construction can also be seen as an example of the partial spectral disintegration formulas considered in \cite{BischoffTanimoto:2013}.

\bigskip 

Before presenting examples, we recall how an invariant Yang-Baxter operator $R$ gives rise to an $R$-symmetric Fock space, following \cite{LiguoriMintchev:1995,Lechner:2003,LechnerSchutzenhofer:2013}. In this Fock space construction, we consider an invariant Yang-Baxter operator $R$ and call its group representation, conjugation, and Hilbert space $V_1$, $\Gamma_1$, and $\Hil_1$, as these data enter on the one particle level.

As is well known, solutions of the Yang-Baxter equation (R4) induce representations of the braid group of $n$ strands on $\Hil_1\tp{n}$, by representing the elementary braid $\beta_k$, $k=1,...,n-1$, by ${\rm id}_{\Hil_1\tp{k-1}}\ot R\ot{\rm id}_{\Hil_1\tp{n-k-1}}$. This representation factors through the permutation group because of (R5), i.e. we have a representation $D_n^R$ of the symmetric group $\frS_n$ on $n$ letters on $\Hil_1\tp{n}$. Since $R$ is unitary (R1), so are the representations $D_n^R$. We denote by $\Hil_n^R\subset\Hil_1\tp{n}$ the subspace on which $D^R_n$ acts trivially, i.e.
\begin{align}\label{eq:def-R-symmetric-space}
     \Hil_n^R=P^R_n\Hil_1\tp{n}\,,\qquad
     P^R_n:=\frac{1}{n!}\sum_{\pi\in\frS_n}D_n^R(\pi)\,.
\end{align}
In view of (R2), the representation $g\mapsto V_1(g)\tp{n}$ of $G$ on $\Hil_1\tp{n}$ commutes with the projection $P^R_n$, and hence restricts to $\Hil^R_n$. We denote this restriction by $V_n:=V\tp{n}|_{\Hil^R_n}$.

Furthermore, we define a conjugation $\tilde\Gamma_n$ on $\Hil_1\tp{n}$ by
\begin{align}
     \tilde\Gamma_n:=\Gamma_1\tp{n}F_n\,,\qquad F_n(k_1\ot...\ot k_n):=k_n\ot...\ot k_1\,.
\end{align}
It is clear that $\tilde\Gamma_n$ is a conjugation on $\Hil_1\tp{n}$, and thanks to (R3), it commutes with $P^R_n$ and thus restricts to $\Hil_n^R$ \cite{LechnerSchutzenhofer:2013}. We call this restriction $\Gamma_n:=\tilde\Gamma_n|_{\Hil_n^R}$.

The $R$-symmetric Fock space over $\Hil_1$ is then defined as
\begin{align}
     \Hil^R:=\bigoplus_{n=0}^\infty\Hil^R_n\,,\quad \text{with }\;\Hil_0^R:=\Cl\,.
\end{align}
We denote\footnote{Note that despite our notation, also $V,\Gamma,\Om$ depend on $R$.} its Fock vacuum by $\Om:=1\oplus0\oplus0...$, the resulting ``$R$-second quantized'' representation of $G$ by $V:=\bigoplus_n V_n$, and the resulting conjugation by $\Gamma:=\bigoplus_n\Gamma_n$.

\bigskip

$R$-symmetric Fock spaces generalize the usual Bose/Fermi Fock spaces, which are given by the special cases $R=\pm F$. For our purposes, the $R$-symmetric spaces (for non-trivial $R$) will be convenient representation spaces for our models. We next give some examples of invariant Yang-Baxter operators and functions. 
 
\bigskip

\noindent{\bf Example 1: $\boldsymbol{O(N)}$.} We consider the group $G=O(N)$ in its defining representation $V$ on~$\Cl^N$, with complex conjugation in the standard basis as conjugation. This is a typical finite-dimensional example, which appears in particular in the context of the $O(N)$ sigma models. It is known from classical invariant theory that the $O(N)$-invariance constraint (R2) allows only three linearly independent solutions: The identity $1$ of $\Cl^N\ot\Cl^N$, the flip $F$, and a one-dimensional symmetric projection $Q$ \cite[Thm.~10.1.6]{GoodmanWallach:2009}. One can then check that (R1)---(R5) together only allow for trivial solutions, i.e. $\R_{\rm op}(V,C)=\{\pm1,\,\pm F\}$. 

However, non-trivial Yang-Baxter {\em functions} $R\in\R_{\rm fct}(V,C)$ do exist. A prominent example is
\begin{align*}
	R(\te)
	&=
	\sigma_1(\te)\cdot Q+\sigma_2(\te)\cdot1+\sigma_3(\te)\cdot F\,,\\
	\sigma_2(\theta)
	&:=
	g(\theta)g(i\pi-\theta),
	\quad \text{with} \quad
	g(\theta)
	:=
	\frac{\Gamma(\frac{1}{N-2}-i\frac{\theta}{2\pi})\Gamma(\frac{1}{2}-i\frac{\theta}{2\pi})}
	{\Gamma(\frac{1}{2}+\frac{1}{N-2}-i\frac{\theta}{2\pi})\Gamma(-i\frac{\theta}{2\pi})}
	  ,\\
	  \sigma_1(\theta)
	  &:=
	  -\frac{2\pi i}{(N-2)}\,\frac{\sigma_2(\theta)}{i\pi-\theta}
	  \,,\qquad
	  \sigma_3(\theta)
	  :=
	  -\frac{2\pi i}{(N-2)}\,\frac{\sigma_2(\theta)}{\theta}
	  \,,
\end{align*}
which satisfies not only (R1')---(R5'), but also the analytic properties (R6'), (R7') that will be introduced in Section~\ref{section:crossing}. This Yang-Baxter function describes the $O(N)$-invariant two-body S-matrix of the $O(N)$-sigma model \cite{ZamolodchikovZamolodchikov:1978}.

\bigskip

\noindent{\bf Example 2: The ``$\boldsymbol{ax+b}$'' group $\times$ inner symmetries.} As an example of a different nature, we consider ``$ax+b$'' group, i.e. the affine group $\PG_o$ generated by translations $u\mapsto u+x$ and dilations $u\mapsto e^{-\la} u$ on the real line $\Rl$. The physical interpretation is to view $\Rl$ as a lightray, which describes one chiral component of a massless field theory on two-dimensional Minkowski space\footnote{With minor modifications, this construction can also be carried out for a massive representation of the Poincar\'e group in two space-time dimensions.}. 

The group $\PG_o$ has a unique unitary irreducible representation $U_o$ in which the generator of the translations is positive. We may choose $L^2(\Rl,d\te)$ as our representation space, and then have, $(x,\la)\in \PG_o$,
\begin{align}\label{eq:Uo}
	(U_o(x,\la)\psi)(\te)
	=
	e^{ix\, \exp(\te)}\cdot\psi(\te-\la)
	\,,
\end{align}
where the variable $\te$ can be thought of as being related to the (positive) light like momentum $p$ by $p=e^\te$. The conjugation $(C\psi)(\te):=\overline{\psi(\te)}$ extends this representation to also include the reflection $x\mapsto-x$ on the lightray.

In this example, the invariant Yang-Baxter operators $R\in\R_{\rm op}(U_o,C)$ can all be computed, and in contrast to the $O(N)$ case, many such operators exist. The physically interesting ones are given by multiplication operators of the form
\begin{align}
	(R\Psi)(\te_1,\te_2)=\sigma(\te_1-\te_2)\cdot\Psi(\te_2,\te_1)
\end{align}
as in \eqref{eq:YBF->YBO}, where $\sigma\in L^\infty(\Rl\to\Cl)$ is a scalar function satisfying
\begin{align}
	\overline{\sigma(\te)}=\sigma(\te)^{-1}=\sigma(-\te)
	\,.
\end{align}
Such ``scattering functions'' include for example the two-body S-matrix of the Sinh-Gordon model, which is \cite{ArinshteinFateevZamolodchikov:1979}
\begin{align}\label{eq:scat-func}
	\sigma(\te)
	=
	\frac{\sinh\te-ib}{\sinh\te+ib}\,,
\end{align}
where $0<b<\pi$ is a function of the coupling constant.

\medskip

To generalize to a setting with inner symmetries, we can also, instead of $\PG_o$ alone, take the direct product $\PG_o\times G$ of $\PG_o$ with an arbitrary group $G$, which is thought of as the group of global gauge transformations. We then consider a unitary representation $V$ of $G$ on an additional Hilbert space $\K$, and form the direct product representation $U_o\otimes V$ on $\Hil_1=L^2(\Rl,d\te)\ot\K\cong L^2(\Rl\to\K,d\te)$, i.e.
\begin{align}\label{eq:UoV}
	((U_o(x,\la)\ot V(g))\psi)(\te)
	=
	e^{i\,x\,\exp(\te)}\cdot V(g)\psi(\te-\la)\,.
\end{align}
For later application, we stress that $\K$ can still be infinite-dimensional, as it is the case for the irreducible representations of $G=SO^\uparrow(d,1)$.

To also have a TCP operator in this extended setting, we assume that there exists a conjugation $\Gamma$ on $\K$ that commutes with $V$ (i.e., $V$ must be a self-conjugate representation), and then consider $C\ot\Gamma$ as TCP operator for $U_o\ot V$.

Under mild regularity assumptions, one can then show that essentially all invariant Yang-Baxter operators $\boldsymbol{R}\in\R_{\rm op}(U_o\ot V,C\ot\Gamma)$ are again of the form $(\boldsymbol{R}\Psi)(\te_1,\te_2)=R(\te_1-\te_2)\Psi(\te_2,\te_1)$ \eqref{eq:YBF->YBO}. That is, $\boldsymbol{R}$ acts by multiplying with an (operator-valued) function $R$, and this function $R$ has to exactly satisfy the requirements (R1')---(R5'). In particular, $R(\te)$ commutes with $V(g)\ot V(g)$ for all $\te\in\Rl$, $g\in G$.

This example is therefore quite different from the previous $O(N)$-example: Many invariant Yang-Baxter operators exist, and they are essentially all given by invariant Yang-Baxter functions via \eqref{eq:YBF->YBO}. Both examples can be combined by taking the inner symmetry group as $G=O(N)$, as one would do for describing the $O(N)$-models~\cite{LechnerSchutzenhofer:2013,Alazzawi:2014}. 

\medskip

The construction just outlined here can be used to describe the one-particle space of a (chiral component of) a massless sigma model with symmetry group $G$. To prepare our construction of such models for $G=SO^\uparrow(d,1)$, we review some representation theory of this group next.

\subsection{Unitary representations of $\boldsymbol{SO^\uparrow(d,1)}$}\label{section:UIR-LorentzGroup}

We now turn to the case of central interest for this article, the (proper, orthochronous) Lorentz group $G=SO^\uparrow(d,1)$, $d\geq2$. In later sections, this group will appear either as  the isometry group of $d$-dimensional de Sitter space $\dS_d$ or as the conformal group of $\Rl^{d-1}$. In this section, we first give a quick tour d'horizon of some of its representation theory. Readers familiar with this subject can skip to the next section.

Our exposition is in the spirit of~\cite{BrosMoschella:1996},~\cite{EpsteinMoschella:2014} (and references therein), and we will use the following notation: 
Capital letters $X,P$ etc. denote points in Minkowski space $\Rl^{d+1}$. The dot product of this Minkowski spacetime is defined with mostly minuses in this paper, 
\ben
X \cdot Y = X_0 Y_0 - X_1 Y_1 - \ldots - X_d Y_d \ . 
\een
Points in $\Rl^{d-1}$ are denoted by boldface letters, $\bx, \bp$, and their Euclidean norm is written as $|\bx|^2 = \sum_{i=1}^{d-1} x_i^2$. 

Unitary irreducible representations (UIRs) 
of $SO^\uparrow(d,1)$ are classified by a continuous or discrete parameter corresponding roughly to the ``mass'' in the Minkowski context, 
and a set of spins corresponding to the $\lceil \half d \rceil$ Casimirs of $ \frak{so}(d,1)$. In this paper we will only consider the case of zero spin\footnote{
For $d=2$, there is no spin, and our representations exhaust all possibilities, see~\cite{Lang:1975}.} and ``principal-'', 
``complementary-'' and ``discrete series'' representations. There are many unitarily equivalent models for these representations in the literature, see 
e.g.~\cite{Lang:1975,VilenkinKlimyk:1991}. 
The most useful description for our purposes is as follows. First define the future lightcone  
\ben
C_d^+ = \{ P \in \RR^{d+1} \mid P \cdot P = 0 \ , \ \ \  P_0 >0\}
\een
in $(d+1)$-dimensional  Minkowski space. We think of $C_d^+$ as a (redundant) version of momentum space in the Minkowski context. 
On $C_d^+$, consider smooth $\Cl$-valued ``wave functions'' $\psi$ which are homogeneous,
\ben\label{hom}
	\psi(\lambda P) = \lambda^{-\tfrac{d-1}{2} -i\nu}\cdot \psi(P) \ ,  \qquad \text{for all $\lambda>0$,}
\een
where at this stage, $\nu \in \Cl$ is arbitrary. As the fraction $\tfrac{d-1}{2}$ will appear frequently, we introduce the shorthand $\alpha:=\tfrac{d-1}{2}$.

The collection of these wave functions forms a complex vector space which we will call~$\K_\nu$. A linear algebraic representation 
of $\Lambda \in SO^\uparrow(d,1)$ is defined by pullback,
\ben
\label{Udef}
	V_\nu(\Lambda): \K_\nu \to \K_\nu \ , \qquad  [V_\nu (\Lambda) \psi](P) := \psi(\Lambda^{-1} P) \ . 
\een
In order for this to define a unitary representation, we must equip $\K_\nu$ with an invariant (under $V_\nu(\Lambda)$) positive definite inner product. It turns out that this is possible only for certain values of $\nu$. These are\footnote{In the case of 
the discrete series, the inner product is in fact only defined on an invariant subspace of $\K_\nu$, see below.}:
\begin{enumerate}
	\item[a)] (Principal series)  $\nu \in \RR$. 
	\item[b)] (Complementary series) $i\nu \in (0,\alpha)$.
	\item[c)] (Discrete series) $i\nu \in \alpha + {\mathbb N}_0$.
\end{enumerate}
Two complementary or discrete series representations $V_\nu$, $V_{\nu'}$ are inequivalent for $\nu\neq\nu'$, and two principal series representations $V_\nu$ and $V_{\nu'}$ are equivalent if and only if $\nu=\pm\nu'$.
We now explain what the inner products are in each case. As a preparation, consider first the $d$-form $\mu$ and vector field $\xi$ on $C_d^+$ defined by 
\ben
\mu = \frac{\dd P_1 \wedge \cdots \wedge \dd P_d}{P_0} \ , \quad \xi = P_0 \frac{\partial}{\partial P_0} + \dots + P_d \frac{\partial}{\partial P_d} \ . 
\een 
$\mu$ is the natural integration element on the future lightcone, and $\xi$ the generator of dilations. Both are invariant under any
$\Lambda \in SO^\uparrow(d,1)$, i.e.  $\Lambda^* \mu = \mu, \Lambda_* \xi = \xi$. We then form
\begin{align}\label{eq:omega}
	\omega = i_\xi \mu = \sum_{k=1}^d (-1)^{k+1} \frac{P_k}{P_0} \, \dd P_1 \wedge \dots \widehat{\dd P_k} \wedge \dots \dd P_d \ , 
\end{align}
where $i_\xi$ is Cartan's operator contracting the upper index of the vector $\xi$ into the first index of the $d$-form $\mu$. A key lemma which we use time and again
is the following \cite[Lemma~4.1]{BrosMoschella:1996}:

\begin{lemma}\label{lemma:closedform}
	Suppose $f$ is a homogeneous function on the future lightcone $C_d^+$ of degree $-(d-1)$. Then $f\omega$ is a closed $(d-1)$-form on $C_d^+$, $\dd(f \omega) = 0$.
\end{lemma}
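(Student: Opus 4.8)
The plan is to reduce the statement to two short computations on the $d$-dimensional manifold $C_d^+$ and combine them using Cartan's calculus. The crucial structural observation is that $C_d^+$ has dimension $d$: the null condition $P\cdot P=0$ cuts one dimension from the ambient $\Rl^{d+1}$, and one may use $(P_1,\dots,P_d)$ with $P_0=(P_1^2+\dots+P_d^2)^{1/2}$ as a global chart (away from the apex). Consequently every $(d+1)$-form on $C_d^+$ vanishes identically. In particular $\dd\mu=0$, since $\mu$ is a top-degree $d$-form, and $\dd f\wedge\mu=0$, since this would be a $(d+1)$-form. These two vanishing statements are the engine of the argument.

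First I would compute $\dd\omega$. Applying Cartan's magic formula $\mathcal{L}_\xi=\dd\,i_\xi+i_\xi\,\dd$ to $\mu$, and using $\dd\mu=0$ together with $\omega=i_\xi\mu$, gives
\begin{align*}
	\mathcal{L}_\xi\mu = \dd(i_\xi\mu)+i_\xi(\dd\mu)=\dd\omega\,.
\end{align*}
On the other hand, the dilation flow $\phi_t(P)=e^tP$ scales $\mu$ as $\phi_t^*\mu=e^{(d-1)t}\mu$ (each of the $d$ factors $\dd P_k$ scales by $e^t$ and the denominator $P_0$ by $e^t$), so $\mathcal{L}_\xi\mu=(d-1)\mu$. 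Hence $\dd\omega=(d-1)\mu$.

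Next I would compute $\dd f\wedge\omega$. Contracting the vanishing $(d+1)$-form $\dd f\wedge\mu$ with $\xi$ and using the graded Leibniz rule for the interior product yields
\begin{align*}
	0=i_\xi(\dd f\wedge\mu)=(i_\xi\dd f)\,\mu-\dd f\wedge(i_\xi\mu)=(\xi f)\,\mu-\dd f\wedge\omega\,.
\end{align*}
Since $f$ is homogeneous of degree $-(d-1)$, Euler's relation gives $\xi f=\dd f(\xi)=-(d-1)f$, so that $\dd f\wedge\omega=-(d-1)f\,\mu$. Combining the two computations,
\begin{align*}
	\dd(f\omega)=\dd f\wedge\omega+f\,\dd\omega=-(d-1)f\,\mu+(d-1)f\,\mu=0\,,
\end{align*}
which is the assertion; observe that the degree $-(d-1)$ is exactly the value that makes the two contributions cancel.

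The proof is short, so the main obstacle is conceptual bookkeeping rather than any hard estimate. One must be careful to treat $\mu,\omega,f\omega$ as intrinsic forms on the $d$-dimensional cone, so that the dimension count killing $(d+1)$-forms is legitimate, and one must track the signs in the graded Leibniz rule for $i_\xi$ as well as the homogeneity weights of $\mu$ and $f$. A more pedestrian alternative would be to pass to the explicit chart $(P_1,\dots,P_d)$, expand $\omega$ componentwise, and differentiate directly; I expect this to succeed but to be considerably messier, and I would prefer the invariant Cartan-calculus argument above.
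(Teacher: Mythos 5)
Your proof is correct. Note that the paper itself gives no proof of this lemma --- it is quoted verbatim from Bros--Moschella \cite[Lemma~4.1]{BrosMoschella:1996} --- so there is no internal argument to compare against; your Cartan-calculus derivation is a complete, self-contained verification. All the delicate points check out: the cone minus its apex is a $d$-dimensional manifold, so $\dd\mu$ and $\dd f\wedge\mu$ vanish as $(d+1)$-forms; the dilation flow computation gives $\mathcal{L}_\xi\mu=(d-1)\mu$ and hence $\dd\omega=(d-1)\mu$ via Cartan's formula; the sign in $i_\xi(\dd f\wedge\mu)=(\xi f)\mu-\dd f\wedge\omega$ is right; and Euler's relation $\xi f=-(d-1)f$ produces exactly the cancellation, confirming that the homogeneity degree $-(d-1)$ is the unique one for which the lemma holds. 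The only point left implicit is that $\xi$ is tangent to $C_d^+$ (so that $i_\xi$ and $\mathcal{L}_\xi$ are legitimate intrinsic operations), which is immediate from the dilation invariance of the cone and worth one sentence if you write this up.
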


Using this lemma, we can now describe the inner products. \\

\medskip
\noindent
{\bf a) Principal series:} Here the degree of homogeneity of the wave functions is $-\alpha- i\nu$ with $\nu$ {\rm real}. Consequently, the product $f = \overline \psi_1 \psi_2$ of two smooth 
wave functions $\psi_1,\psi_2\in\K_\nu$ is homogeneous of degree $-(d-1)$, so the lemma applies. We choose an ``orbital base'' \begin{wrapfigure}{r}{0.3\textwidth}
  \begin{center}
    \includegraphics[width=0.23\textwidth]{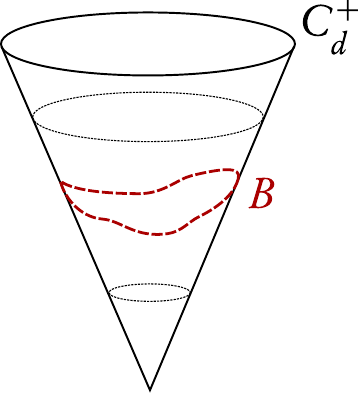}\\
  \end{center}
\end{wrapfigure}
$B \cong {\mathbb S}^{d-1}$ of $C_d^+$ (i.e. a closed manifold intersecting each generatrix of $C_d^+$ once, see figure on the right, and Appendix~A for explicit formulas), and define a positive definite inner product by
\ben\label{inn}
\big(\psi_1, \psi_2\big)_\nu :=  \int_B \omega(P) \, \overline{\psi_1(P)}\, \psi_2(P) \ . 
\een
By Lemma~\ref{lemma:closedform}, this definition is independent of the particular choice of $B$, in the sense that, if $B'$ is homologous to $B$, then the inner product defined with $B'$ instead of $B$ coincides with \eqref{inn}. This fact implies at once that 
the operators $V_\nu(\Lambda)$ are unitary with respect to this inner product. Since the type of argument is used time and again, we explain the details:
\begin{align}
	\label{unitary}
	\big(V_\nu(\Lambda) \psi_1, V_\nu(\Lambda) \psi_2\big)_\nu
	&=
	\int_B \omega(P) \, \overline{\psi_1(\Lambda^{-1} P)} \psi_2(\Lambda^{-1} P)
	\\
	&=\int_B (\Lambda^{-1})^* \omega( P) \, \overline{\psi_1(\Lambda^{-1} P)} \psi_2(\Lambda^{-1} P)
	\nonumber
	\\
	&=
	\int_{\Lambda^{-1} \cdot B} \omega(P) \, \overline{\psi_1(P)} \psi_2(P)
	\nonumber
	\\
	&=
	\int_{B} \omega(P) \, \overline{\psi_1(P)} \psi_2(P)
	\nonumber
	\\
	&=
	\big(\psi_1, \psi_2\big)_\nu
	\,.
	\nonumber
\end{align}

The first equality sign is the definition. In the second equality, it is used that $(\Lambda^{-1})^* \omega = \omega$, and in the third 
equality, a change of variables $P \to \Lambda^{-1} \cdot P$ was made. In the last step we used Stokes theorem, noting that the integrand is a closed form, and that $B$ and $\Lambda^{-1} \cdot B$ are homologous. 

The Hilbert space of the principal series representation is defined as the completion in the inner product~\eqref{inn} of the space $\K_\nu$, which we denote by the same symbol. 
\\\\
Pointwise complex conjugation does not leave $\K_\nu$ invariant (unless $\nu=0$) because the degree of homogeneity is complex, and conjugation changes $\nu$ to $-\nu$. Since $\om$ is real, this implies that $\psi\mapsto\overline{\psi}$ is an antiunitary map $\K_\nu\to\K_{-\nu}$, intertwining $V_\nu$ and $V_{-\nu}$.

To compare $V_\nu$ and $V_{-\nu}$, it is useful to introduce the integral operator
\ben\label{Idef1}
(I_\nu \psi)(P)
:=
(2\pi)^{-\alpha} \,
\frac{\Gamma(\alpha - i\nu)}{\Gamma(i\nu)} \int_B \omega(P') \, (P \cdot P')^{-\alpha + i\nu} \, \psi(P')\,,\qquad
\psi\in\K_\nu
\,.
\een
If $\alpha - i\nu,i\nu \notin -\Nl_0$, the poles of the Gamma function are avoided. The integrand has a singularity at $P=P'$ which is integrable if $\re(i\nu)>0$ (see Appendix~A for explicit formulas arising from particular choices of $B$). Thus $I_\nu$ is well-defined as it stands in particular for $i\nu\in(0,\alpha)$, corresponding to the case of a complementary series representation, to be discussed below. For the principal series representations, $\nu$ is real, and we may define \eqref{Idef1} by replacing $\nu$ with $\nu-i\eps$ and then taking the limit $\eps\searrow0$ (as a distributional boundary value, see for example \cite[Chap.~3]{VilenkinKlimyk:1991}). This adds a delta function term and yields a well-defined operator $I_\nu$ for $\nu\in\Rl\backslash\{0\}$. Finally, for $\nu=0$, one has to take into account the Gamma factors in \eqref{Idef1} when performing the limit $\nu\to0$. In this case, one obtains $I_0=1$ (this follows from the delta function relation \eqref{Delta} in Appendix~B).

After these remarks concerning the definition of $I_\nu$, note that for $\psi\in\K_\nu$, the value of the integral \eqref{Idef1} does not depend on our choice of orbital base $B$, because the integrand clearly has homogeneity $-(d-1)$ in $P'$, and is thus a closed form by Lemma~\ref{lemma:closedform}.

\begin{lemma}\label{lemma:TPC-principal}
	\begin{enumerate}
		\item In the principal series case ($\nu \in \RR$), $I_\nu: \K_\nu \to \K_{-\nu}$ is a unitary operator intertwining $V_\nu$ and $V_{-\nu}$, i.e.
		\ben\label{intw0}
			V_{-\nu}(\Lambda) I_\nu = I_\nu V_\nu(\Lambda)\,,\qquad\La\in SO^\uparrow(d,1)\,.
		\een
		Furthermore, it holds that $I_{-\nu}^{}=I_\nu^* = I_{\nu}^{-1}$ and $I_0 = 1$. 
		\item Each principal series representation $V_\nu$ is selfconjugate: 	$(\Gamma_\nu\psi)(P):=\overline{(I_\nu \psi)(P)}$ is a conjugation on $\K_\nu$ commuting with $V_\nu$.
	\end{enumerate}
\end{lemma}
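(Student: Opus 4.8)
The whole lemma reduces to two algebraic identities for the integral operator $I_\nu$ of \eqref{Idef1}, namely $I_\nu^*=I_{-\nu}$ and $I_{-\nu}I_\nu=1$, together with the intertwining relation \eqref{intw0}. \emph{Intertwining.} I would verify \eqref{intw0} by the ``moving the orbital base'' argument already used in \eqref{unitary}. Expanding $(I_\nu V_\nu(\Lambda)\psi)(P)$ from \eqref{Idef1} produces the integrand $\omega(P')\,(P\cdot P')^{-\alpha+i\nu}\psi(\Lambda^{-1}P')$, which is homogeneous of degree $-(d-1)$ in $P'$; hence by Lemma~\ref{lemma:closedform} the change of variables $P'\mapsto\Lambda P'$ is legitimate and, exactly as in \eqref{unitary}, it replaces $\psi(\Lambda^{-1}P')$ by $\psi(P')$ and $(P\cdot P')$ by $(P\cdot\Lambda P')$. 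Using Lorentz invariance of the Minkowski product, $P\cdot\Lambda P'=\Lambda^{-1}P\cdot P'$, the result is precisely $(I_\nu\psi)(\Lambda^{-1}P)=(V_{-\nu}(\Lambda)I_\nu\psi)(P)$, which is \eqref{intw0}.

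\emph{Adjoint, inverse and unitarity.} For $\phi\in\K_{-\nu}$ and $\psi\in\K_\nu$ I would write $(\phi,I_\nu\psi)_{-\nu}$ as the double integral over $B\times B$ coming from \eqref{inn} and \eqref{Idef1}. Since $\nu$ and $\alpha$ are real, $\omega$ is real and $\overline{(P\cdot P')^{-\alpha+i\nu}}=(P\cdot P')^{-\alpha-i\nu}$ on $P\cdot P'>0$, while $\overline{\Gamma(\alpha-i\nu)/\Gamma(i\nu)}=\Gamma(\alpha+i\nu)/\Gamma(-i\nu)$; the symmetry $P\cdot P'=P'\cdot P$ then identifies this double integral with $(I_{-\nu}\phi,\psi)_\nu$, giving $I_\nu^*=I_{-\nu}$. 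For the inverse I would compute the composition $(I_{-\nu}I_\nu\psi)(P)$, whose inner integral is the kernel $\int_B\omega(P')\,(P\cdot P')^{-\alpha-i\nu}(P'\cdot P'')^{-\alpha+i\nu}$; the product of Gamma prefactors $(2\pi)^{-2\alpha}\Gamma(\alpha+i\nu)\Gamma(\alpha-i\nu)/(\Gamma(-i\nu)\Gamma(i\nu))$ is exactly the normalisation for which the delta-function relation \eqref{Delta} of Appendix~B collapses this kernel and returns $\psi(P)$, i.e.\ $I_{-\nu}I_\nu=1$ (and, by the same relation in the limit, $I_0=1$, as already noted below \eqref{Idef1}). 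Combining $I_\nu^*=I_{-\nu}$ with $I_\nu I_{-\nu}=I_{-\nu}I_\nu=1$ gives $I_\nu^*=I_\nu^{-1}$, so $I_\nu$ is unitary.

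\emph{Self-conjugacy.} I would write $\Gamma_\nu=J\circ I_\nu$, where $(J\psi)(P):=\overline{\psi(P)}$ is pointwise complex conjugation, which (as observed just above \eqref{Idef1}) is an antiunitary intertwiner $\K_{\pm\nu}\to\K_{\mp\nu}$ between $V_{\pm\nu}$ and $V_{\mp\nu}$. Then $\Gamma_\nu$ maps $\K_\nu$ to $\K_\nu$, and as the composite of the unitary $I_\nu$ with the antiunitary $J$ it is antiunitary. The same reality properties as above yield the commutation rule $JI_\nu=I_{-\nu}J$, whence $\Gamma_\nu^2=(I_{-\nu}J)(J I_\nu)=I_{-\nu}J^2 I_\nu=I_{-\nu}I_\nu=1$ by the previous step, so $\Gamma_\nu$ is a conjugation. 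Finally $\Gamma_\nu V_\nu(\Lambda)=JI_\nu V_\nu(\Lambda)=J V_{-\nu}(\Lambda)I_\nu=V_\nu(\Lambda)JI_\nu=V_\nu(\Lambda)\Gamma_\nu$, using \eqref{intw0} and $J V_{-\nu}(\Lambda)=V_\nu(\Lambda)J$; hence $\Gamma_\nu$ commutes with $V_\nu$.

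The only genuinely analytic step, and the one I expect to be the main obstacle, is the composition identity $I_{-\nu}I_\nu=1$. It rests entirely on the delta-function relation \eqref{Delta}, whose proof for real $\nu$ requires the distributional $\nu\mapsto\nu-i\eps$ regularisation of \eqref{Idef1} and a careful treatment of the $P'=P''$ singularity of the kernel. Everything else is a formal manipulation of homogeneous integrals, justified throughout by Lemma~\ref{lemma:closedform}.
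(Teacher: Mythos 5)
Your proof follows essentially the same route as the paper's: the intertwining property via the invariance-of-$\omega$/Stokes argument of \eqref{unitary}, the adjoint relation $I_\nu^*=I_{-\nu}$ by a direct computation using reality of $\omega$ and $\overline{\Gamma(\alpha-i\nu)/\Gamma(i\nu)}=\Gamma(\alpha+i\nu)/\Gamma(-i\nu)$, the inverse by composing the two kernels so that the Gamma prefactors cancel against a delta function, and the involution property of $\Gamma_\nu$ from $CI_\nu C=I_{-\nu}=I_\nu^{-1}$. The one slip is a citation: the identity that collapses $\int_B\omega(P')\,(P\cdot P')^{-\alpha-i\nu}(P'\cdot P'')^{-\alpha+i\nu}$ to a multiple of $\delta(P,P'')$ is the composition relation \eqref{eq:Composition-Relation} (applied, as the paper does, in the spherical parameterization of $B$), not the delta-function relation \eqref{Delta}; the latter concerns the analytic continuation of a \emph{single} kernel and does not by itself evaluate this convolution, although \eqref{eq:Composition-Relation} is in turn derived from the triality relation \eqref{eq:triality} together with \eqref{Delta}. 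With that reference corrected, your prefactor bookkeeping is exactly right and the argument coincides with the paper's.
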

\begin{proof}
	$a)$ By using the invariance $(P\cdot P')=(\La P\cdot \La P')$ and the same type of argument as given in eqs.~\eqref{unitary}, $I_\nu$ is seen to have the intertwining property \eqref{intw0}. The equality $I_\nu^*=I_{-\nu}$ follows by a routine calculation. To show $I_{-\nu}=I_{\nu}^{-1}$ (for $\nu\neq0$), it is useful to choose a convenient parameterization of $B$. Using the spherical parameterization (see Appendix A), the identity $I_{-\nu}=I_{\nu}^{-1}$ follows by application of the composition relation \eqref{eq:Composition-Relation}. The special case $I_0=1$ has been explained above already.
	
	$b)$ It is clear that $\Gamma_\nu$ is an antiunitary operator on $\K_\nu$. As complex conjugation $C$ satisfies $CI_\nu C=I_{-\nu}=I_\nu^{-1}$, one also sees that $\Gamma_\nu$ is an involution, i.e. $\Gamma_\nu^2=1$.
\end{proof}

\medskip
\noindent
{\bf b) Complementary series:} Here the homogeneity of the wave functions is $-\alpha - i\nu$, with $i \nu \in (0, \alpha)$, so $\nu$ is {\em imaginary}. In this case, we  cannot apply the same procedure 
as in the case of the principal series to form a scalar product, because the product $f = \overline{\psi_1}\psi_2$ of two smooth 
wave functions $\psi_1,\psi_2\in\K_\nu$ is not homogeneous of degree $-(d-1)$, and consequently, Lemma~\ref{lemma:closedform} does not apply. To get around this problem, we can use the operator $I_\nu:\K_\nu\to\K_{-\nu}$ \eqref{Idef1}, which is well-defined also for $i\nu\in(0,\alpha)$.

As in the case of the principal series, also here the integral operator $I_\nu$ \eqref{Idef1} has the intertwining property $V_{-\nu}(\Lambda) I_\nu = I_\nu V_\nu(\Lambda)$ for all $\Lambda \in SO^\uparrow(d,1)$. As $\nu$ is imaginary, the intertwining operator $I_\nu$ ensures that the function 
 $f = \overline{\psi_1}\, I_\nu \psi_2$ on $C^+_d$ formed 
from two wave functions $\psi_1, \psi_2 \in \K_\nu$ is homogeneous of degree $-(d-1)$, and Lemma~\ref{lemma:closedform}
shows that 
\ben\label{Idef}
	(\psi_1, \psi_2)_\nu
	:=
	\int_{B} \omega(P)\, \overline{\psi_1(P)}  (I_\nu \psi_1)(P)
\een
is again independent of the choice of orbital base $B$. The same argument as that given in eq.~\eqref{unitary} then 
also yields that the inner product just defined is invariant under the representation $V_\nu$. For the complementary series, we therefore take \eqref{Idef} as our inner product. We have:

\bigskip
\begin{lemma}
	For $i\nu \in (0, \alpha)$, the inner product \eqref{Idef} is positive definite.
\end{lemma}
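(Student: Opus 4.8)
The plan is to diagonalise the intertwining operator $I_\nu$ explicitly, reducing the positivity of \eqref{Idef} to the positivity of a sequence of scalar eigenvalues, each of which turns out to be a ratio of Gamma functions whose arguments stay positive precisely when $i\nu\in(0,\alpha)$.

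First I would fix a convenient orbital base, taking $B=\mathbb{S}^{d-1}$ to be the section $P_0=1$ of $C_d^+$. A homogeneous wave function $\psi\in\K_\nu$ is then determined by its restriction to $B$, which I regard as a function on the unit sphere; under this identification $\omega$ restricts to the rotation-invariant surface measure $d\sigma$, and for $P=(1,\vec n)$, $P'=(1,\vec n')$ one has $P\cdot P'=1-\vec n\cdot\vec n'$. By Lemma~\ref{lemma:closedform} the value of \eqref{Idef} is independent of this choice, so it suffices to work here. The inner product becomes the double integral
\ben
(\psi,\psi)_\nu = c_\nu\int_{\mathbb{S}^{d-1}}\int_{\mathbb{S}^{d-1}} \overline{\psi(\vec n)}\,(1-\vec n\cdot\vec n')^{-\alpha+i\nu}\,\psi(\vec n')\,d\sigma(\vec n)\,d\sigma(\vec n')\,,
\een
with prefactor $c_\nu=(2\pi)^{-\alpha}\Gamma(\alpha-i\nu)/\Gamma(i\nu)$. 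For $i\nu\in(0,\alpha)$ the singularity of the kernel at $\vec n=\vec n'$ behaves like $|\vec n-\vec n'|^{-2(\alpha-i\nu)}$ and is integrable against $d\sigma$, so the expression is finite; since $c_\nu$ is then a positive real and the kernel is real and symmetric in $\vec n,\vec n'$, the form is manifestly Hermitian, and only positivity remains to be shown.

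Next I would exploit that the kernel is zonal, depending on $\vec n,\vec n'$ only through $\vec n\cdot\vec n'$, so that it defines an $SO(d)$-invariant convolution operator on $\mathbb{S}^{d-1}$. By the Funk--Hecke theorem this operator is diagonal on the spaces of hyperspherical harmonics: writing $\psi=\sum_{\ell\geq0}\psi_\ell$ for the decomposition into harmonics of degree $\ell$, which are mutually orthogonal in $L^2(\mathbb{S}^{d-1},d\sigma)$, I obtain $(\psi,\psi)_\nu=\sum_{\ell\geq0}\lambda_\ell(\nu)\,\|\psi_\ell\|^2_{L^2}$, and positive definiteness reduces to $\lambda_\ell(\nu)>0$ for every $\ell\geq0$. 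The Funk--Hecke formula expresses $\lambda_\ell(\nu)$ as $c_\nu$ times $|\mathbb{S}^{d-2}|\int_{-1}^1(1-t)^{-\alpha+i\nu}\,\big[C_\ell^{(\alpha-\frac12)}(t)/C_\ell^{(\alpha-\frac12)}(1)\big]\,(1-t^2)^{(d-3)/2}\,dt$, a Gegenbauer--beta integral evaluable in closed form. I expect that, after the Gamma prefactor $c_\nu$ cancels, it collapses to a clean ratio of the schematic form
\ben
\lambda_\ell(\nu)=\frac{\Gamma(\alpha+i\nu)}{\Gamma(\alpha-i\nu)}\cdot\frac{\Gamma(\ell+\alpha-i\nu)}{\Gamma(\ell+\alpha+i\nu)}\,,
\een
so that $\lambda_0(\nu)=1$ and $\lambda_\ell(\nu)\lambda_\ell(-\nu)=1$, the latter consistent with $I_{-\nu}=I_\nu^{-1}$ from Lemma~\ref{lemma:TPC-principal}. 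The decisive point is then purely about signs: for $i\nu\in(0,\alpha)$ all four arguments $\alpha\pm i\nu$ and $\ell+\alpha\pm i\nu$ are strictly positive reals (this is exactly where the bound $i\nu<\alpha$ enters, guaranteeing $\alpha-i\nu>0$), hence every Gamma value is a positive real and $\lambda_\ell(\nu)>0$ for all $\ell$, completing the proof.

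The main obstacle is the explicit evaluation of this Gegenbauer--beta integral and the careful bookkeeping of Gamma factors needed to verify that $c_\nu$ cancels cleanly and that $i\nu\in(0,\alpha)$ is precisely the range making every eigenvalue positive; the convergence at the coincidence point must be tracked in tandem, since it degenerates exactly at the endpoints of this interval. As an independent conceptual check, one may instead pass from $\mathbb{S}^{d-1}$ to $\Rl^{d-1}$ by stereographic projection: the kernel becomes the Riesz potential $|\bx-\bx'|^{-(d-1)+2i\nu}$ (times positive conformal weights), whose Fourier transform is a positive multiple of $|\bk|^{-2i\nu}$ in the range $0<i\nu<\alpha$, so that $(\psi,\psi)_\nu$ equals a positive multiple of $\int|\widehat\psi(\bk)|^2\,|\bk|^{-2i\nu}\,d\bk>0$. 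This reproduces positivity in the same interval and explains it through the classical positive-definiteness of Riesz kernels.
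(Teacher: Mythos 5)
Your proof is correct in outline, but it inverts the paper's presentation: your ``main'' argument is the paper's acknowledged alternative, and your closing ``independent conceptual check'' is, almost verbatim, the paper's actual proof. The paper works in the flat model $B\cong\RR^{d-1}\cup\infty$, where the form becomes $c\,\tfrac{\Gamma(\alpha-i\nu)}{\Gamma(i\nu)}\int d^{d-1}\bx_1\, d^{d-1}\bx_2\,|\bx_1-\bx_2|^{-(d-1)+2i\nu}\,\overline{\psi(\bx_1)}\,\psi(\bx_2)$; a single application of Plancherel together with the classical formula for the Fourier transform of $|\bx|^s$ (the Riesz kernel, Ex.~VII~7.13 of \cite{Schwartz:1966}) turns this into a positive multiple of $\int|\hat\psi(\bp)|^2\,|\bp|^{-2i\nu}\,d^{d-1}\bp\ge 0$, and positivity of the Gamma prefactor for $i\nu\in(0,\alpha)$ finishes the argument. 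Your spherical route --- Funk--Hecke diagonalization of the zonal kernel over hyperspherical harmonics and positivity of each eigenvalue $\lambda_\ell(\nu)$ --- is precisely the strategy behind Lemma~5.5 of \cite{NeebOlafsson:2014-2}, which the paper explicitly cites as what the spherical model reduces to, but declines to carry out because the flat-model argument is ``more transparent.'' The trade-off is as you would expect: your approach yields more information (an explicit diagonalization of $I_\nu$, strict positivity harmonic by harmonic, and visibility of exactly where positivity degenerates outside the interval), at the cost of evaluating the Gegenbauer--beta integral and the Gamma bookkeeping you flag as the main obstacle --- this is the one step you leave at the level of ``I expect,'' though your consistency checks $\lambda_\ell(0)=1$ and $\lambda_\ell(\nu)\lambda_\ell(-\nu)=1$, and the classical evaluations of such zonal integrals, confirm that the ratio $\Gamma(\ell+\alpha-i\nu)/\Gamma(\ell+\alpha+i\nu)$ is the right answer. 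The paper's route buys brevity: no eigenvalue computation at all, just Plancherel plus a table look-up, at the price of obtaining only manifest nonnegativity rather than a spectral decomposition.
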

\begin{proof}
	Since we are free to choose any orbital base in \eqref{Idef}, \eqref{Idef1}, we can make a convenient choice. If we choose the spherical model, $B \cong \SS^{d-1}$ (see Appendix~A for the different canonical models) then our lemma reduces to Lemma~5.5 of \cite{NeebOlafsson:2014-2}. The proof is however more transparent choosing the flat model, where $B \cong \RR^{d-1} \cup \infty$ is parametrized by $\Rl^{d-1}\ni\bx\mapsto P = (\half (|\bx|^2 + 1), \bx , \half(|\bx|^2 - 1))$. Using this parametrization, we find 
	\ben
	\begin{split}
	(\psi, \psi)_\nu
	&=c\,\frac{\Gamma(\alpha-i\nu)}{\Gamma(i\nu)}\, \int \dd^{d-1} \bx_1 \dd^{d-1} \bx_2 \ |\bx_1 - \bx_2|^{-(d-1)+2i\nu} \overline{\psi(\bx_1)} \psi(\bx_2) \\
	&= c'\,\frac{\Gamma(\alpha-i\nu)}{\Gamma(i\nu)} \int \dd^{d-1} \bp \ | \bp |^{-2i\nu} |\hat \psi(\bp)|^2 \ge 0 \ . 
	\end{split}
	\een
	Here $c,c'$ are positive numerical constants. In the second line we used the Plancherel theorem and a well-known formula for the Fourier transform of $|\bx|^s$ (see e.g. Ex. VII 7.13 of \cite{Schwartz:1966}). Taking into account standard properties of the Gamma function, we see that the prefactor of the integral is positive if $i\nu\in(0,\alpha)$.
\end{proof}

In the complementary series, the degree of homogeneity is real, and thus complex conjugation is a well-defined operation on $\K_\nu$. Moreover, complex conjugation commutes with~$I_\nu$ for imaginary $\nu$. Thus, if we define $(\Gamma_\nu\psi)(P):=\overline{\psi(P)}$, then
$\Gamma_\nu$ is an antiunitary involution on $\K_\nu$ in the case of the complementary series.

\begin{lemma}\label{lemma:TCP-complementary}
	Each complementary series representation is selfconjugate: $(\Gamma_\nu\psi)(P):=\overline{\psi(P)}$ is a conjugation commuting with $V_\nu$.{\hfill $\square$}
\end{lemma}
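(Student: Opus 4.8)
The plan is to verify the two defining properties of a conjugation---antiunitarity and involution---together with the commutation with $V_\nu$, directly from the definition $(\Gamma_\nu\psi)(P):=\overline{\psi(P)}$, and to isolate where the hypothesis $i\nu\in(0,\alpha)$ is actually needed.

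First I would check that $\Gamma_\nu$ maps $\K_\nu$ to itself. This is the point where complementary series differs crucially from the principal series: here the degree of homogeneity $-\alpha-i\nu$ is \emph{real} (since $\nu$ is purely imaginary), so pointwise complex conjugation preserves the homogeneity condition \eqref{hom} and hence $\overline{\psi}\in\K_\nu$ whenever $\psi\in\K_\nu$. By contrast, for the principal series one had to insert $I_\nu$ to correct the sign flip $\nu\mapsto-\nu$. Here no such correction is needed, and $\Gamma_\nu$ is simply pointwise conjugation.

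Next I would verify the three algebraic requirements. That $\Gamma_\nu^2=1$ is immediate, since $\overline{\overline{\psi}}=\psi$. For antiunitarity I must show $(\Gamma_\nu\psi_1,\Gamma_\nu\psi_2)_\nu=\overline{(\psi_1,\psi_2)_\nu}=(\psi_2,\psi_1)_\nu$ with respect to the inner product \eqref{Idef}. The key input is that complex conjugation commutes with $I_\nu$ for imaginary $\nu$ (stated in the text just before the lemma): because the kernel $(P\cdot P')^{-\alpha+i\nu}$ has a real exponent when $\nu$ is imaginary, and the Gamma prefactor $\Gamma(\alpha-i\nu)/\Gamma(i\nu)$ is then real, one has $CI_\nu C=I_\nu$. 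Substituting $\Gamma_\nu\psi_j=\overline{\psi_j}$ into \eqref{Idef} and using $\overline{I_\nu\overline{\psi_2}}=I_\nu\psi_2$ together with the reality of $\omega$ yields the required antiunitarity after relabelling the integrand.

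Finally, for the commutation $\Gamma_\nu V_\nu(\Lambda)=V_\nu(\Lambda)\Gamma_\nu$, I would note that $V_\nu(\Lambda)$ acts by the pullback \eqref{Udef}, which has real coefficients (it merely precomposes with $\Lambda^{-1}$), so it manifestly commutes with pointwise complex conjugation. The main obstacle, such as it is, lies entirely in confirming that $C$ commutes with $I_\nu$ and that the resulting inner product relation comes out correctly---that is, in tracking the reality of the Gamma factors and the kernel exponent under the hypothesis $i\nu\in(0,\alpha)$; everything else is formal. Since these facts were already recorded in the paragraph preceding the statement, the proof reduces to assembling them, which is presumably why the authors close the lemma with $\hfill\square$ rather than a separate proof block.
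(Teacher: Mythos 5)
Your proposal is correct and follows essentially the same route as the paper: the authors give no separate proof block precisely because the two facts you assemble---that the real degree of homogeneity makes pointwise conjugation a well-defined involution on $\K_\nu$, and that $C$ commutes with $I_\nu$ for imaginary $\nu$ (so antiunitarity with respect to \eqref{Idef} follows)---are recorded in the paragraph immediately preceding the lemma, with commutation with the real pullback action $V_\nu$ being immediate. Your explicit tracking of the reality of the kernel exponent and the Gamma prefactor simply spells out what the paper leaves implicit.
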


\medskip
\noindent
{\bf c) Discrete series:} Here the degree of homogeneity of the wave functions is $-\alpha- i\nu$ with $i\nu=\alpha+n, n\in\Nl_0$. For these values, the Gamma-factors in the definition of $I_\nu$ (see~\eqref{Idef1}), 
and hence also in the inner product of the complementary series~\eqref{Idef} become singular. Thus, one cannot, for this reason alone, define an inner product for the discrete series by analytic continuation of~\eqref{Idef}.
The way out is to pass from $\K_\nu$ to an $SO^\uparrow(d,1)$-invariant subspace of wave functions for which the scalar product~{\em can} be defined by analytic continuation. Since the kernel of $I_\nu$ is, up to 
divergent Gamma-factors, given by $(P \cdot P')^n$ for the discrete series, a natural choice for this subspace is the set of $\psi \in \K_\nu$ such that 
\ben\label{discr}
\psi(\lambda P) = \lambda^{-(d-1)-n} \psi(P) \ , \qquad \int_B \omega(P') (P \cdot P')^n \psi(P') = 0 \ , 
\een
where the first equality just repeats the homogeneity condition for the case $i\nu=\alpha+n$. The second condition is  independent of the choice of $B$, and hence indeed $SO^\uparrow(d,1)$-invariant. By abuse of 
notation, we denote the set of $\psi$ satisfying~\eqref{discr} again by $\K_\nu$. For such $\psi$, analytic continuation of~\eqref{Idef},\eqref{Idef1} to $i\nu = \alpha+n$ is now possible. Since the residue of $\Gamma$ at $-n$ is $(-1)^n/n!$, we find
\begin{align}\label{Kdef}
(\psi_1, \psi_2)_\nu
	&:=
	\int_{B} \omega(P)\, \overline{\psi_1(P)}  (I_\nu \psi_1)(P)
 \\
(I_\nu \psi)(P)
&:=
(2\pi)^{-\alpha} \,
\frac{(-1)^{n+1}}{n!\,\Gamma(\alpha+n)} \int_B \omega(P') \, (P \cdot P')^{n} \log (P \cdot P') \, \psi(P')\ . 
\end{align}
Using \eqref{discr}, one again verifies that the definition of $I_\nu$ remains independent of $B$, and therefore, by the same argument as already 
invoked several times, that the inner product~\eqref{Kdef} is invariant under $V_\nu(\Lambda)$. 
Since analytic continuation does not usually preserves positivity, it is non-trivial, however, that this inner product is actually positive definite~\cite{EpsteinMoschella:2014}. 

\begin{lemma}
	For $i\nu = \alpha + n, n=0,1,2, \dots$, the inner product \eqref{Kdef} is positive definite.
\end{lemma}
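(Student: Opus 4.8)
The plan is to imitate the complementary-series lemma: pass to the flat orbital base, reduce the inner product to a single Fourier integral in which positivity is manifest, and use the moment constraint in~\eqref{discr} twice over — once to discard the divergent polynomial pieces that accompany the singular Gamma-factors, and once to guarantee convergence of the resulting integral. First I would choose $B\cong\RR^{d-1}$ parametrized by $\bx\mapsto P=(\tfrac12(|\bx|^2+1),\bx,\tfrac12(|\bx|^2-1))$, exactly as in the complementary case. A one-line computation gives $P\cdot P'=\tfrac12|\bx-\bx'|^2$ and $\omega(P')=c\,\dd^{d-1}\bx'$ with $c>0$, so the kernel $(P\cdot P')^{n}\log(P\cdot P')$ of~\eqref{Kdef} becomes a positive multiple of $|\bx-\bx'|^{2n}\bigl(\log\tfrac12+2\log|\bx-\bx'|\bigr)$. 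The additive constant $\log\tfrac12$ multiplies $\int\dd^{d-1}\bx'\,|\bx-\bx'|^{2n}\psi(\bx')$, which vanishes identically by the constraint in~\eqref{discr}. Hence, up to a nonzero explicit constant from the prefactor $(2\pi)^{-\alpha}(-1)^{n+1}/(n!\,\Gamma(\alpha+n))$, the form $(\psi,\psi)_\nu$ reduces to the translation-invariant quadratic form with kernel $|\bx_1-\bx_2|^{2n}\log|\bx_1-\bx_2|$.

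Next I would diagonalize this form by Fourier transform. The key observation is that $|\by|^{2n}\log|\by|=\tfrac{d}{ds}|\by|^{s}\big|_{s=2n}$, and the Gelfand--Shilov formula gives $\mathcal F[\,|\by|^{s}\,](\bp)=C(s)\,|\bp|^{-s-(d-1)}$ with $C(s)=2^{s+d-1}\pi^{(d-1)/2}\,\Gamma(\tfrac{s+d-1}{2})/\Gamma(-\tfrac{s}{2})$. Since $\Gamma(-s/2)$ has a pole at $s=2n$, one has $C(2n)=0$, consistent with $|\by|^{2n}$ being a polynomial; differentiating $C(s)|\bp|^{-s-(d-1)}$ in $s$ and using $C(2n)=0$ leaves precisely $C'(2n)\,|\bp|^{-(d-1)-2n}$ together with distributions concentrated at $\bp=0$. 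These concentrated terms (which arise because $|\bp|^{-s-(d-1)}$ itself has a pole at $s=2n$ with residue a multiple of $\Delta^{n}\delta$) pair only with the value and low-order derivatives of $|\hat\psi|^2$ at the origin, and all of these vanish because~\eqref{discr} forces $\hat\psi(\bp)=O(|\bp|^{n+1})$; the same order of vanishing makes the radial integral $\int\dd^{d-1}\bp\,|\bp|^{-(d-1)-2n}|\hat\psi(\bp)|^2$ convergent. Collecting constants I expect
\ben
(\psi,\psi)_\nu=c_n\int\dd^{d-1}\bp\;|\bp|^{-(d-1)-2n}\,|\hat\psi(\bp)|^2\,,\qquad c_n>0,
\een
which is manifestly nonnegative, and strictly positive unless $\hat\psi\equiv0$.

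The main obstacle is the distributional bookkeeping in the last step together with the sign computation. One must identify exactly which terms in $\tfrac{d}{ds}\bigl(C(s)|\bp|^{-s-(d-1)}\bigr)\big|_{s=2n}$ are supported at the origin, verify that each is annihilated by the order-$(n+1)$ vanishing of $\hat\psi$ guaranteed by~\eqref{discr}, and then combine the leading coefficient of $1/\Gamma(-s/2)$ near its simple zero at $s=2n$, namely $(-1)^{n+1}n!/2$, with the prefactor $(-1)^{n+1}/(n!\,\Gamma(\alpha+n))$ to confirm $c_n>0$. A useful cross-check on the sign is that the two factors of $(-1)^{n+1}$ cancel and the ratio $\Gamma(n+\alpha)/\Gamma(\alpha+n)=1$ together with $n!/n!=1$ leaves the sign-critical product equal to $2^{2n+d-2}\pi^{(d-1)/2}$, a positive number, so that $c_n$ is this quantity times the manifestly positive factors $c^2$, $(2\pi)^{-\alpha}$, $2^{1-n}$, and the Plancherel normalization. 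This matches the expectation from~\cite{EpsteinMoschella:2014} that positivity survives the analytic continuation precisely on the constrained subspace~\eqref{discr}.
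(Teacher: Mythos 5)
Your overall strategy is correct, and it is a genuinely different route from the argument the paper gives for $n>0$: there one works in the \emph{spherical} model, expands $(-1)^{n+1}(1-x)^n\log(1-x)$ (with $x=\hat p\cdot\hat p'$) into a power series with positive coefficients, and reads off the inner product as a sum $\sum_{m>n}c_m\|a_m\|^2$ of squared tensor moments. Your flat-model/Fourier route is precisely the one the paper itself invokes, without details, for the case $n=0$; you extend it uniformly to all $n\ge 0$. It buys two things: a single argument for every $n$, and a closed spectral formula $c_n\int \dd^{d-1}\bp\,|\bp|^{-(d-1)-2n}|\hat\psi(\bp)|^2$ with $c_n>0$, which gives \emph{strict} definiteness at once ($(\psi,\psi)_\nu=0$ forces $\hat\psi\equiv 0$, hence $\psi=0$), whereas the paper's sum of squares directly yields only semi-definiteness. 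Your sign bookkeeping also checks out: near $s=2n$ one has $1/\Gamma(-s/2)=(-1)^{n+1}(n!/2)(s-2n)+O((s-2n)^2)$, so $C'(2n)=(-1)^{n+1}\,n!\,2^{2n+d-2}\pi^{(d-1)/2}\,\Gamma(\alpha+n)$, and the factors $(-1)^{n+1}$, $n!$, $\Gamma(\alpha+n)$ cancel against the prefactor of \eqref{Kdef}, leaving a positive constant, as you computed.

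The one step you assert without proof --- and it is the real mathematical content here, not bookkeeping --- is that \eqref{discr} forces $\hat\psi(\bp)=O(|\bp|^{n+1})$, i.e.\ that all moments $\int \dd^{d-1}\bx\,\bx^\beta\psi(\bx)$ with $|\beta|\le n$ vanish. The constraint literally says only that $\psi$ is orthogonal to every translate $\bx'\mapsto|\bx-\bx'|^{2n}$, equivalently (by Taylor expansion, the span being finite-dimensional) to every derivative $\partial^\alpha|\cdot|^{2n}$. What must be added is that these derivatives span \emph{all} polynomials of degree $\le n$. This is true but not obvious; one self-contained proof: if $q\neq 0$ were homogeneous of degree $m\le n$ with $q(\partial)\,|\bx|^{2n}=0$, then also $\bigl(|q|^2|\bx|^{2(n-m)}\bigr)(\partial)\,|\bx|^{2n}=0$, which by symmetry of the Fischer pairing equals $\Delta^n\bigl(|q(\bx)|^2|\bx|^{2(n-m)}\bigr)(0)$; but harmonic decomposition shows that for any nonnegative homogeneous $g\not\equiv 0$ of degree $2n$ one has $\Delta^n g(0)=C\int_{\SS^{d-2}}g>0$, a contradiction, and dually the order-$(2n-m)$ derivatives of $|\bx|^{2n}$ span everything in degree $m$. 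I would point out that the paper's own proof quietly relies on the same lemma: the series identity it quotes is valid only modulo a polynomial of degree $\le n$ in $x$ (for $n=1$ the exact expansion of $(1-x)\log(1-x)$ contains an extra $-x$, which would contribute $-\|a_1\|^2$ to the quadratic form), and those polynomial terms are removed exactly by \eqref{discr} via the spanning fact above. So in either approach this lemma is what needs to be supplied; once it is, your remaining distributional bookkeeping (finite parts, origin-supported terms annihilated by the vanishing order of $\hat\psi$, convergence of the radial integral) is standard and your proof closes.
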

\begin{proof}
	Since we are free to choose any orbital base in \eqref{Kdef}, we can make a convenient choice. We choose the spherical model, 
	$B \cong \SS^{d-1}$ (see Appendix~A for the different canonical models), where $P \cdot P'=
	1-\hat p \cdot \hat p'$, with $\hat p, \hat p' \in \SS^{d-1}$. We have the 
	series, for $|x| < 1$
	\ben
	(-1)^{n+1} (1-x)^n \log (1-x) = \sum_{m>n} \frac{n!(m-n-1)!}{m!} x^m \ . 
	\een
	For $n>0$, this series is absolutely convergent, including the limit as $|x| \to 1$, and all its coefficients are evidently positive. We apply this 
	identity to $x = \hat p \cdot \hat p'$ in the inner product~\eqref{Kdef}. Exchanging integration and summation it follows that 
	\ben
	(\psi, \psi)_\nu =  \frac{(2\pi)^{-\alpha}}{\Gamma(\alpha+n)} \sum_{m>n} \frac{(m-n-1)!}{m!} \| a_m \|^2 \ge 0 \ , 
	\een
	where $a_m$ is the rank $m$ tensor on $\RR^d$ given by $a_m = \int_{\SS^{d-1}} \hat p^{\otimes m} \psi(\hat p) \ \dd^{d-1} \hat p$ and $\|a_m\|$ denotes the 
	norm of such a tensor inherited from the Euclidean metric on $\RR^d$. The integral form of the triangle inequality gives 
	$\|a_m\| \le  \int_{\SS^{d-1}} |\psi(\hat p)| \ \dd^{d-1} \hat p$, so the series is absolutely convergent, meaning that exchanging summation and integration was permissible. 
	The case $n=0$ can be treated e.g. using the flat model and applying a Fourier transform--we omit the details.
\end{proof}

As in the complementary series, the degree of homogeneity is real, and thus complex conjugation is a well-defined operation on $\K_\nu$. 
Moreover, complex conjugation commutes with $I_\nu$ for imaginary $\nu$. Thus, if we define $(\Gamma_\nu\psi)(P):=\overline{\psi(P)}$, then
$\Gamma_\nu$ is an antiunitary involution on $\K_\nu$ in the case of the discrete series.

\medskip
\noindent
This finishes our outline of the representations. In conclusion, we mention that the conjugation $\Gamma_\nu$,
\ben\label{Thdef}
	(\Gamma_\nu\psi)(P) = \begin{cases}
	\overline{(I_\nu \psi)(P)} & \nu\in\Rl \text{ (principal series)} \\
	\overline{\psi(P)} & i\nu\in(0,\alpha) \ {\rm or} \ i\nu \in \alpha + \mathbb{N}_0 \text{ (compl. or discrete series)}  
	\end{cases}\,,
\een
can be interpreted as a TCP operator, at least on the superficial level that it is an antiunitary involution and commutes with the representation, as the reflection $X\mapsto-X$ on $\dS_d$ commutes with $SO^\uparrow(d,1)$. These properties are clearly still satisfied if we multiply $\Gamma_\nu$ by a phase factor (as we shall do in later sections).
With the help of $\Gamma_\nu$, one can therefore extend $V_\nu$ to include the reflection $X\mapsto-X$.\footnote{We mention as an aside that by considering also partial reflections which only invert the sign of one of the components of
$X$, we could in fact extend $V_\nu$ to a (pseudo-unitary) representation of the full Lorentz group $O(d,1)$.}

\subsection{Representations of $\boldsymbol{SO^\uparrow(d,1)}$ and Klein-Gordon fields on $\boldsymbol{\dS_d}$}\label{subsection:Klein-Gordon}

In this short section we recall the relation between the principal and complementary series representations $V_\nu$ and classical and quantum Klein-Gordon fields on $d$-dimensional de Sitter spacetime\footnote{Such a relation 
exists also for the discrete series, but is more complicated, see~\cite{EpsteinMoschella:2014} for details.} ${\rm dS}_d$. For our purposes, this space is best defined as the hyperboloid 
\ben
	\dS_d = \{ X \in \RR^{d+1} \mid X \cdot X = -1 \}
\een
embedded in an ambient $(d+1)$-dimensional Minkowski spacetime $\RR^{d+1}$. The metric of $\dS_d$ is simply (minus) that induced from the ambient Minkowski space. It is manifest from this definition that the group of isometries of $\dS_d$ is $O(d,1)$, where group elements act by $X \mapsto \Lambda X$. 

On test functions $F\in C_0^\infty(\dS_d)$, orthochronous Lorentz transformations act according to $F\mapsto F_\La:=F\circ\La^{-1}$, and we choose an antilinear action of the full spacetime reflection $X\mapsto -X$ by $F\mapsto \overline{F_-}:X\mapsto\overline{F(-X)}$. The precise relationship between these transformations on $\dS_d$ and the ``momentum space'' representations $V_\nu$ from the previous section is, as in flat space, via a special choice of ``plane wave mode functions''. These mode functions are defined as follows. Let $P$ be any vector in $C_d^+$, choose any time-like vector such as $e=(1,0,0, \dots, 0)$ in the ambient $\RR^{d+1}$, and define 
\begin{align}\label{eq:dS-wave}
	u_P^\pm(X) 
	:=
	(X \cdot P)^{-\alpha -i\nu}_\pm = 
	\lim_{\epsilon \to 0+} [(X\pm i\epsilon e) \cdot P]^{-\alpha-i\nu} \ . 	
\end{align}
Adding a small imaginary part removes the phase ambiguity for $(X\cdot P)^{-\alpha-i\nu}$ when $X \cdot P$ becomes negative. The limit is understood in the sense of a (distributional) boundary value. The difference between the ``$+$'' (``positive frequency'') and ``$-$'' (``negative frequency'') mode arising when we cross to the other Poincar\'e patch is basically a phase.

The modes $(X \cdot P)^{-\alpha -i\nu}_\pm$ are (distributional) solutions to the Klein-Gordon equation in $X$ with mass $m^2=\alpha^2 + \nu^2$ on the {\em entire} de Sitter manifold, 
\ben
(\square + m^2) u^\pm_P = 0 \ . 
\een

Conversely, if $\psi \in \K_\nu$ is smooth, then the corresponding ``wave packet'' 
\ben
u_\psi^\pm(X) = \int_B \omega(P) \ \psi(P)(X \cdot P)^{-\alpha -i\nu}_\pm
\een
is a globally defined, smooth solution to the KG-equation.

To make contact with the representations $V_\nu$, we define for $F\in C_0^\infty(\dS_d)$
\begin{align}\label{eq:F-dS-pm}
	F^\pm_\nu(P)
	:=
\int_{\dS_d}d\mu(X)\,F(X)\,u_P^\pm(X)\,,
\end{align}
where $d\mu$ is the $O(d,1)$-invariant integration element on $\dS_d$ (Fourier-Helgason transformation). With these definitions, we have the following lemma.

\bigskip

\begin{lemma}\label{lemma:fourier-helgason}
Let $F\in C_0^\infty(\dS_d)$, and $\nu\in\Rl\cup-i(0,\alpha)$. Then 
	\begin{enumerate}
		\item $F^+_\nu\in\K_\nu$\,,
		\item For $\La\in O^\uparrow(d,1)$, it holds that $(F_\La)^+_\nu=V_\nu(\La)F^+_\nu$.
		\item $(\overline{F_-})^+_\nu=\gamma_\nu\cdot \Gamma_\nu F^+_\nu$, where $\gamma_\nu\in\Cl$ is the phase factor
		\begin{align}\label{eq:gamma-factor}
		\gamma_\nu
		&=
		\begin{cases}
			2^{i\nu}e^{-i\pi\alpha}\frac{\Gamma(\alpha-i\nu)}{\Gamma(\alpha+i\nu)} & \qquad\text{principal series}
			\\
			e^{-i\pi(\alpha+i\nu)} & \qquad\text{complementary series}
		\end{cases}
		\;.
		\end{align}
	\end{enumerate}
\end{lemma}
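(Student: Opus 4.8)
The plan is to dispatch (a) and (b) by direct substitution, and to reduce (c) to a single scalar boundary-value (branch-cut) identity combined with one intertwining relation for the Fourier--Helgason transform; the complementary series will then be immediate and the principal series will carry all the analytic weight.

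For (a) I would note that the de Sitter wave $u_P^+(X)=(X\cdot P)^{-\alpha-i\nu}_+$ is homogeneous of degree $-\alpha-i\nu$ in $P$, since for $\lambda>0$ a positive rescaling leaves the $+i0$ prescription untouched; integrating against $F$ gives $F^+_\nu(\lambda P)=\lambda^{-\alpha-i\nu}F^+_\nu(P)$, so $F^+_\nu$ has the right homogeneity, and smoothness on $C_d^+$ follows from $F\in C_0^\infty(\dS_d)$ together with the boundary-value prescription, placing $F^+_\nu$ in (the smooth part of) $\K_\nu$. For (b) I would substitute $X\to\La X$ in $\int_{\dS_d}d\mu(X)\,F(\La^{-1}X)\,(X\cdot P)^{-\alpha-i\nu}_+$, using the $O^\uparrow(d,1)$-invariance of $d\mu$ and $\La X\cdot P=X\cdot\La^{-1}P$; because $\La$ is orthochronous, $(\La^{-1}P)_0>0$ and the $+$-prescription is preserved, so the integral collapses to $F^+_\nu(\La^{-1}P)=(V_\nu(\La)F^+_\nu)(P)$, which is exactly (b).

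The heart is (c). Writing $s:=-\alpha-i\nu$, I would first substitute $X\to-X$ in the defining integral; since the total reflection $X\mapsto-X$ lies in $O(d,1)$, the measure $d\mu$ is invariant, giving $(\overline{F_-})^+_\nu(P)=\int_{\dS_d}d\mu(X)\,\overline{F(X)}\,u_P^+(-X)$. The key point is the scalar boundary-value identity $u_P^+(-X)=(-X\cdot P+i0)^s=e^{i\pi s}(X\cdot P-i0)^s=e^{i\pi s}\,u_P^-(X)$, which holds for all complex $s$ and which I would verify simply by checking the two sign regions of $X\cdot P$ on the principal branch; this is precisely where the phase $e^{i\pi s}$ is born. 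Since $\overline{u_P^-(X)}=(X\cdot P+i0)^{\bar s}$ is again a $+$-mode, but with parameter $-\bar\nu$, pulling the conjugation out of the integral yields the uniform relation $(\overline{F_-})^+_\nu=e^{i\pi s}\,\overline{F^+_{-\bar\nu}}$. For the complementary series $\nu$ is purely imaginary, so $-\bar\nu=\nu$, $s$ is real, and $\Gamma_\nu$ is ordinary complex conjugation; hence $(\overline{F_-})^+_\nu=e^{i\pi s}\,\overline{F^+_\nu}=e^{i\pi s}\,\Gamma_\nu F^+_\nu$, and $e^{i\pi s}=e^{-i\pi(\alpha+i\nu)}$ is exactly the claimed $\gamma_\nu$, with nothing further needed.

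For the principal series $\nu$ is real, so $-\bar\nu=-\nu$ and the relation reads $(\overline{F_-})^+_\nu=e^{i\pi s}\,\overline{F^+_{-\nu}}$, while now $\Gamma_\nu F^+_\nu=\overline{I_\nu F^+_\nu}$. The remaining, genuinely nontrivial step is to show that the intertwiner $I_\nu$ of \eqref{Idef1} carries the Fourier--Helgason transform at $\nu$ to the one at $-\nu$ up to a constant, $I_\nu F^+_\nu=c_\nu\,F^+_{-\nu}$. I would prove this by interchanging the $\dS_d$- and $B$-integrations and evaluating the inner cone integral $\int_B\omega(P')\,(P\cdot P')^{-\alpha+i\nu}\,(X\cdot P')^{-\alpha-i\nu}_+$; by $SO^\uparrow(d,1)$-covariance and homogeneity this must be proportional to $(X\cdot P)^{-\alpha+i\nu}_+$, i.e. to $u_P^{+}(X)$ at parameter $-\nu$, and its constant is fixed by the same composition relation \eqref{eq:Composition-Relation} already exploited in Lemma~\ref{lemma:TPC-principal}. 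Combining the two steps gives $\gamma_\nu=e^{i\pi s}/\overline{c_\nu}$, and the explicit bookkeeping of the Gamma-factors in \eqref{Idef1} together with the constant from the cone integral should reproduce $\gamma_\nu=2^{i\nu}e^{-i\pi\alpha}\Gamma(\alpha-i\nu)/\Gamma(\alpha+i\nu)$. I expect this Gamma-function and $2^{i\nu}$ bookkeeping, and in particular the careful tracking of the $\pm i0$ prescriptions when interchanging the integrations and composing the singular kernels, to be the main obstacle; by contrast the reflection identity that produces $e^{i\pi s}$ is elementary.
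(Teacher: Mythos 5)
Your parts (a) and (b), and your treatment of the complementary series in (c), are correct and essentially the paper's own argument: the paper likewise combines the reflection $X\mapsto -X$ with the elementary branch identity (written there as $(-X\cdot P)^{-\alpha-i\nu}_-=e^{i\pi\alpha-\pi\nu}(X\cdot P)^{-\alpha-i\nu}_+$, which is the mirror image of your $u_P^+(-X)=e^{i\pi s}\,u_P^-(X)$). For the principal series, your reduction of (c) to the single intertwining statement $I_\nu F^+_\nu=c_\nu\,F^+_{-\nu}$ is also structurally right: that statement is equivalent to the Epstein--Moschella expansion \eqref{eq:u-expansion} that the paper quotes, whose constant works out to $c_\nu=e^{\pi\nu}\,2^{i\nu}\,\Gamma(\alpha-i\nu)/\Gamma(\alpha+i\nu)$, and with this value your bookkeeping $\gamma_\nu=e^{i\pi s}/\overline{c_\nu}$ does reproduce \eqref{eq:gamma-factor}.

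The genuine gap lies in how you propose to establish that intertwining statement. First, $SO^\uparrow(d,1)$-invariance and homogeneity alone do not force the cone integral $\int_B\om(P')\,(P\cdot P')^{-\alpha+i\nu}\,(X\cdot P')^{-\alpha-i\nu}_+$ to be proportional to $(X\cdot P)^{-\alpha+i\nu}_+$: for generic $\nu$ the invariant distributions homogeneous of degree $-\alpha+i\nu$ in $P$ form a two-dimensional space spanned by both boundary values $(X\cdot P\pm i0)^{-\alpha+i\nu}$, so one needs an additional (forward-tube) analyticity argument -- that the $+i\epsilon e$ prescription survives the $P'$-integration -- to exclude the ``$-$'' component. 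Second, and decisively, the composition relation \eqref{eq:Composition-Relation} cannot fix the constant: used as in Lemma~\ref{lemma:TPC-principal}, it only yields $I_{-\nu}I_\nu=1$, hence $c_\nu\,c_{-\nu}=1$, and this constraint is blind to exactly the factor you need, namely the non-unimodular factor $e^{\pi\nu}$ (note $e^{\pi\nu}e^{-\pi\nu}=1$, so any such factor cancels out of the constraint). That factor is not incidental bookkeeping: it must cancel the $e^{\pi\nu}$ hidden in your reflection phase $e^{i\pi s}=e^{-i\pi\alpha+\pi\nu}$, since otherwise the resulting $\gamma_\nu$ would not even have modulus one. Determining $c_\nu$, including $e^{\pi\nu}$, requires an honest evaluation of the mixed cone--de Sitter integral (e.g.\ in a special frame, tracking the discontinuity across $X\cdot P=0$); this is precisely the nontrivial content of the Epstein--Moschella lemma the paper cites, and it is not reachable from the purely cone-momentum identities (composition and star-triangle relations) of Appendix~B.
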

\begin{proof}
	$a)$ The de Sitter wave \eqref{eq:dS-wave} is homogeneous of degree $-\alpha-i\nu$ in $P$. $b)$ follows directly from the invariance of $\mu$. For $c)$, we first note $(-X\cdot P)^{-\alpha-i\nu}_-=e^{i\pi\alpha-\pi\nu}(X\cdot P)_+^{-\alpha-i\nu}$. For the complementary series, we have
	\begin{align*}
		(\Gamma_\nu(F_-)^+_\nu)(P)
		=
		\overline{(F_-)^+_\nu(P)}
		&=
		\int_{\dS_d}d\mu(X)\,\overline{F(-X)}\,(X\cdot P)_-^{-\alpha-i\nu}
		\\
		&=
		e^{i\pi\alpha-\pi\nu}
		\int_{\dS_d}d\mu(X)\,\overline{F(X)}\,(X\cdot P)_+^{-\alpha-i\nu}
		\\
		&=
		e^{i\pi\alpha-\pi\nu}\,(\overline{F})^+_\nu(P)\,,
	\end{align*}
	which implies the result in this case. For the principal series, we first recall \cite[Lemma~4.1]{EpsteinMoschella:2014}
	\begin{align}\label{eq:u-expansion}
		u_P^\pm(X)
		=
		\frac{e^{\pm\pi\nu}\,2^{i\nu}}{(2\pi)^\alpha}\,\frac{\Gamma(\alpha-i\nu)}{\Gamma(-i\nu)}\int_B\om(P')\,(P\cdot P')^{-\alpha-i\nu}\,(P'\cdot X)_\pm^{-\alpha+i\nu}\,.
	\end{align}
	Together with \eqref{eq:Composition-Relation}, this gives 
	\begin{align*}
		(\Gamma_\nu(F_-)^+_\nu)(P)
		&=
		2^{-i\nu}e^{i\pi\alpha}\frac{\Gamma(\alpha+i\nu)}{\Gamma(\alpha-i\nu)}\cdot(\overline{F})^+_\nu(P)\,,
	\end{align*}
	and the claimed result follows.
\end{proof}

We briefly indicate how these facts can be used to define a covariant Klein-Gordon quantum field on $\dS_d$: Denoting by $a,\ad$ the canonical CCR operators on the Fock space over $\K_\nu$
(this is the special case of the ``$R$-twisted'' Fock space of Section~\ref{subsection:invariant-YB-ops} given by taking $R$ as the tensor flip), we define
\begin{align}
	\varphi_\nu(F):=\ad(F^+_\nu)+a((\overline{F})^+_\nu)
	=
	\int_{\dS_d}d\mu(X)\,F(X)\varphi_\nu(X)\,.
\end{align}
or in the informal notation explained in more detail below in Sect.~\ref{section:qft}
\begin{align}\label{dSqft}
\varphi_\nu(X) = \int_B \omega(P)[ \ad(P) (X \cdot P)^{-\alpha -i\nu}_+ + {\rm h.c.} ]. 
\end{align}
It then follows immediately that the field $\varphi_\nu$ is a solution of the Klein-Gordon equation $(\Box_X+m^2)\varphi_\nu(X)=0$, which is real, $\varphi_\nu(X)^*=\varphi_\nu(X)$. It transforms covariantly under the second quantization of $V_\nu$, namely $V_\nu(\La)\varphi_\nu(X)V_\nu(\La)^{-1}=\varphi_\nu(\La X)$. Furthermore, taking as TCP operator $\Theta_\nu:=\gamma_\nu^{-1}\Gamma_\nu$, we have $\Theta_\nu F^+_\nu=(\overline{F_-})^+_\nu$, and hence also the TCP symmetry $\Theta_\nu\varphi_\nu(X)\Theta_\nu=\varphi_\nu(-X)$.

\subsection{$\boldsymbol{SO^\uparrow(d,1)}$-invariant Yang-Baxter operators}\label{section:YBOps-LorentzGroup}

We now take the Lorentz group $G=SO^\uparrow(d,1)$ in one of the representations $V_\nu$ from Section~\ref{section:UIR-LorentzGroup}, and associated conjugation $\Gamma_\nu$, and ask for the invariant Yang-Baxter operators and functions $\R_{\rm op}(V_\nu,\Gamma_\nu)$ and $\R_{\rm fct}(V_\nu,\Gamma_\nu)$, analogously to the examples for $G=O(N)$ and $G=\PG_o$ considered in Section~\ref{subsection:invariant-YB-ops}.

Our construction will be clearest in a slightly more general setting: Instead of a single representation, we consider two representations $V_{\nu_1}$, $V_{\nu_2}$ from either the principal, complementary or discrete 
series, i.e. $\nu_1,\nu_2\in\Rl\cup-i(0,\alpha) \cup \{-i\alpha-i{\mathbb N}_0\} $, and then construct an operator $R^{\nu_1\nu_2}:\K_{\nu_1}\ot\K_{\nu_2}\to\K_{\nu_2}\ot\K_{\nu_1}$ intertwining $V_{\nu_1}\ot V_{\nu_2}$ with $V_{\nu_2}\ot V_{\nu_1}$. 
This operator $R^{\nu_1\nu_2}$ will be an integral operator with distributional kernel $R^{\nu_1\nu_2}(P_1,P_2; P_1',P_2')$,
\ben\label{Rdef}
	(R^{\nu_1\nu_2}\Psi^{\nu_1\nu_2})(P_1, P_2)
	:=
	\int_{B \times B} \omega(P_1') \wedge \omega(P_2') \ R^{\nu_1\nu_2}(P_1, P_2; P_1', P_2')\, \Psi^{\nu_1\nu_2}(P_1',P_2') \ . 
\een
The degrees of homogeneity $d(P_k^{(\prime)})$ of the kernel in its four variables $P_1,P_1',P_2,P_2'$ will be
\begin{align*}
     d(P_1)&=-\alpha - i\nu_2\,,\\
     d(P_2)&=-\alpha - i\nu_1\,,\\
     d(P_1')&=-\alpha + i\nu_1\,,\\
     d(P_2')&=-\alpha + i\nu_2\,.
\end{align*}
For $\nu_1,\nu_2$ in the principal or complementary series, this implies immediately that the integrand of \eqref{Rdef} is homogeneous of degree $-(d-1)$ in both $P_1'$ and $P_2'$, so that \eqref{Rdef} does not depend on the choice of orbital base $B$ by Lemma~\ref{lemma:closedform}. Furthermore, it follows that $R^{\nu_1\nu_2}\Psi^{\nu_1\nu_2}$, $\Psi^{\nu_1\nu_2}\in\K_{\nu_1}\ot\K_{\nu_2}$, lies in $\K_{\nu_2}\ot\K_{\nu_1}$, i.e. $R^{\nu_1\nu_2}$ is a map $\K_{\nu_1}\ot\K_{\nu_2}\to\K_{\nu_2} \otimes \K_{\nu_1}$. The same conclusions also hold when one of the parameters $\nu_1,\nu_2$ (or both of them) belong to the discrete series. Here one has to check in addition that the constraint \eqref{discr} is preserved by the integral operator. This follows from the relation \eqref{Delta}.

\medskip

The integral kernel will be  taken of the form $(P_1' \cdot P_2)^{w_1} (P_2' \cdot P_1)^{w_2}(P_1' \cdot P_2')^{w_3} (P_1 \cdot P_2)^{w_4}$, where the exponents $w_i$ are complex numbers to be determined. This ansatz presumably does not really imply any serious loss of generality, because a general invariant kernel may be reduced to such expressions via a Mellin-transform along the lines of~\cite{Hollands:2013}. Since it only contains Lorentz invariant inner products, it follows immediately that the corresponding integral operator intertwines $V_{\nu_1}\ot V_{\nu_2}$ with $V_{\nu_2}\ot V_{\nu_1}$ whenever it is well-defined. Imposing the above degrees of homogeneity in $P_1,P_2,P_1',P_2'$ onto this kernel fixes the powers $w_1, \dots ,w_4$ up to one free parameter, which we call $i\te$. This then gives the integral kernels
\ben\label{Rdef-kernel}
\begin{split}
R_\te(P_1,P_2; P_1',P_2') =&\; c_{\nu_1, \nu_2}(\theta) \  (P_1 \cdot P_2)^{-i\theta - \half i\nu_1 - \half i\nu_2}  (P_1 \cdot P_1')^{-\alpha+i\theta + \half i\nu_1 - \half i\nu_2} \\
& \hspace{1.5cm}   (P_2 \cdot P_2')^{-\alpha + i\theta -\half i\nu_1 + \half i\nu_2}
(P_1' \cdot P_2')^{-i\theta + \half i\nu_1 + \half i\nu_2}
. 
\end{split}
\een

As in $I_\nu$ \eqref{Idef1}, there can be singularities whenever the two momenta in an inner product coincide, and the same regularization as discussed earlier is understood also here. 

The constant $c_{\nu_1, \nu_2} (\theta)$ is taken to be 
\begin{align}\label{eq:c}
	c_{\nu_1, \nu_2} (\theta) =  \frac{1}{(2\pi)^{d-1}} \frac{\Gamma(\alpha - i\theta - \half i\nu_1 + \half i\nu_2) \Gamma(\alpha - i\theta + \half i\nu_1 - \half i\nu_2) }{\Gamma(i\theta - \half i\nu_1 + \half i\nu_2)
\Gamma(i\theta + \half i\nu_1 - \half i\nu_2)}\,,
\end{align}

and $\te$ is taken to be real. With these definitions, we have

\begin{theorem}\label{theorem:R-dS}
	Let $\nu_1,\nu_2,\nu_3\in\Rl\cup-i(0,\alpha) \cup \{-i\alpha-i{\mathbb N}_0\}$, $\te\in\Rl$ be such that 
	the poles in \eqref{eq:c} are avoided, and $R_\te^{\nu_i\nu_j}:\K_{\nu_i}\ot\K_{\nu_j}\to\K_{\nu_j}\ot\K_{\nu_i}$ the integral operators defined above. Then, $\te,\te'\in\Rl$,
	\begin{enumerate}
		\item[(R1'')] $R_\te$ is unitary.
		\item[(R2'')] $(V_{\nu_2}(\La)\ot V_{\nu_1}(\La)) R^{\nu_1\nu_2}_\theta = R^{\nu_1\nu_2}_\theta(V_{\nu_1}(\La)\otimes V_{\nu_2}(\La))$ for all $\La\in SO^\uparrow(d,1)$. 
		\item[(R3'')] $\bullet$ $F^{\nu_2\nu_1}\,R_\theta^{\nu_1\nu_2}\,F^{\nu_2\nu_1}=R_\te^{\nu_2\nu_1}$ with $F^{\nu_i\nu_j}:\K_{\nu_i}\ot\K_{\nu_j}\to\K_{\nu_j}\ot\K_{\nu_i}$ the tensor flip.\\
		$\bullet$ $(\Gamma_{\nu_2}\ot\Gamma_{\nu_1})\,R_\theta^{\nu_1\nu_2}(\Gamma_{\nu_1}\ot\Gamma_{\nu_2}) =R_{-\theta}^{\nu_1\nu_2}$ 
		\item[(R4'')] On $\K_{\nu_1}\ot\K_{\nu_2}\ot\K_{\nu_3}$,
		\begin{align}\label{YBE-with-nus}
			(R_\te^{\nu_1\nu_2}\otimes 1)(1 \otimes R^{\nu_2\nu_3}_{\theta + \theta'})(R_{\theta'}^{\nu_1\nu_2} \otimes 1)
			= 
			(1 \otimes R_{\theta'}^{\nu_2\nu_3})(R_{\theta+\theta'}^{\nu_1\nu_2}\otimes 1)(1\otimes R_\te^{\nu_2\nu_3}).
		\end{align}
		\item[(R5'')] $(R^{\nu_1\nu_2}_\theta)^{-1}=R^{\nu_2\nu_1}_{-\theta}$.
	\end{enumerate}
 	In particular, whenever $\nu_1=\nu_2=:\nu$, the function $R^{\nu\nu}:\te\mapsto R_\te^{\nu\nu}$ is an invariant Yang-Baxter function, $R^{\nu\nu}\in\R_{\rm fct}(V_\nu, \Gamma_\nu)$. One has the normalization
	\begin{align}\label{eq:R-Normalization}
		R_0^{\nu\nu}=1_{\K_\nu\ot\K_\nu}\,.
	\end{align}
\end{theorem}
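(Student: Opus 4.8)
The plan is to verify (R1'')--(R5'') by direct manipulation of the integral kernel \eqref{Rdef-kernel}, using throughout that every factor is a power of a Lorentz-invariant inner product. The covariance (R2'') needs no computation: since the kernel depends only on the invariants $(P_i\cdot P_j)$, the change-of-variables argument already spelled out in \eqref{unitary} shows at once that $R_\te^{\nu_1\nu_2}$ intertwines $V_{\nu_1}\ot V_{\nu_2}$ with $V_{\nu_2}\ot V_{\nu_1}$. The first bullet of (R3'') I would read off by inspection: exchanging $P_1\leftrightarrow P_2$ and $P_1'\leftrightarrow P_2'$ simultaneously in \eqref{Rdef-kernel} permutes the four factors and carries the $(\nu_1,\nu_2)$-exponents into the $(\nu_2,\nu_1)$-exponents, while the prefactor satisfies $c_{\nu_1,\nu_2}(\te)=c_{\nu_2,\nu_1}(\te)$ by the symmetry of \eqref{eq:c} under $\nu_1\leftrightarrow\nu_2$; hence $F^{\nu_2\nu_1}R_\te^{\nu_1\nu_2}F^{\nu_2\nu_1}=R_\te^{\nu_2\nu_1}$.

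The two substantive identities are the inversion relation (R5'') and the Yang--Baxter relation (R4''), and both I would reduce to the composition relation \eqref{eq:Composition-Relation} (the ``star--triangle'' identity for integrals of products of powers of inner products over an orbital base). For (R5''), composing $R_\te^{\nu_1\nu_2}$ with $R_{-\te}^{\nu_2\nu_1}$ produces a kernel integrated over two intermediate momenta; collapsing each intermediate integration by one application of \eqref{eq:Composition-Relation} should leave precisely $(P_1\cdot P_1')^{-\alpha}(P_2\cdot P_2')^{-\alpha}$ times matching Gamma factors, which by the delta-function relation \eqref{Delta} (the very mechanism giving $I_0=1$ in Lemma~\ref{lemma:TPC-principal}) is the identity kernel. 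The role of the normalization \eqref{eq:c} is exactly to make the residual Gamma factors cancel, so those must be tracked carefully. The second bullet of (R3'') I would prove by conjugating \eqref{Rdef-kernel}: for the complementary and discrete series $\Gamma_\nu$ is complex conjugation, and since there $i\nu$ is real while $i\te$ is imaginary, conjugating each exponent sends $\te\to-\te$, and $\overline{c_{\nu_1,\nu_2}(\te)}=c_{\nu_1,\nu_2}(-\te)$ through $\overline{\Gamma(z)}=\Gamma(\bar z)$; for the principal series the extra $I_\nu$ inside $\Gamma_\nu$ is absorbed using \eqref{eq:Composition-Relation} once more. Unitarity (R1'') then follows by combining (R5'') with an adjoint computation: for the principal series the inner product is the plain $L^2(B,\omega)$ product and the adjoint kernel is $\overline{R_\te(P_1',P_2';P_1,P_2)}$, which I would identify with $R_{-\te}^{\nu_2\nu_1}$, whereas for the complementary and discrete series the $I_\nu$-weighted inner product feeds the $I_\nu$-factors into the calculation and the conjugation relation just established is used.

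The main obstacle will be the Yang--Baxter relation (R4''). Both sides of \eqref{YBE-with-nus} are integral operators on $\K_{\nu_1}\ot\K_{\nu_2}\ot\K_{\nu_3}$ whose kernels are products of powers of the six inner products among the three incoming and three outgoing momenta, with one internal integration on each side, so proving their equality is a genuine multidimensional integral identity. I would obtain it by repeated application of the star--triangle relation \eqref{eq:Composition-Relation}, moving one vertex of the diagram across on each step until the two sides coincide; the delicate part is the bookkeeping of exponents and Gamma prefactors together with checking that the homogeneity degree $-(d-1)$ is maintained at every intermediate stage (so that Lemma~\ref{lemma:closedform} still licenses a free choice of orbital base, and, when a $\nu_i$ is in the discrete series, that the constraint \eqref{discr} is preserved). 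As an independent consistency check I would use that, by (R2''), both sides of \eqref{YBE-with-nus} are intertwiners of $V_{\nu_1}\ot V_{\nu_2}\ot V_{\nu_3}$ with $V_{\nu_3}\ot V_{\nu_2}\ot V_{\nu_1}$; if the relevant intertwiner space is essentially one-dimensional they must be proportional, and a single normalization forces equality.

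Finally, in the special case $\nu_1=\nu_2=:\nu$, properties (R1')--(R5') follow from (R1'')--(R5''): the two bullets of (R3'') collapse to (R3') (the flip commutes with $R_\te^{\nu\nu}$, and then $(\Gamma_\nu\ot\Gamma_\nu)F R_{-\te}^{\nu\nu}F(\Gamma_\nu\ot\Gamma_\nu)=R_\te^{\nu\nu}$), and (R5'') gives $R_{-\te}^{\nu\nu}=(R_\te^{\nu\nu})^{-1}$, so $R^{\nu\nu}\in\R_{\rm fct}(V_\nu,\Gamma_\nu)$. The normalization \eqref{eq:R-Normalization} is the degenerate point $\te=0$: there the prefactor \eqref{eq:c} vanishes like $1/[\Gamma(0)]^2$ while the factors $(P_1\cdot P_1')^{-\alpha}(P_2\cdot P_2')^{-\alpha}$ become singular, and \eqref{Delta} shows the product tends to $\delta(P_1,P_1')\,\delta(P_2,P_2')$, i.e. $R_0^{\nu\nu}=1$, exactly as in the computation of $I_0=1$.
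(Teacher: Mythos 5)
Your overall architecture matches the paper's proof: covariance (R2'') from invariance of the inner products via the argument of \eqref{unitary}, the flip relation by inspection of \eqref{Rdef-kernel} and the symmetry of \eqref{eq:c}, a series-dependent treatment of the adjoint/conjugation (plain $L^2$ adjoint for the principal series, $I_\nu$-weighted for complementary/discrete), inversion (R5'') by collapsing the intermediate integrations, the normalization \eqref{eq:R-Normalization} from the delta-function relation \eqref{Delta}, and attention to the constraint \eqref{discr} in the discrete case (which the paper preserves via \eqref{eq:triality1} and a limiting argument from perturbed $\nu$-values).

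However, there is a concrete gap: throughout you assign the hard work to the composition relation \eqref{eq:Composition-Relation}, which you call the ``star--triangle'' identity, but that identity is only the degenerate two-propagator limit producing a delta function. The identity that actually powers the paper's proof of the Yang--Baxter relation (R4''), of the principal-series TCP relation (second bullet of (R3'')), and of unitarity for the complementary series, is the Symanzik triality relation \eqref{eq:triality}, in which \emph{three} propagators meet at an integrated vertex and are traded for a triangle. The composition relation cannot even be applied to those computations: in the kernels of both sides of \eqref{YBE-with-nus}, and in the kernel of $(\Gamma_{\nu_2}\ot\Gamma_{\nu_1})R_\theta^{\nu_1\nu_2}(\Gamma_{\nu_1}\ot\Gamma_{\nu_2})$ once the $I_{\nu}$'s from $\Gamma_\nu=CI_\nu$ are written out, every integrated momentum sits at a vertex joined by three power-law factors, so there is no two-line vertex to collapse. (By contrast, in the product $R^{\nu_1\nu_2}_\theta R^{\nu_2\nu_1}_{-\theta}$ the $(Q_1\cdot Q_2)$-exponents cancel, leaving genuine two-line vertices, which is why \eqref{eq:Composition-Relation} suffices there --- and it yields the delta functions directly, with no residual $(P_i\cdot P_i')^{-\alpha}$ powers needing \eqref{Delta}.) Your verbal description of ``moving one vertex of the diagram across on each step'' is exactly the star--triangle move, so you have the right picture, but the citation must be \eqref{eq:triality}; without it the plan stalls at (R4'') and at the principal-series conjugation identity. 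Finally, your fallback of deducing \eqref{YBE-with-nus} from uniqueness of intertwiners is not viable: the space of intertwiners $V_{\nu_1}\ot V_{\nu_2}\ot V_{\nu_3}\to V_{\nu_3}\ot V_{\nu_2}\ot V_{\nu_1}$ is far from one-dimensional (already the two-fold tensor product admits the whole one-parameter family $R_\theta$), so proportionality of the two sides cannot be argued that way.
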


The proof of this theorem is given in Appendix~B.

\medskip

If we go to one of the canonical models for the orbital base $B$ (described in Appendix~A), we get concrete formulas for $\omega$ and the kernel $R_\theta$. 
In the case of the flat model, our expression for $R_\theta$ then coincides, up to a phase, with an expression derived previously~\cite{ChicherinDerkachovIsaev:2013} (see also~\cite{DerkachovKorchemskyManashov:2001}) for the 
case of the principal series representation. These authors also proved the Yang-Baxter equation (R4''), and a version of the idempotency relation (R5''). Their formalism for finding $R_\theta$ is based on a different model for the representations. 

\medskip

As explained before, the exponents $w_i$ are fixed by homogeneity requirements in the variables $P_1, P_2, Q_1, Q_2$ up to one remaining free parameter, which is $i\theta$. Setting this parameter to zero (as required for the Yang-Baxter equation (R4)) leads to a trivial solution \eqref{eq:R-Normalization}. We thus conjecture that $\R_{\rm op}(V_\nu,\Gamma_\nu)$ contains only trivial operators (as in the $O(N)$ case, Example~1).

\medskip

There do however exist many other invariant Yang-Baxter functions, because there are two operations we may carry out on the above integral operator without violating the properties (R1'')---(R5''): Scaling of $\te$ and multiplication by suitable $\te$-dependent scalar factors.

\begin{proposition}\label{proposition:tweak-R}
	Let $R^{\nu_1\nu_2}_\te$ be the integral operator defined by the kernel \eqref{Rdef-kernel}, $a\in\Rl$, and $\sigma_{\nu_1\nu_2}\in L^\infty(\Rl,\Cl)$ a function satisfying
	\begin{align}\label{eq:sigma-properties}
		\overline{\sigma_{\nu_1\nu_2}(\te)}
		=
		\sigma_{\nu_1\nu_2}(\te)^{-1}
		=
		\sigma_{\nu_1\nu_2}(-\te)
		=
		\sigma_{\nu_2\nu_1}(-\te)
		\,,\qquad 
		\te\in\Rl\,.
	\end{align}
	Then also $\sigma_{\nu_1\nu_2}(\te)\cdot R^{\nu_1\nu_2}_{a\cdot\te}$ satisfies (R1'')---(R5'').
\end{proposition}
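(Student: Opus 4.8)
The plan is to introduce the dressed family $\widetilde R^{\nu_1\nu_2}_\te:=\sigma_{\nu_1\nu_2}(\te)\,R^{\nu_1\nu_2}_{a\te}$ and to check the five properties (R1'')--(R5'') of Theorem~\ref{theorem:R-dS} in turn, leaning on two general observations that strip the problem down to bookkeeping of scalar phases. First, the relation $\overline{\sigma_{\nu_1\nu_2}(\te)}=\sigma_{\nu_1\nu_2}(\te)^{-1}$ in \eqref{eq:sigma-properties} forces $|\sigma_{\nu_1\nu_2}(\te)|=1$ almost everywhere, so each dressing factor is a phase. Second, since $a\in\Rl$ the rescaling $\te\mapsto a\te$ maps $\Rl$ into itself and respects addition, $a(\te+\te')=a\te+a\te'$; hence every identity of Theorem~\ref{theorem:R-dS} that holds for all real spectral parameters survives the substitution $\te\mapsto a\te$, $\te'\mapsto a\te'$. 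Consequently the operator content of each property is inherited directly from the theorem, and I only need to track the scalar prefactors.

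I would dispatch the four ``routine'' properties as follows. For (R1''), $\widetilde R^{\nu_1\nu_2}_\te$ is a unit-modulus scalar times the unitary $R^{\nu_1\nu_2}_{a\te}$, hence unitary. For (R2''), the scalar $\sigma_{\nu_1\nu_2}(\te)$ commutes with $V_{\nu_2}(\La)\ot V_{\nu_1}(\La)$ and with $V_{\nu_1}(\La)\ot V_{\nu_2}(\La)$, so the intertwining relation of Theorem~\ref{theorem:R-dS} at argument $a\te$ passes through. For the first bullet of (R3''), conjugation by the linear flip $F^{\nu_2\nu_1}$ leaves the scalar untouched, giving $F^{\nu_2\nu_1}\widetilde R^{\nu_1\nu_2}_\te F^{\nu_2\nu_1}=\sigma_{\nu_1\nu_2}(\te)R^{\nu_2\nu_1}_{a\te}$, which equals $\widetilde R^{\nu_2\nu_1}_\te=\sigma_{\nu_2\nu_1}(\te)R^{\nu_2\nu_1}_{a\te}$ precisely because $\sigma_{\nu_1\nu_2}=\sigma_{\nu_2\nu_1}$ (the identity $\sigma_{\nu_1\nu_2}(-\te)=\sigma_{\nu_2\nu_1}(-\te)$ read at $-\te$). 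For the second bullet, the conjugations $\Gamma_{\nu_i}$ are antilinear and therefore extract the scalar as its complex conjugate, $(\Gamma_{\nu_2}\ot\Gamma_{\nu_1})\widetilde R^{\nu_1\nu_2}_\te(\Gamma_{\nu_1}\ot\Gamma_{\nu_2})=\overline{\sigma_{\nu_1\nu_2}(\te)}\,R^{\nu_1\nu_2}_{-a\te}$, and $\overline{\sigma_{\nu_1\nu_2}(\te)}=\sigma_{\nu_1\nu_2}(-\te)$ converts this into $\widetilde R^{\nu_1\nu_2}_{-\te}$. Finally (R5'') follows from $(\widetilde R^{\nu_1\nu_2}_\te)^{-1}=\sigma_{\nu_1\nu_2}(\te)^{-1}R^{\nu_2\nu_1}_{-a\te}$ together with $\sigma_{\nu_1\nu_2}(\te)^{-1}=\sigma_{\nu_1\nu_2}(-\te)=\sigma_{\nu_2\nu_1}(-\te)$, which is exactly $\widetilde R^{\nu_2\nu_1}_{-\te}$. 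Each of these amounts to inserting the definitions and invoking the matching property of Theorem~\ref{theorem:R-dS} at the rescaled argument.

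The step I expect to be the main obstacle is the Yang-Baxter relation (R4''). Both sides are ordered products of three dressed factors; since every factor is a scalar multiple of an $R$-operator, I would pull all three phases to the front, leaving the undressed relation (R4'') evaluated at the rescaled arguments $a\te,\,a\te',\,a(\te+\te')$, which holds by Theorem~\ref{theorem:R-dS} and the additivity noted above. What remains is to show that the product of the three phases coming from the left-hand side equals the product coming from the right-hand side. Here the essential input is that the two sides of the braid relation carry the \emph{same} collection of dressing factors $\sigma_{\nu_i\nu_j}$ at matching spectral arguments, with the symmetry $\sigma_{\nu_i\nu_j}=\sigma_{\nu_j\nu_i}$ serving to identify factors whose strand labels have been interchanged by the crossing; the two scalar products then coincide and cancel. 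In the diagonal case $\nu_1=\nu_2=\nu_3=\nu$ that produces the new Yang-Baxter functions $R^{\nu\nu}\in\R_{\rm fct}(V_\nu,\Gamma_\nu)$ this is immediate, since all three phases are values of the single function $\sigma_{\nu\nu}$ and scalars commute, so the two triple products are literally equal after reordering. Carrying out the phase cancellation in full generality is the one place demanding careful bookkeeping of which spectral argument and which representation label attaches to each factor; once that is in hand, the remaining properties are pure substitution.
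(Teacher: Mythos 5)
Your proposal is correct, and it is precisely the ``straightforward check of the conditions (R1'')---(R5'')'' that the paper declares and omits: the unimodularity of $\sigma_{\nu_1\nu_2}$, the extraction of the complex conjugate through the antilinear conjugations $\Gamma_{\nu_i}$ in (R3''), the identity $\sigma_{\nu_1\nu_2}=\sigma_{\nu_2\nu_1}$, and the additivity $a\te+a\te'=a(\te+\te')$ are exactly the ingredients needed. Your observation on (R4'') is also the right one: once the tensor legs are traced correctly, both sides of the Yang--Baxter relation carry the same three operators (hence the same three phases) at arguments $\te$, $\te'$, $\te+\te'$, so the scalar prefactors cancel identically.
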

The proof consists in a straightforward check of the conditions (R1'')---(R5''), and is therefore omitted. We conjecture that the operators $\sigma_{\nu_1\nu_2}(\te)\cdot R^{\nu_1\nu_2}_{a\cdot\te}$ form all solutions of the constraints (R1'')---(R5'').

The multipliers $\sigma_{\nu_1\nu_2}$ satisfy exactly the requirements on ``scalar'' Yang-Baxter functions \eqref{eq:scat-func}. In particular, there exist infinitely many functions satisfying the requirements \eqref{eq:sigma-properties}. The freedom of adjusting $R$ by rescaling the argument and multiplying with such a scalar function will be exploited in the next section.

\section{Crossing symmetry and localization}\label{subsection:crossing-YB-ops}

As explained in Section~\ref{subsection:invariant-YB-ops}, an invariant Yang-Baxter operator $R\in\R_{\rm op}(V_1,\Gamma_1)$ (Def.~\ref{definition:Invariant-YB-Operator}) gives rise to an $R$-symmetric Fock space on which twisted second quantized versions $V,\Gamma$ of the representation $V_1$ and the conjugation $\Gamma_1$ act. These ``covariance properties'' are one essential aspect of the Yang-Baxter operators in our setting. The other essential aspect are locality properties, which are linked to specific analyticity requirements on $R$. These analyticity properties, to be described below, have their origin in scattering theory, where they describe the relation between scattering of (charged) particles and their antiparticles \cite{Iagolnitzer:1993}.

\subsection{Crossing-symmetric $R$ and half-local quantum fields}\label{section:crossing}

For the following general discussion, we first consider the conformal sigma models with some arbitrary inner symmetry group $G$ (``Example 2'' of Section~\ref{subsection:invariant-YB-ops}), and later restrict to $G=SO^\uparrow(d,1)$. Our exposition is related to \cite{BostelmannLechnerMorsella:2011}, where a scalar version of such models was presented, and \cite{LechnerSchutzenhofer:2013}, where a massive version with finite-dimensional representation $V_1$ was analyzed. 

We consider on the one hand the representation $U_o$ \eqref{eq:Uo} of the translation-dilation group $\PG_o$ of the lightray on $L^2(\Rl,d\te)$, and the conjugation $(C\psi)(\te)=\overline{\psi(\te)}$ on that space. On the other hand, we consider an arbitrary group $G$, given in a unitary representation $V$ with commuting conjugation $\Gamma$ on a Hilbert space $\K$. Our one-particle space is then $\Hil_1=L^2(\Rl\to\K,d\te)$, with the representation \eqref{eq:UoV} and the conjugation 
\begin{align}
 \Theta:=C\ot\Gamma\,,\quad (\Theta\bxi)(\te)=\Gamma\bxi(\te)\,,\qquad \bxi\in L^2(\Rl\to\K,d\te)\,.
\end{align}

We pick an invariant Yang-Baxter function $R\in\R_{\rm fct}(V_1,\Gamma_1)$ as the essential input into the following construction of quantum fields. These fields will be operators on the $R$-symmetric Fock space $\Hil^R$ over $\Hil_1$ given by the invariant Yang-Baxter operator $\boldsymbol{R}$ defined by $R$ via \eqref{eq:YBF->YBO}.

\medskip

The $R$-symmetric Fock space carries natural creation/annihilation operators: With the help of the projections $P_n^R$, we define as in \cite{Lechner:2003}, $\bxi\in\Hil_1$,
\begin{align}\label{eq:defzdR}
	\zd_R(\bxi)\bPsi_n&:=\sqrt{n+1}\,P_{n+1}^R(\bxi\ot\bPsi_n)\,,\qquad \bPsi_n\in\Hil_R^n\,,\\
	z_R(\bxi)&:=\zd(\bxi)^*\,.
\end{align}
With these definitions, $z_R(\bxi)$ is an annihilation operator (in particular, $z_R(\bxi)\Om=0$), and $\zd_R(\bxi)$ is a creation operator (in particular $\zd_R(\bxi)\Om=\bxi$). It directly follows that, $\bxi\in\Hil_1$, $g\in G$\,,
\begin{align}\label{eq:zdR-Covariance}
	V(g)z_R^\#(\bxi)V(g)^{-1}&=z_R^\#(V_1(g)\bxi)\,,
\end{align}
where $z_R^\#$ denotes either $z_R$ or $\zd_R$.

To introduce a field operator on the lightray, we define\footnote{The prefactor $\pm ie^\te$ in \eqref{eq:fpm} is motivated by the fact that we want to study a chiral field, which has an infrared singularity at zero momentum because of the divergence in the measure $(p^2+m^2)^{-1/2}dp$ for $m=0$. This problem is most easily resolved by passing to the current of this field, which amounts to taking derivatives of test functions. As these derivatives result in the prefactor $\pm ie^\te$, subsequent formulas will be easier if we include this factor from the beginning.} for a test function $f\in\Ss(\Rl)$
\begin{align}\label{eq:fpm}
     f^\pm(\te)
     :=
     \pm ie^\te\int_\Rl du\,e^{\pm iue^\te}\,f(u)
     =
     \pm \sqrt{2\pi}\,ie^\te\cdot\fti(\pm e^\te)
     \,.
\end{align}
In analogy to the Fourier-Helgason transforms \eqref{eq:F-dS-pm}, these functions lie in the representation space $L^2(\Rl,d\te)$, and the definition is covariant under $\PG_o$ in the following sense: 
\begin{align}
     f_{(x,\la)}^\pm
     =
     U_o(\pm x,\la)f^\pm\,,\quad 
     f_{(x,\la)}(u):=f(e^\la(u-x))
     \,.
\end{align}
Analogously, the TCP transformed function $\overline{f_-}:u\mapsto\overline{f(-u)}$ yields $(\overline{f_-})^\pm=-\overline{f^\pm}$, where the minus sign is due to the fact that we consider the current.

Given any vector $k\in\K$, we define the field operators (cf.~ \cite{SchroerWiesbrock:2000-1,Lechner:2003})
\begin{align}\label{eq:def-phiR}
	\phi_{R,k}(f)
	:=
	\zd_R(f^+\ot k)+z_R(\overline{f^-}\ot\Gamma k)\,,\qquad f\in\Ss(\Rl)\,.
\end{align}
We may think of the vector $k$ as a label for the different ``components'' $\phi_{R,k}$ of the field $\phi_R$. Note, however, that $\K$ can be infinite-dimensional, so that $\phi_R$ can be a field with infinitely many independent components.

By proceeding to delta distributions $\delta_u$, sharply localized at a point $u$ on the lightray, we may also describe this field in terms of the distributions $\phi_{R,k}(u):=\phi_{R,k}(\delta_u)$. 

\medskip

In the present general setting, one can show that the field operators transform covariantly under $U_o$ and $V$, but not under the TCP operator $\Theta$. Furthermore, $\phi_{R,k}(f)^*=\phi_{R,\Gamma k}(\overline{f})$ on an appropriate domain. We do not repeat the calculations from \cite{LechnerSchutzenhofer:2013} here (see, however, Section~\ref{section:qft} for a concrete version of these properties), but rather focus on the locality aspects. 

To begin with, one realizes that $\phi_R$ is a non-local field unless $R=F$, in which case it satisfies canonical commutation relation and reduces to a free field. For general $R$, the locality properties of $\phi_R$ are best analyzed by introducing a second ``TCP conjugate'' field,
\begin{align}\label{eq:tcp-field-general}
	\phi_{R,k}'(u):=\Theta\phi_{R,\Gamma k}(-u)\Theta
	\,,\qquad u\in\Rl\,,\;k\in\K\,,
\end{align}
which in its smeared version is, $f\in\Ss(\Rl)$,
\begin{align}
     \phi_{R,k}'(f)
     :=
     \Theta\phi_{R,\Gamma k}(\overline{f_-})\Theta
     =
     -\Theta\zd_R(\Theta(f^+\ot k))\Theta-\Theta z_R(\Theta(\overline{f^-}\ot\Gamma k))\Theta
     \,.
\end{align}
This field shares many properties with $\phi_R$, and in particular, also transforms covariantly under $U_o$ and $V$.

We next want to analyze the commutation relations between $\phi_R$ and $\phi_R'$. To this end, one first computes that the ``TCP conjugate creation operator'' acts on $\bPsi_n\in\Hil_n^R$ according to
\begin{align}\label{eq:zd'}
	\Theta\zd_R(\bxi)\Theta\, \bPsi_n
	=
	\sqrt{n+1}\,P_{n+1}^R(\bPsi_n\ot \Theta\bxi)\,,\qquad\bxi\in\Hil_1\,,
\end{align}
i.e. in comparison to the left action \eqref{eq:defzdR} of $\zd_R(\bxi)$, this operator ``creates from the right''. In particular, $\bxi,\bxi'\in\Hil_1$,
\begin{align}\label{eq:zz'comm}
	[\zd_R(\bxi),\,\Theta\zd_R(\bxi')\Theta]&=0\,,\qquad
	[z_R(\bxi),\,\Theta z_R(\bxi')\Theta]=0\,.
\end{align}
To control the mixed commutators $[\zd_R,\Theta z_R \Theta]$, the following definition is essential.

\begin{definition}\label{definition:Crossing-Symmetric-YB-Operator}
     An invariant Yang-Baxter function $R\in\R_{\rm fct}(V,\Gamma)$ (for some group $G$, on some Hilbert space $\K$) is called {\em crossing-symmetric}\footnote{The term {\em crossing-symmetric} is usually reserved to mean just property (R7'). We have included here (R6') for a concise definition; for our work the combination of (R6') and (R7') is the relevant property. The analyticity condition (R6') can be understood as ruling out bound state poles.} if it satisfies the following two conditions.
	\begin{itemize}
		\item[(R6')] $\te\mapsto R(\te)$ extends to a bounded analytic function on the strip $0<\im\te<\pi$.
		\item[(R7')] Crossing symmetry holds: For $\te\in\Rl$,
		\begin{align}
			\langle\xi\ot\psi,\,R(i\pi-\te)\,(\varphi\ot\xi')\rangle_{\K\ot\K}
			=
			\langle\psi\ot\Gamma\xi',\,R(\te)\,(\Gamma\xi\ot\varphi)\rangle_{\K\ot\K}
			\,.
		\end{align}
	\end{itemize}
\end{definition}

In case that $\K\cong\Cl^N$ is finite-dimensional, these requirements coincide with the corresponding ones in \cite[Def.~2.1]{LechnerSchutzenhofer:2013}, where the conjugation was taken as $(\Gamma \zeta)_k:=\overline{\zeta_{\overline{k}}}$, $\zeta\in\Cl^N$, with the components $\zeta_k$ referring to a fixed basis and $k\mapsto\overline{k}$ an involutive permutation of $\{1,\dots,N\}$.

The significance of (R6')---(R7') is best explained in terms of Tomita-Takesaki modular theory (see, for example, \cite{KadisonRingrose:1986}), as we shall do now. Note, however, that in the explicit examples to be considered in Section~\ref{section:qft}, we will also give a purely field-theoretic formulation (Theorem~\ref{theorem:commutation-concrete}).

\bigskip

For the following argument, we adopt the concept of ``modular localization'' \cite{BrunettiGuidoLongo:2002}. The main idea is to anticipate a quantum field theory, defined in terms of a system of local von Neumann algebras \cite{Haag:1996}, and the connection between the modular data of the algebra corresponding to the half line $\Rl_+$ and the one-parameter group of dilations $u\mapsto e^{-2\pi\la}u$ which leaves $\Rl_+$ invariant, and the reflection $u\mapsto-u$, which flips $\Rl_+$ into $\Rl_-$ (``Bisognano-Wichmann property'', \cite{BisognanoWichmann:1976}). We define
\begin{align}\label{eq:Delta}
     \Delta^{it}:=U_o(0,-2\pi t)\,,
\end{align}
and view this as either an operator on $L^2(\Rl,d\te)$, or on $\Hil_1\cong L^2(\Rl,d\te)\ot\K$, where it just acts on the left factor. 

If $f\in\Ss(\Rl)$ is supported on the right, $\supp f\subset\Rl_+$, the function $f^+$ \eqref{eq:fpm} has an $L^2$-bounded analytic continuation to the strip $S_\pi:=\{\te\in\Cl\,:\,0<\im(\te)<\pi\}$, with $f^+(\te+i\pi)=f^-(\te)$, $\te\in\Rl$. For $g$ supported on the left instead, $\supp g\subset\Rl_-$, the same properties hold for $g^-$ \eqref{eq:fpm} \cite[Lemma~4.1]{BostelmannLechnerMorsella:2011}. In terms of the operator $\Delta$ \eqref{eq:Delta}, this means
\begin{align*}
	f^+\in\dom\Delta^{-1/2},\;\Delta^{-1/2}f^+=f^-\quad\text{for}\;\supp f\subset\Rl_+
	\,,\\
	g^+\in\dom\Delta^{1/2},\;\Delta^{1/2}g^+=g^-\quad\text{for}\;\supp g\subset\Rl_-\,.
\end{align*}
The conditions (R6')---(R7') imply that $R$ has matching analyticity properties. Namely, the matrix elements of $(1\ot\Delta^{it})\boldsymbol{R}(\Delta^{-it}\ot1)$ are,
$\bpsi,\bpsi',\bvarphi,\bvarphi'\in\Hil_1$,
	\begin{align*}
		\langle\bpsi'\ot\bpsi,(1\ot\Delta^{it})\boldsymbol{R}(\Delta^{-it}\ot1)\bvarphi\ot\bvarphi'\rangle
		=
		\int d^2\te\langle\bpsi'(\te_1)\ot\bpsi(\te_2),\,R_{\te_1-\te_2-2\pi t} \bvarphi(\te_2)\ot\bvarphi'(\te_1)\rangle_{\K\tp{2}}
		\,,
	\end{align*}
	and therefore analytically continue to $-\frac{1}{2}<\im(t)<0$ in view of (R6'), with the boundary value
	\begin{align}
		\langle\bpsi'\ot\bpsi,(1\ot\Delta^{1/2})\boldsymbol{R}(\Delta^{-1/2}\ot1)&\bvarphi\ot\bvarphi'\rangle
		=
		\int d^2\te\langle\bpsi'(\te_1)\ot\bpsi(\te_2),\,R_{\te_1-\te_2+i\pi}\bvarphi(\te_2)\ot\bvarphi'(\te_1)\rangle_{\K\tp{2}}
		\nonumber
		\\
		&=
		\int d^2\te\langle\bpsi(\te_2)\ot\Gamma\bvarphi'(\te_1),\,R_{\te_2-\te_1}\Gamma\bpsi'(\te_1)\ot\bvarphi(\te_2)\rangle_{\K\tp{2}}
		\nonumber
		\\
		&=
		\langle\bpsi\ot\Theta\bvarphi',\,\boldsymbol{R}(\Theta\bpsi'\ot\bvarphi)\rangle
		\,.
		\label{eq:full-crossing}
	\end{align}
	These relations imply the following theorem.

\begin{theorem}\label{theorem:commutation}
	Let $k,k'\in\K$. Then, in the sense of distributions,   
	\begin{align}
		[\phi_{R,k}(u),\,\phi'_{R,k'}(u')\,]=0\quad\text{for}\quad u<u'\,, 
	\end{align}
	on the space of vectors of finite Fock particle number.
\end{theorem}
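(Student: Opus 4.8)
The plan is to establish the support-separated, smeared form of the identity and then read off the distributional statement. Using the $\PG_o$-covariance of both fields, any configuration $u<u'$ can be mapped by a translation--dilation to one in which the test functions satisfy $\supp f\subset\Rl_-$ and $\supp g\subset\Rl_+$, so it suffices to prove $[\phi_{R,k}(f),\phi'_{R,k'}(g)]=0$ on the finite-particle-number domain for such $f,g$. Writing the right-local field \eqref{eq:tcp-field-general} as $\phi'_{R,k'}(g)=-c^\dagger(g^+\ot k')-c(\overline{g^-}\ot\Gamma k')$ with the ``right'' operators $c^\dagger(\eta):=\Theta\zd_R(\Theta\eta)\Theta$ and $c(\eta):=c^\dagger(\eta)^*=\Theta z_R(\Theta\eta)\Theta$ of \eqref{eq:zd'}, and expanding $\phi_{R,k}(f)$ via \eqref{eq:def-phiR}, the commutator splits into four pieces. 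The creation--creation and annihilation--annihilation pieces vanish identically by \eqref{eq:zz'comm}, so the whole problem reduces to the cancellation of the two mixed pieces, $\mathrm{(B)}:=-[\zd_R(f^+\ot k),\,c(\overline{g^-}\ot\Gamma k')]$ and $\mathrm{(C)}:=-[z_R(\overline{f^-}\ot\Gamma k),\,c^\dagger(g^+\ot k')]$.

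First I would compute these two commutators. Using \eqref{eq:defzdR}, \eqref{eq:zd'} and $P_2^R=\tfrac12(1+\boldsymbol{R})$, a short calculation on a one-particle vector $\omega\in\Hil_1$ yields the contraction formulas $[z_R(\bxi),c^\dagger(\eta)]\,\omega=(\langle\bxi,\cdot\rangle\ot1)\,\boldsymbol{R}(\omega\ot\eta)$ and $[c(\eta),\zd_R(\bxi)]\,\omega=(1\ot\langle\eta,\cdot\rangle)\,\boldsymbol{R}(\bxi\ot\omega)$. Pairing $\mathrm{(B)}+\mathrm{(C)}$ with a test vector $\chi\in\Hil_1$ then reduces the desired cancellation on the one-particle level to the single scalar identity $\langle\chi\ot(\overline{g^-}\ot\Gamma k'),\,\boldsymbol{R}((f^+\ot k)\ot\omega)\rangle=\langle(\overline{f^-}\ot\Gamma k)\ot\chi,\,\boldsymbol{R}(\omega\ot(g^+\ot k'))\rangle$, to hold for all $\chi,\omega$. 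This is precisely where crossing symmetry must enter.

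To prove this identity I would feed it into the crossing chain \eqref{eq:full-crossing}, the operator incarnation of (R6')--(R7'). The support hypotheses render the modular factors in \eqref{eq:full-crossing} harmless: for $\supp f\subset\Rl_-$ one has $f^+\in\dom\Delta^{1/2}$ with $\Delta^{1/2}f^+=f^-$, and for $\supp g\subset\Rl_+$ one has $g^+\in\dom\Delta^{-1/2}$ with $\Delta^{-1/2}g^+=g^-$. Choosing the vectors in \eqref{eq:full-crossing} so that $\Delta^{-1/2}$ acts on the $f$-leg and $\Delta^{1/2}$ on the $g$-leg (which interchanges $f^+\!\leftrightarrow\!f^-$ and $g^+\!\leftrightarrow\!g^-$, using $C\Delta^{s}=\Delta^{-s}C$ on the conjugated modes) identifies the left-hand side of \eqref{eq:full-crossing} with the $\mathrm{(B)}$-kernel; a further short manipulation with the elementary symmetries of $\boldsymbol{R}$ --- the twisted symmetry (R3), unitarity (R1) and involutivity (R5), which together give $\boldsymbol{R}^*=\boldsymbol{R}$ and $(\Theta\ot\Theta)\boldsymbol{R}=F\boldsymbol{R}F(\Theta\ot\Theta)$ --- rewrites the right-hand side of \eqref{eq:full-crossing} into the $\mathrm{(C)}$-kernel. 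This establishes the scalar identity, hence $\mathrm{(B)}+\mathrm{(C)}=0$ on one-particle vectors.

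The main obstacle is to propagate this cancellation to all of $\bigoplus_n\Hil_n^R$. On an $n$-particle vector the mixed commutators are no longer two-point contractions: annihilating on the left while creating on the right forces the contracted leg to be transported past the $n-1$ spectators, producing an ordered product of $\boldsymbol{R}$-factors dictated by the braid/permutation representation $D_n^R$ underlying $P_{n+1}^R$. I would control these by carrying the spectators through unchanged with the help of covariance (R2') and unitarity (R1'), reducing each term leg by leg to the two-point kernel already handled; the $\boldsymbol{R}$-orderings generated in $\mathrm{(B)}$ and in $\mathrm{(C)}$ are mirror images, and the Yang-Baxter equation (R4') together with crossing \eqref{eq:full-crossing} matches them term by term, exactly as in \cite{BostelmannLechnerMorsella:2011,LechnerSchutzenhofer:2013}. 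Verifying that all combinatorial prefactors and $\boldsymbol{R}$-orderings agree is the laborious heart of the argument; the one-particle computation above is its essential mechanism.
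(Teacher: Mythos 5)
Your proposal is correct and follows essentially the same route as the paper's proof: split the commutator into four pieces, kill the $[\zd_R,\Theta\zd_R\Theta]$ and $[z_R,\Theta z_R\Theta]$ pieces via \eqref{eq:zz'comm}, reduce the two mixed pieces on one-particle vectors to a single scalar identity between $\boldsymbol{R}$-matrix elements, and establish that identity from the modular domain properties of $f^\pm,g^\pm$ together with the crossing chain \eqref{eq:full-crossing}, deferring the finite-particle-number combinatorics to \cite{LechnerSchutzenhofer:2013,BischoffTanimoto:2013}. Your contraction formulas $[z_R(\bxi),c^\dagger(\eta)]\,\omega=(\langle\bxi,\cdot\rangle\ot1)\boldsymbol{R}(\omega\ot\eta)$ and $[c(\eta),\zd_R(\bxi)]\,\omega=(1\ot\langle\eta,\cdot\rangle)\boldsymbol{R}(\bxi\ot\omega)$ are a slightly tidier packaging of the paper's explicit matrix-element computation \eqref{eq:mixed-comm}, but the mechanism is identical.
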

\begin{proof}
	The proof follows the same strategy as \cite{Lechner:2003}, generalized in \cite{LechnerSchutzenhofer:2013, BischoffTanimoto:2013}. We introduce as a shorthand the vector-valued functions $\bof:=f\ot k$, $\bg:=g\ot k'$, with field operators $\phi_R(\bg)=\phi_{R,k}(g)$ and $\phi_R'(\bof)=\phi_{R,k'}'(f)$. Here $f,g\in\Ss(\Rl)$ are scalar test functions, and to prove the theorem, we have to demonstrate $[\phi_R(\bg),\phi_R'(\bof)]=0$ for $\supp f\subset\Rl_+$, $\supp g\subset\Rl_-$.
	
	We pick $\bpsi,\bvarphi\in\Hil_1$ and compute using \eqref{eq:zz'comm}, \eqref{eq:zd'}, and the definitions
	\begin{align}
		\langle\bpsi,\,[\phi_R(\bg),\,\phi'_R(\bof)\,]\,\bvarphi\rangle
		&=
		\langle\bpsi,\,\left([\zd_R(\bg^+),\Theta z_R(\bof^-)\Theta]+[z_R(\Theta\bg^-),\Theta\zd_R(\Theta\bof^+)\Theta]\right)\,\bvarphi\rangle
		\nonumber
		\\
		&=
		\langle\langle\bg^+,\bpsi\rangle\Om,\Theta\langle \bof^-,\Theta\bvarphi\rangle\Om\rangle
		-2\langle P_2^R(\bpsi\ot \Theta\bof^-),P_2^R(\bg^+\ot\bvarphi)\rangle
		\nonumber
		\\
		&\quad+
		2\langle P_2^R(\Theta\bg^-\ot\bpsi),P_2^R(\bvarphi\ot \bof^+)\rangle
		-\langle \Theta\langle \Theta\bof^+,\Theta\bpsi\rangle\Om,\langle \Theta\bg^-,\bvarphi\rangle\Om\rangle
		\nonumber
		\\
		&=
		\langle \Theta\bg^-\ot\bpsi,\,\boldsymbol{R}(\bvarphi\ot \bof^+)\rangle
		-
		\langle\bpsi\ot \Theta\bof^-,\,\boldsymbol{R}(\bg^+\ot\bvarphi)\rangle
		\,.
		\label{eq:mixed-comm}
	\end{align}
	To show that these terms cancel in case $R$ satisfies (R6') and (R7'), we consider the first term in \eqref{eq:mixed-comm}, and insert an identity $1=(\Delta^{it}\ot\Delta^{it})(\Delta^{-it}\ot\Delta^{-it})$, $t\in\Rl$, in front of $\boldsymbol{R}$. In view of the invariance (R2) of $\boldsymbol{R}$, we then see that this scalar product equals
	\begin{align}\label{eq:asas}
		\langle\Delta^{-it-1/2}\Theta\bg^+\ot\bpsi, \,(1\ot\Delta^{it})\boldsymbol{R}(\Delta^{-it}\ot1)\;(\bvarphi\ot\Delta^{-it}\bof^+)\rangle
		\,.
	\end{align}
	The vectors in the left and right hand entry of the scalar product are analytic in the strip $-\frac{1}{2}<\im(t)<0$ (taking into account the antilinearity of the left factor, and the fact that $\bof^+\in\dom\Delta^{-1/2}$ in the right factor). As explained above, the same analyticity holds for the operator-valued function $t\mapsto(1\ot\Delta^{it})\boldsymbol{R}(\Delta^{-it}\ot1)$. Taking into account the boundary values $\Delta^{-1/2}\bof^+=\bof^-$ and \eqref{eq:full-crossing}, we see that \eqref{eq:asas} coincides with
	\begin{align*}
		\langle\Theta\bg^+\ot\bpsi, \,(1\ot\Delta^{1/2})\boldsymbol{R}(\Delta^{-1/2}\ot1)\,(\bvarphi\ot \bof^-)\rangle
		=
		\langle\bpsi\ot\Theta\bof^-,\,\boldsymbol{R}(\bg^+\ot\bvarphi)\rangle\,,
	\end{align*}
	which is identical to the second term in \eqref{eq:mixed-comm}. Thus the matrix elements of $[\phi_R(\bof),\phi'_R(\bg)]$ between single particle states vanish. 
	
	Using the same arguments as in \cite{LechnerSchutzenhofer:2013, BischoffTanimoto:2013}, one shows analogously that matrix elements between arbitrary vectors of finite particle number vanish.	
\end{proof}

We may interpret this commutation theorem by regarding $\phi_{R,k}(u)$ as localized in the left halfline $(-\infty,u)$, and $\phi'_{R,k'}(u')$ as localized in the right halfline $(u',\infty)$. This interpretation is consistent with both, covariance and locality. The fields $\phi_R$, $\phi_R'$ should however not be regarded as the ``physical'' quantum fields of the model, but rather as auxiliary objects \cite{BorchersBuchholzSchroer:2001}. To proceed to the physical observables/fields, localized in finite intervals on the lightray, it is helpful to use an operator-algebraic setting.

\bigskip

By construction, the field operators satisfy (on a suitable domain) $\phi_{R,k}(u)^*=\phi_{R,k}(u)$ if $\Gamma k=k$. One can show by the same method as in \cite{LechnerSchutzenhofer:2013} that $\phi_{R,k}(f)$ is then essentially selfadjoint. In this case, we can form the unitaries $e^{i\phi_{R,k}(f)}$ by the functional calculus, and pass to the generated von Neumann algebra
\begin{align}\label{eq:M}
	\M_R
	:=
	\{e^{i\phi_{R,k}(f)}\,:\,\overline{f}\ot \Gamma k=f\ot k\,,\supp f\subset\Rl_-\}''\subset\B(\Hil^R)\,.
\end{align}

To conclude the present general section, we point out a few further properties of the algebra $\M_R$, including Haag duality with the algebra generated by the reflected field.

\begin{proposition}
	\begin{enumerate}
		\item The Fock vacuum $\Om$ is cyclic and separating for $\M_R$. 
		\item The modular conjugation of $(\M_R,\Om)$ is $J=\Theta$, and the modular operator is the previously defined $\Delta$.
		\item The commutant of $\M_R$ is
		\begin{align}
			\M_R'
			=
			\Theta\M_R \Theta
			=
			\{e^{i\phi'_{R,k}(f)}\,:\,\overline{f}\ot \Gamma k=f\ot k,\;\supp f\subset\Rl_+\}''
			\,.
		\end{align}
	\end{enumerate}
\end{proposition}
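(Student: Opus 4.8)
The plan is to follow the modular-localization strategy of \cite{Lechner:2003,LechnerSchutzenhofer:2013}, with the commutation Theorem~\ref{theorem:commutation} as the central geometric input and Tomita--Takesaki theory to extract the commutant. The three assertions are tightly linked: once $\Om$ is cyclic and separating and the modular objects of $(\M_R,\Om)$ are identified with the second quantized operators $(\Delta,\Theta)$, part b is done, the relation $\M_R'=\Theta\M_R\Theta$ of part c follows from Tomita--Takesaki, and the explicit generating set for $\Theta\M_R\Theta$ comes from the reflection relation \eqref{eq:tcp-field-general}. So the substantive work is (i) cyclicity and the separating property, and (ii) the identification of the modular data.

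For cyclicity I would first work at the one-particle level, where $\phi_{R,k}(f)\Om=f^+\ot k$. Consider the closed real subspace $H_1\subset\Hil_1$ spanned by the vectors $f^+\ot k$ with $\overline{f}\ot\Gamma k=f\ot k$ and $\supp f\subset\Rl_-$. Using the analyticity statements recalled before Theorem~\ref{theorem:commutation} (namely $f^+\in\dom\Delta^{1/2}$ with $\Delta^{1/2}f^+=f^-$ for $\supp f\subset\Rl_-$, together with $\overline{f^-}=(\overline f)^+$), one checks that $H_1$ is a standard subspace whose Tomita operator is $\Theta\Delta^{1/2}$; its cyclicity, $\overline{H_1+iH_1}=\Hil_1$, follows from a Paley--Wiener/Hardy-space argument as in \cite{BostelmannLechnerMorsella:2011}. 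Applying products of the creation operators $\zd_R(f^+\ot k)$ to $\Om$ then yields $P_n^R$ of dense subsets of $\Hil_1\tp{n}$, hence a dense set in each $\Hil_n^R$, so that $\Om$ is cyclic for $\M_R$.

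The separating property I would obtain directly from the commutation theorem. A short computation with \eqref{eq:tcp-field-general} gives $\Theta\phi_{R,k}(u)\Theta=\phi'_{R,\Gamma k}(-u)$, so that $\Theta\M_R\Theta$ equals the algebra generated by the reflected fields $e^{i\phi'_{R,k}(f)}$ with $\supp f\subset\Rl_+$; this already yields the second displayed equality of part c. By Theorem~\ref{theorem:commutation} these reflected fields commute with the generators of $\M_R$, whence $\Theta\M_R\Theta\subseteq\M_R'$. Since $\Theta\Om=\Om$ and $\Om$ is cyclic for $\M_R$, it is cyclic for $\Theta\M_R\Theta$ and hence for the larger algebra $\M_R'$, which is exactly the statement that $\Om$ is separating for $\M_R$. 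This completes part a. With $\Om$ cyclic and separating, let $S_{\M_R}=J\Delta_{\M_R}^{1/2}$ be the associated Tomita operator. I would identify $J=\Theta$ and $\Delta_{\M_R}=\Delta$ by showing that the second quantized operator $\Theta\Delta^{1/2}$ is the closure of $A\Om\mapsto A^*\Om$ on $\M_R\Om$. The structural prerequisites are immediate: $\Theta^2=1$, $\Theta\Om=\Om$, $\Theta\Delta^{it}\Theta=\Delta^{it}$ (so $\Theta\,\Delta^{1/2}$ is a genuine polar factorization with antiunitary part $\Theta$ and positive part $\Delta^{1/2}$), and $\Delta^{it}\M_R\Delta^{-it}=\M_R$, the latter because $\Delta^{it}=U_o(0,-2\pi t)$ implements the dilation $u\mapsto e^{2\pi t}u$ preserving $\Rl_-$. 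On one-particle vectors the Tomita relation is the computation $\Theta\Delta^{1/2}(f^+\ot k)=\Theta(f^-\ot k)=\overline{f^-}\ot\Gamma k=(\overline f)^+\ot\Gamma k=\phi_{R,\Gamma k}(\overline f)\Om=\big(\phi_{R,k}(f)\big)^*\Om$. By uniqueness of the polar decomposition of $S_{\M_R}$ this gives $J=\Theta$ and $\Delta_{\M_R}=\Delta$, and then Tomita--Takesaki yields $\M_R'=\Theta\M_R\Theta$, upgrading the earlier inclusion to the equality in part c and settling part b.

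The main obstacle is this last verification on \emph{multi-particle} vectors: one must show that $\M_R\Om$ is a core for $\Theta\Delta^{1/2}$ and that $\Theta\Delta^{1/2}A\Om=A^*\Om$ for products $A$ of field operators. Because the $R$-deformed symmetrization $P_n^R$ entangles the tensor factors, the naive factorization available in the free case ($R=\pm F$) breaks down, and commuting $\Delta^{1/2}$ past $\boldsymbol{R}$ requires precisely the strip analyticity (R6') and the crossing relation (R7')---the very mechanism, encoded in \eqref{eq:full-crossing}, that produced the cancellation in the proof of Theorem~\ref{theorem:commutation}. A secondary technical point, treated exactly as in \cite{LechnerSchutzenhofer:2013}, is establishing essential self-adjointness of the $\phi_{R,k}(f)$ and enough domain control on the finite-particle subspace so that $\M_R$, its action on $\Om$, and the manipulations above are all well defined.
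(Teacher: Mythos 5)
Your overall architecture matches the paper's: part a) via cyclicity plus the commutation theorem (upgraded to bounded operators) to obtain the separating property, part b) by identifying the Tomita operator of $(\M_R,\Om)$ with $\Theta\Delta^{1/2}$, and part c) from b) via Tomita--Takesaki theory. Indeed the paper compresses a) into a citation of \cite{LechnerSchutzenhofer:2013} together with the analytic-vector argument of \cite{Lechner:2003} (which you correctly flag as the device needed to pass from commutation of the unbounded fields on finite-particle vectors to commutation of the exponentiated unitaries), and it defers b) entirely to \cite{BuchholzLechner:2004,Alazzawi:2014}. Your one-particle computations are correct: the relation $\Theta\phi_{R,k}(u)\Theta=\phi'_{R,\Gamma k}(-u)$, the identity $\Theta\Delta^{1/2}(f^+\ot k)=(\overline{f})^+\ot\Gamma k=\phi_{R,k}(f)^*\Om$, and the derivation of the separating property from $\Theta\M_R\Theta\subseteq\M_R'$ together with $\Theta\Om=\Om$ are exactly as in the paper.

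The genuine gap is in your plan for part b). You propose to prove that $\M_R\Om$ is a core for $\tilde S:=\Theta\Delta^{1/2}$ and that $\tilde S A\Om=A^*\Om$ for products of fields, and you yourself call the core property ``the main obstacle'' without resolving it. This is not a technicality that strip analyticity and crossing will supply: crossing gives the algebraic identity, but the core property is essentially equivalent to the theorem itself, and neither the paper nor the references it invokes prove it head-on. The argument that actually closes b) in this circle of papers runs differently and sidesteps the core question entirely. Observe first that $\M_R\Om$ is \emph{automatically} a core for the true Tomita operator $S$ of $(\M_R,\Om)$ (that is its definition), so the crossing computation $\tilde S A\Om=A^*\Om$ for $A\in\M_R$ already yields the inclusion $S\subseteq\tilde S$, with no core statement about $\tilde S$ required. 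The converse inclusion is then obtained abstractly: the geometric dilations $\Delta^{it}=U_o(0,-2\pi t)$ satisfy $\Delta^{it}\M_R\Delta^{-it}=\M_R$ (they preserve $\Rl_-$) and $\Delta^{it}\Om=\Om$, so the standard subspace $\overline{\M_R^{\rm sa}\Om}$ is invariant under the modular group of the larger standard subspace attached to $\tilde S$; an inclusion of standard subspaces in which the smaller one is invariant under the modular group of the larger one is an equality (Takesaki's theorem on invariant subalgebras, applied via second quantization, or equivalently uniqueness of the modular group under the KMS condition). This forces $S=\tilde S$, whence $J=\Theta$ and $\Delta_{\M_R}=\Delta$ by uniqueness of the polar decomposition, and c) follows. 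As it stands, your part b) rests on an unproven -- and by your own admission unprovable-by-you -- core claim, so the proposal is incomplete precisely at the step the proposition is really about.
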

\begin{proof}
	$a)$ It follows by standard arguments that $\Om$ is cyclic for $\M_R$ and $\tilde\M_R:=\{e^{i\phi'_{R,k}(f)}\,:\,\overline{f}\ot \Gamma k=f\ot k,\;\supp f\subset\Rl_+\}''$ \cite{LechnerSchutzenhofer:2013}. As in \cite{Lechner:2003}, one shows by an analytic vector argument that the unitaries $\exp(i\phi_{R,k}(g))$ and $\exp(i\phi'_{R,k'}(f))$ commute for $\supp f\subset\Rl_+$, $\supp g\subset\Rl_-$. Hence $\tilde\M_R\subset\M_R'$, and $\Om$ is also separating for $\M_R$ (and $\tilde\M_R$).
	
	$c)$ is a straightforward consequence of $b)$ by Tomita-Takesaki theory and the definitions. The proof of $b)$ follows by the same line of arguments as in \cite{BuchholzLechner:2004, Alazzawi:2014}.
\end{proof}

\subsection{Crossing-symmetric Yang-Baxter functions for $\boldsymbol{SO^\uparrow(d,1)}$}

In this section we construct examples of crossing symmetric Yang-Baxter functions $R\in\R_{\rm fct}(V_\nu,\Gamma_\nu)$ for the $SO^\uparrow(d,1)$ representations and conjugations $V_\nu$, $\Gamma_\nu$ from Section~\ref{section:YBOps-LorentzGroup}. This will be done by exploiting the freedom to adjust our basic $SO^\uparrow(d,1)$-invariant Yang-Baxter function $R$ \eqref{Rdef} by scaling the parameter $\te$ and multiplying by a suitable function of $\te$ (Proposition~\ref{proposition:tweak-R}). We define
\begin{align}\label{eq:R-scaled}
	\Rti_\te^{\nu_1\nu_2}:=\sigma_{\nu_1\nu_2}(-\tfrac{\alpha}{\pi}\,\te)\cdot R^{\nu_1\nu_2}_{-\frac{\alpha}{\pi}\te}
	\,.
\end{align}
where $\alpha=\frac{d-1}{2}$ as before, and 
\begin{align}\label{eq:sigma-main}
	\sigma_{\nu_1\nu_2}(\te)
	&:=
	\frac{\Gamma(\alpha+\epsilon-i\te)^2}{\Gamma(\alpha+\epsilon+i\te)^2} \prod_{n=0}^\infty \bigg\{
	\frac{\Gamma(i\theta + \alpha(2n+1)-\epsilon)^2 \Gamma(-i\theta + \alpha(2n-1)-\epsilon)^2}{\Gamma(-i\theta+\alpha(2n+1)-\epsilon)^2 \Gamma(i\theta + \alpha(2n-1)-\epsilon)^2} \nonumber \\
	& \quad
	\cdot \prod_{p,q=\pm} \frac{\Gamma(i\theta+\alpha(2n+1) - p \nu_{12}^q) \Gamma(-i\theta+\alpha(2n+2) - p \nu_{12}^q)}{\Gamma(-i\theta+\alpha(2n+1) - p \nu_{12}^q) \Gamma(i\theta+\alpha(2n+2) - p \nu_{12}^q)}  \bigg\} ,
\end{align}
with $\nu_{12}^\pm:=\frac{i}{2}(\nu_1\pm\nu_2)$. $\epsilon$ is a real parameter chosen below depending on the representations. It is seen that the infinite 
product converges absolutely in each case considered. 
It is shown in Appendix~B that $\sigma_{\nu_1\nu_2}$ satisfies the requirements of Prop.~\ref{proposition:tweak-R}.

\begin{theorem}\label{theorem:crossing}
	Let $\nu_1,\nu_2\in\Rl$ label two principal series representations, $\te\in\Rl$, and $\Rti_\te^{\nu_1\nu_2}$ the integral operators defined above
	with $\epsilon=0$ in \eqref{eq:sigma-main}. Then 
	\begin{itemize}
	\item[(R6'')] $\te\mapsto\Rti_\te^{\nu_1\nu_2}$ extends to an analytic bounded function on the strip $0<\im(\te)<\pi$, and 
		\item[(R7'')] Crossing symmetry holds: If $\psi_1,\psi_1'\in\K_{\nu_1}$, $\xi_2,\xi_2'\in\K_{\nu_2}$, then 
		\begin{align}
			\langle\xi_2\ot\psi_1,\,\Rti^{\nu_1\nu_2}_{i\pi-\te}\,(\psi_1'\ot\xi'_2)\rangle
			=
			\langle\psi_1\ot\Gamma_{\nu_2}\xi_2',\,\Rti_{\te}^{\nu_2\nu_1}(\Gamma_{\nu_2}\xi_2\ot\psi_1')\rangle
			\,.
		\end{align}
	\end{itemize}
The same holds true if $\nu_1=\nu_2=\nu$ belong to a complementary or discrete series representation, and 
we set $\epsilon=i\nu$ in \eqref{eq:sigma-main}. Thus for all principal and complementary series representations, $\Rti^{\nu\nu}$ is crossing-symmetric in the sense of Def.~\ref{definition:Crossing-Symmetric-YB-Operator}.
\end{theorem}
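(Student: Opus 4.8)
The plan is to prove the two assertions separately, reducing both to explicit manipulations of the power-law kernel \eqref{Rdef-kernel} together with careful bookkeeping of the Gamma-function prefactor \eqref{eq:c} and the scalar dressing factor \eqref{eq:sigma-main}. Throughout I would use that the dressed and rescaled operator $\Rti_\te^{\nu_1\nu_2}=\sigma_{\nu_1\nu_2}(-\tfrac{\alpha}{\pi}\te)\,R^{\nu_1\nu_2}_{-\frac{\alpha}{\pi}\te}$ already satisfies (R1'')--(R5'') by Theorem~\ref{theorem:R-dS} and Proposition~\ref{proposition:tweak-R}, so that only the genuinely new analytic input (R6'')--(R7'') is at stake. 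For (R6'') I would first note that in \eqref{Rdef-kernel} the four factors $(P_i\cdot P_j)^{w(\te)}$ depend on $\te$ only through exponents $w(\te)$ that are affine in $\te$; with the branch fixed by the $i\epsilon$-prescription already in force, each such factor is, for fixed momenta, an entire function of $\te$. Hence the only candidate singularities on the strip $0<\im\te<\pi$ come from the poles of $c_{\nu_1\nu_2}$ \eqref{eq:c} (after the scaling $\te\mapsto-\tfrac{\alpha}{\pi}\te$) and from $\sigma_{\nu_1\nu_2}$ \eqref{eq:sigma-main}. For the stated choices $\epsilon=0$ (principal) and $\epsilon=i\nu$ (complementary/discrete) I would check that the arguments of all Gamma functions stay off $-\Nl_0$ on the open strip, and that the infinite product in \eqref{eq:sigma-main} converges absolutely and locally uniformly there (the convergence already asserted after \eqref{eq:sigma-main}).

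The heart of the matter is the crossing relation (R7''). The strategy is to write both sides as iterated integrals over the orbital base $B$ and to match them factor by factor. On the left, $\Rti^{\nu_1\nu_2}_{i\pi-\te}$ is evaluated via the continuation from (R6''): under the scaling, $\te\mapsto i\pi-\te$ sends the geometric parameter to $-i\alpha-s$ with $s=-\tfrac{\alpha}{\pi}\te$, which shifts the four exponents in \eqref{Rdef-kernel} so that the $s$-channel factor $(P_1\cdot P_2)$ and the factor $(P_1'\cdot P_2')$ each acquire an extra $-\alpha$, i.e. they turn into intertwiner-type kernels of exactly the shape appearing in $I_\nu$ \eqref{Idef1}. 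On the right I would expand the conjugations $\Gamma_{\nu_2}$ using \eqref{Thdef}: for the complementary and discrete series $\Gamma_{\nu_2}$ is pointwise complex conjugation, while for the principal series it is complex conjugation composed with $I_{\nu_2}$, whose kernel $\sim(P\cdot P')^{-\alpha+i\nu_2}$ introduces two additional $B$-integrations. The composition relation \eqref{eq:Composition-Relation} — and, where exponents add up critically, the delta relation \eqref{Delta} — is then invoked to carry out these intermediate integrations, collapsing the extra intertwiner kernels on the right against the shifted $-\alpha$-factors produced on the left, so that the two momentum dependences coincide.

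What then remains is a purely scalar identity among the prefactors. The continued constant $c_{\nu_1\nu_2}(i\pi-\te)$ (scaled), the constants produced by \eqref{eq:Composition-Relation}, the phases picked up when the boundary-value powers $(P\cdot P')^{w}$ cross the light cone — the analogue of the relation $(-X\cdot P)^{-\alpha-i\nu}_-=e^{i\pi\alpha-\pi\nu}(X\cdot P)^{-\alpha-i\nu}_+$ used in Lemma~\ref{lemma:fourier-helgason} — together with $\sigma_{\nu_1\nu_2}(-\tfrac{\alpha}{\pi}(i\pi-\te))$, must combine to reproduce $c_{\nu_2\nu_1}(\te)\,\sigma_{\nu_2\nu_1}(-\tfrac{\alpha}{\pi}\te)$ on the right. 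The product representation \eqref{eq:sigma-main} is engineered precisely so that its shift-by-$i\pi$ behaviour in $\te$ telescopes against the ratio of Gamma functions in $c$; verifying this amounts to a functional equation relating $\sigma_{\nu_1\nu_2}(\te)$ and $\sigma_{\nu_2\nu_1}(i\pi-\te)$ to a finite Gamma ratio, which follows by repeated use of $\Gamma(z+1)=z\,\Gamma(z)$ on the infinite product. Once the boundary value at $\im\te=\pi$ is identified in this way, the remaining interior boundedness needed for (R6'') follows by a three-lines (Phragmén--Lindel\"of) argument applied to matrix elements: they are bounded on $\im\te=0$ by unitarity (R1''), bounded on $\im\te=\pi$ by the crossed expression just computed, and locally bounded inside.

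I expect the main obstacle to be exactly this last assembly: correctly tracking the phases that arise when the Minkowski inner products $P\cdot P'$ change sign (light-cone crossing) and ensuring they organize into the crossing phase rather than an anomalous factor, and then checking that the delicate Gamma-function identity closes simultaneously for all three series — with the single parameter $\epsilon$ distinguishing the principal case ($\epsilon=0$) from the complementary and discrete cases ($\epsilon=i\nu$). For the discrete series there is the additional check, as in Theorem~\ref{theorem:R-dS}, that the continued operator still preserves the constraint \eqref{discr}, which again rests on the delta relation \eqref{Delta}.
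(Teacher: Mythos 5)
Your overall strategy mirrors the paper's: reduce (R7'') to an identity of integral kernels, absorb the conjugations $\Gamma_{\nu_2}$ as intertwiners $I_{\pm\nu_2}$, and push everything into a scalar functional equation that $\sigma_{\nu_1\nu_2}$ must satisfy. However, the central computational step, as you describe it, would fail. After writing the right-hand side as $(1\ot I_{\nu_2})R_\te^{\nu_2\nu_1}(I_{-\nu_2}\ot1)$, each intermediate integration vertex over the orbital base has \emph{three} propagator factors $(P\cdot Q)^{w}$ attached to it, with exponents that do not sum to the critical value $-(d-1)$. The composition relation \eqref{eq:Composition-Relation} and the delta relation \eqref{Delta} apply only to two-leg vertices (respectively to critical exponent sums) and produce delta functions; they cannot carry out these integrations. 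The identity that actually does the work is the Symanzik star-triangle (triality) relation \eqref{eq:triality}, applied repeatedly to convert stars into triangles until the momentum dependence of the right-hand side matches that of the analytically continued kernel $R^{\nu_1\nu_2}_{-i\alpha-\te}$; this graphical manipulation is the heart of the paper's proof and is absent from your plan. (The composition relation is only the degenerate $w_3\to0$ limit of triality and is used in the paper for (R1'')/(R5''), not for crossing.)

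Two further points need repair. First, your claim that the resulting functional equation — in the paper's normalization, $\sigma_{\nu_1\nu_2}(-i\alpha-\te)/\sigma_{\nu_2\nu_1}(\te)$ equals an explicit finite ratio of Gamma functions \eqref{eq:func-eq} — "follows by repeated use of $\Gamma(z+1)=z\,\Gamma(z)$" is not correct as stated: the shift is $i\te\mapsto i\te+\alpha$ with $\alpha=\tfrac{d-1}{2}$ generically non-integer, so the recursion of $\Gamma$ is irrelevant; what one needs is a telescoping under re-indexing $n\mapsto n\pm1$ of the infinite product \eqref{eq:sigma-main}, with control of convergence under rearrangement. The paper avoids this delicacy by going in the opposite direction: it \emph{constructs} $\sigma$ as the solution of this cocycle problem via a Fourier-integral representation ($\sigma=\Gamma$-prefactor $\times\, e^{if}$ with $f$ given by \eqref{eq:sigma}), proves the functional equation there by Fourier transformation, and only then derives the product formula \eqref{eq:sigma-main}; this route simultaneously yields the analyticity and boundedness of $\sigma$ in the strip. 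Second, your Phragm\'en--Lindel\"of argument for (R6'') is incomplete: the three-lines theorem requires an a priori growth bound throughout the strip, not mere local boundedness; the paper instead bounds each ingredient directly (integrability of the kernel singularities for $-\alpha<\im\te<0$, the asymptotics $|\Gamma(x+iy)|\sim(2\pi)^{1/2}|y|^{x-1/2}e^{-\pi|y|/2}$ giving at most polynomial growth of the prefactors, and decay of matrix elements against smooth wave functions).
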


In Appendix~B, we also provide an alternative integral representation of the 
factor $\sigma_{\nu_1, \nu_2}(\theta)$. In case we allow for arbitrary combinations of principal, complementary and discrete 
series representations $\nu_1,\nu_2$, the (rescaled) Gamma factors in \eqref{eq:c} can produce poles in the strip $0<\im(\te)<\pi$, 
which is the reason for our restriction to two principal series representations, or a single complementary resp. discrete series representation.

\bigskip

The function $\sigma_{\nu_1\nu_2}$ \eqref{eq:sigma-main} is precisely constructed in such a way that the crossing relation (R7'') holds\footnote{
This function may be viewed as a solution to a cocycle problem. A related, but not identical, problem of this general nature is treated in~\cite{Schmudgen:1995}.
}. There remains however a large freedom to modify $\sigma_{\nu_1\nu_2}$ without violating the conditions (R1'')--(R7''). In fact, we may multiply $\Rti^{\nu_1\nu_2}_\te$ by another function $\rho_{\nu_1\nu_2}(\te)=\rho_{\nu_2\nu_1}(\te)$, which is analytic and bounded on the strip $0<\im(\te)<\pi$, and has the symmetry properties
\begin{align}
	\rho_{\nu_1\nu_2}(i\pi+\te)
	=
	\rho_{\nu_1\nu_2}(-\te)
	=
	\overline{\rho_{\nu_1\nu_2}(\te)}
	=
	\rho_{\nu_1\nu_2}(\te)^{-1}\,,\qquad\te\in\Rl\,.
\end{align}
There exist infinitely many of such ``scattering functions'', they are given by all inner functions $\rho$ of the strip $0<\im(\te)<\pi$ with the symmetry properties $\rho(i\pi-\te)=\rho(\te)=\rho(-\te)^{-1}$. Using the canonical factorization of inner functions (see, for example, \cite{Duren:1970}), one can then write down explicit formulas for $\rho$. In particular, if $\rho$ contains no singular part, it has the form
\begin{align}
	\rho_{\nu_1\nu_2}(\te)
	=
	\pm e^{i\kappa\sinh\te}\prod_l\frac{\sinh\te-i\sinh b_l}{\sinh\te+i\sinh b_l}
	\,,
\end{align}
where $\kappa\geq0$, and the zeros $b_l$ have to satisfy $0<{\rm Im} b_k<\pi$ and certain symmetry and summability conditions to ensure $\rho(i\pi-\te)=\rho(\te)=\rho(-\te)^{-1}$ and convergence of the product \cite{Lechner:2006}.

\section{QFTs from $\boldsymbol{SO^\uparrow(d,1)}$-invariant Yang-Baxter functions}\label{section:qft}

We now use the crossing symmetric $SO^\uparrow(d,1)$-invariant Yang-Baxter function $\Rti$ \eqref{eq:R-scaled} to build concrete quantum field theoretic models. 
A first class of models will be constructed within the general setup of Section~\ref{section:crossing}, where now the group is taken as $G=SO^\uparrow(d,1)$, and we restrict ourselves to a principal or complementary series representation $V_\nu$.

In a second section, we show that by a variant of this construction, one also gets Euclidean conformal field theories in $(d-1)$ dimensions.

\subsection{CFTs with target de Sitter spacetime}\label{subsection:chiral-models}

Our first family of models is a concrete version of the abstract field operators in Section~\ref{section:crossing}. The symmetry group is here the direct product $\PG_o\times SO^\uparrow(d,1)$ of the translation-dilation group (acting on lightray coordinates), and the Lorentz group (acting on de Sitter coordinates). Group elements will be denoted as $(x,\la,\La)=(x,\la)\times\La$ in an obvious notation. 

This group is represented on $\Hil_1:=L^2(\Rl,d\te)\ot\K_\nu\cong L^2(\Rl\to\K_\nu,d\te)$ by the representation $V_1:=U_o\ot V_\nu$ as in Section~\ref{section:crossing}, where $V_\nu$ may belong to the principal or complementary series. That is, the single particle vectors are scalar functions $\psi:\Rl\times \dS_d\to\Cl$ depending on two ``momentum coordinates'', $\te\in \Rl$ and $P\in C_d^+$, and 
\begin{align}
	(V_1(x,\la,\La)\psi)(\te,P)=e^{ix\exp(\te)}\,\psi(\te-\la,\La^{-1}P)\,.
\end{align}
As TCP operator on this space, we take $\Theta_\nu:=\gamma_\nu\cdot C\ot\Gamma_\nu$, where $C$ denotes complex conjugation, $\Gamma_\nu$ is defined in \eqref{Thdef} and $\gamma_\nu$ is the phase factor \eqref{eq:gamma-factor}.

We then consider the $SO^\uparrow(d,1)$-invariant crossing-symmetric Yang-Baxter function $R=\Rti^{\nu\nu}\in\R_{\rm fct}(V_\nu,\Gamma_\nu)$ \eqref{eq:R-scaled}, which defines an $(\PG_0\times SO^\uparrow(d,1))$-invariant Yang Baxter operator $\boldsymbol{R}\in\R_{\rm op}(U_o\ot V_\nu,\Theta_\nu)$ as in Section~\ref{subsection:crossing-YB-ops}. In order not to overburden our notation, we have dropped the tilde from $R$, but still mean {\em the rescaled and multiplied version from} \eqref{eq:R-scaled}.

\bigskip

Following the general construction, we then obtain the $R$-symmetric Fock space $\Hil^R$. An $n$-particle vector $\Psi_n\in\Hil_n^R$ is here given by a function of $(2n)$ momentum space variables, namely $\Psi_n(\te_1,P_1,\ldots, \te_n,P_n)$, which is square integrable in each $\theta_j$ and homogeneous of 
degree $-\alpha - i\nu$ in each $P_j$ (as well as square integrable on any orbital base $B$ of the cone $C_d^+$). The $R$-symmetry is expressed by the equations
\begin{align}\label{eq:R-Symmetry-explicit}
	&\Psi_n(\te_1,P_1,\ldots,\te_n,P_n)
	\\
	&=
	\int\limits_{P_j', P_{j+1}'}
	R_{\te_{j}-\te_{j+1}}(P_j,P_{j+1};\,P'_{j+1},P_j')\,\Psi_n(\te_1,P_1,..,\te_{j+1},P_{j+1}',\te_j,P_{j}',..,\te_n,P_n)\,,
	\nonumber
\end{align}
to be satisfied for each $1\leq j\leq n-1$. Here, and in the rest of the section, we will usually use the shorthand
$\int_B \omega(P) = \int_P$ for an arbitrary orbital base $B$ to simplify the notation. 

\medskip

The representation $V$ of $\PG_0\times SO^\uparrow(d,1)$ takes explicitly the form, $\Psi_n\in\bHil_n^R$,
\begin{align*}
	(V(x,\la,\La)\Psi_n)(\te_1,P_1,...,\te_n,P_n)
	=
	\exp[ix(e^{\te_1}+...+e^{\te_n})] \cdot
	\Psi_n(\te_1-\la,\La^{-1} P_1,...,\te_n-\la,\La^{-1} P_n)
	\,.
\end{align*}
The explicit form of the TCP symmetry $\Theta$ is different for the two series: We have
\begin{align*}
	&(\Theta \Psi_n)(\te_1,P_1,...,\te_n,P_n)=
        \gamma_\nu^n\cdot\overline{\Psi_n(\te_n,P_n,...,\te_1,P_1)}
	\qquad \text{(complementary series)} \\
	&(\Theta\Psi_n)(\te_1,P_1,...,\te_n,P_n) = c_\nu^n \gamma_\nu^n \cdot
	\int\limits_{\{P_j'\}} 
	\prod_{j=1}^n (P_j \cdot P_j')^{-\alpha - i\nu}\, \overline{\Psi_n(\te_n,P_n',...,\te_1,P_1')} \\
	& \qquad\qquad\qquad\qquad\qquad\qquad\qquad \text{ (principal series) }
\end{align*}
with $c_\nu=(2\pi)^\alpha \Gamma(\alpha -i\nu)/\Gamma(i\nu)$.
For the creation/annihilation operators, we write informally, $\bxi\in\Hil_1$,
\ben
z^\#_R(\bxi)
=
\int_{\RR\times B}  \dd\te\wedge\omega(P)\, z^\#_R(\te,P) \,\bxi^\#(\te,P) \ , 
\een
where, informally, we take $z_R^\dagger (\te,\lambda\,P) = \lambda^{-\alpha + i\nu} z_R^\dagger (\te,P)$, so that the integral informally is independent of the choice of $B$, by the argument based on Lemma~\ref{lemma:closedform}.

The transformation law \eqref{eq:zdR-Covariance} then takes the form
\begin{align}\label{eq:Covariance-zR}
	V(x,\la,\La)z_R^\#(\te,P)V(x,\la,\La)^{-1}
	&=
	e^{\pm ix\,\exp(\te+\la)}\,z_R^\#(\te+\la,\La P)
	\,,
\end{align}
where the ``$+$'' sign is used for the creation operator $\zd_R$, and the ``$-$'' sign for the annihilation operator $z_R$.

The commutation relations are obtained in this informal but efficient notation as 
\bena\label{zccr}
	z_R(\theta_1, P_1) z_R(\theta_2, P_2) - R_{\theta_1-\theta_2} \ z_R(\theta_2, P_2) z_R(\theta_1, P_1) &=& 0\,,
	\\
	z_R^\dagger(\theta_1, P_1) z_R^\dagger (\theta_2, P_2) - R_{\theta_1-\theta_2} \ z_R^\dagger(\theta_2, P_2) z_R^\dagger(\theta_1, P_1) &=& 0\,,
\eena
with $R_\te$ acting as in \eqref{Rdef}. The form of the mixed commutation relation differs according to whether we 
are in the principal series case ($\nu \in \RR$), the complementary ($i\nu \in (0, \alpha)$) or discrete series case ($i\nu \in \alpha + \mathbb{N}_0$). In the first case we have
\ben
z_R(P_1, \theta_1) z_R^\dagger(P_2, \theta_2) - R_{\theta_2-\theta_1} \ z_R^\dagger(P_2, \theta_2) z_R(P_1, \theta_1) = \delta(\theta_1-\theta_2)\delta(P_1,P_2) \cdot 1
\een
where $\delta(P_1, P_2)$ is the Dirac delta function on $B$ (relative to the integration measure $\omega$) when we integrate this identity against 
smooth test functions on any orbital base $B$. In the case of the complementary series we have instead
\ben
z_R(P_1, \theta_1) z_R^\dagger(P_2, \theta_2) - R_{\theta_2-\theta_1} \ z_R^\dagger(P_2, \theta_2) z_R(P_1, \theta_1)
=  c_\nu \delta(\theta_1 - \theta_2) (P_1 \cdot P_2)^{-\alpha+i\nu} \cdot 1.
\een
In the case of the discrete series, we have
\ben
z_R(P_1, \theta_1) z_R^\dagger(P_2, \theta_2) - R_{\theta_2-\theta_1} \ z_R^\dagger(P_2, \theta_2) z_R(P_1, \theta_1)
=  c_n \delta(\theta_1 - \theta_2) (P_1 \cdot P_2)^n \log(P_1 \cdot P_2) \cdot 1
\een
where $c_n = (2\pi)^{-\alpha} (-1)^{n+1}/n!\Gamma(\alpha+n)$.
The differences arise from the 
differences in the definition of the scalar product in each case, see \eqref{inn},\eqref{Idef}, respectively \eqref{Kdef}. All these exchange relations are generalizations of the Zamolodchikov--Faddeev algebra \cite{ZamolodchikovZamolodchikov:1979,Faddeev:1984}.

\bigskip

We next describe a concrete version of the quantum field $\phi_R$ from Section~\ref{section:crossing}, replacing the ``components'' $\phi_{R,k}$ by an additional dependence on a de Sitter variable. This is done by replacing the vector $k$ by a Fourier-Helgason transform $F^+_\nu$ \eqref{eq:F-dS-pm} of some test function $F\in C_0^\infty(\dS_d)$. In view of Lemma~\ref{lemma:fourier-helgason}~$c)$, this amounts to the field operator, $f\in\Ss(\Rl)$, $F\in C_0^\infty(\dS_d)$,
\begin{align}\label{eq:phiR-smeared}
	\phi_R(f\ot F)
	=
	\zd_R(f^+\ot F^+_\nu)
	+
	z_R(\overline{f^-}\ot(\overline{F_-})^+_\nu)\,,
\end{align}
where $f^\pm$ are defined in \eqref{eq:fpm}, and $F^\pm_\nu$ in \eqref{eq:F-dS-pm}. 

In the following, it will be convenient to describe the field operator in terms of its distributional kernels, writing $\phi_R(f \ot F) = \int d u d\mu(X) \phi_R(u,X) f(u) F(X)$. Then we have, informally ($u \in \RR, X \in \dS_d$), 
\begin{align}\label{eq:phiR-concrete}
	\phi_R(u,X)
	&=
	\int_{\theta, P} \left\{ie^\te e^{iue^\te}\,(X\cdot P)_+^{-\alpha-i\nu}\cdot\zd_R(\te,P) 
	-ie^\te e^{-iue^\te}\,(-X\cdot P)_-^{-\alpha+i\overline{\nu}}\cdot z_R(\te,P)
	\right\}
	\,.
\end{align}
Here, the $\theta$-integral is over $\RR$, and $\int_P \equiv \int_B \omega(P)$ for an arbitrary orbital base $B$, as before. 
We collect a few properties of this field in the following proposition.

\begin{proposition}\label{proposition:phiR-props}
	The field $\phi_R$ \eqref{eq:phiR-concrete} is an operator-valued distribution on $\Ss(\Rl)\times C_0^\infty({\rm dS_d})$ with the following properties.
	\begin{enumerate}
		\item The field is neutral in the sense that $\phi_R(u,X)^*=\phi_R(u,-X)$ (on an appropriate domain).
		\item The field is $(\PG_o\times SO^\uparrow(d,1))$-covariant, i.e. 
		\begin{align}
			V(x,\la,\La) \phi_R(u,X) V(x,\la,\La)^{-1}
			=
			e^{-\la}\,\phi_R(e^{-\la}u+x,\La X)\,.
		\end{align}
		\item The field solves the Klein-Gordon equation of mass $m^2=\alpha^2+\nu^2$ on de Sitter space,
		\begin{align}
			(\square_X + m^2)\phi_R(u,X) = 0 \,. 
		\end{align}
	\end{enumerate}
\end{proposition}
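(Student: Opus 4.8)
The plan is to verify all three properties directly at the level of the distributional kernel \eqref{eq:phiR-concrete}, with the understanding that the identities are to be read after smearing with $f\otimes F$ and applied to the dense domain of vectors of finite particle number (this is the ``appropriate domain'' referred to in the statement). The three ingredients I would draw on are: the covariance \eqref{eq:Covariance-zR} of the creation/annihilation operators $z_R^\#$; the behaviour of the de~Sitter waves $(X\cdot P)_\pm^{s}$ under complex conjugation and the reflection $X\mapsto -X$; and the fact, recalled in Section~\ref{subsection:Klein-Gordon}, that each mode $u_P^\pm$ solves the Klein--Gordon equation of mass $m^2=\alpha^2+\nu^2$ on the whole of $\dS_d$.

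For the covariance~(b), I would insert \eqref{eq:Covariance-zR} into \eqref{eq:phiR-concrete} and then substitute $\te\to\te-\la$, $P\to\La^{-1}P$ in the integral. The measure $\dd\te\wedge\om(P)$ is invariant under this substitution, since $\dd\te$ is translation invariant and $\om$ is $SO^\uparrow(d,1)$-invariant, as used repeatedly in Section~\ref{section:UIR-LorentzGroup}. Using the Lorentz invariance $X\cdot\La^{-1}P=\La X\cdot P$ of the inner product, the wave $(X\cdot P)_+^{-\alpha-i\nu}$ turns into $(\La X\cdot P)_+^{-\alpha-i\nu}$; the shift $\te\to\te-\la$ produces an overall factor $e^{-\la}$ from the prefactor $e^{\te}$, and combines the two exponentials into $\exp[i(e^{-\la}u+x)e^{\te}]$. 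Collecting terms reproduces $e^{-\la}\phi_R(e^{-\la}u+x,\La X)$, and the annihilation part is handled identically using the ``$-$'' sign in \eqref{eq:Covariance-zR}.

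For neutrality~(a), taking the adjoint interchanges $\zd_R(\te,P)\leftrightarrow z_R(\te,P)$ and complex-conjugates the c-number coefficients. The key identity is $\overline{(X\cdot P)_\pm^{s}}=(X\cdot P)_\mp^{\bar s}$ for the distributional boundary values, which sends the creation coefficient $(X\cdot P)_+^{-\alpha-i\nu}$ to $(X\cdot P)_-^{-\alpha+i\overline{\nu}}$ and the annihilation coefficient $(-X\cdot P)_-^{-\alpha+i\overline{\nu}}$ to $(-X\cdot P)_+^{-\alpha-i\nu}$; substituting $X\mapsto -X$ then matches the adjoint of the creation part against the annihilation part of $\phi_R(u,-X)$ and vice versa, giving $\phi_R(u,X)^*=\phi_R(u,-X)$. (Alternatively this follows from the general relation $\phi_{R,k}(f)^*=\phi_{R,\Gamma k}(\overline f)$ together with Lemma~\ref{lemma:fourier-helgason}~$c)$.) For the Klein--Gordon property~(c), I would note that $\square_X$ acts only on the $X$-dependent coefficients and commutes with the $(\te,P)$-integration and with the operators $z_R^\#(\te,P)$; since each mode $(X\cdot P)_+^{-\alpha-i\nu}$, and for the annihilation part its conjugate/reflected counterpart which carries the same mass, is annihilated by $\square_X+m^2$ by the computation of Section~\ref{subsection:Klein-Gordon}, so is the whole field.

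The routine parts are the change of variables in~(b) and the differentiation in~(c). The main care is needed in~(a), in tracking the boundary-value prescriptions $(\,\cdot\,)_\pm$ and the phases attached to them so that conjugation and reflection fit together exactly, and in justifying that $\square_X$ may be pulled through the distributional $(\te,P)$-integral and past the (unbounded) operators. These are bookkeeping matters rather than genuine obstacles, and every manipulation is legitimate once the identities are interpreted as distributional statements on the finite-particle-number domain.
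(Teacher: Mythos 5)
Your proposal is correct and follows essentially the same route as the paper: the paper's own (very terse) proof states that (a) is evident from \eqref{eq:phiR-smeared}--\eqref{eq:phiR-concrete}, (b) is a consequence of the covariance relation \eqref{eq:Covariance-zR}, and (c) holds because the de~Sitter waves $u_P^\pm$ solve the Klein--Gordon equation -- precisely the three ingredients you identify and work out in detail. Your explicit tracking of the boundary-value prescriptions $(\,\cdot\,)_\pm$ under conjugation and reflection in (a), and of the change of variables producing the factor $e^{-\la}$ in (b), are exactly the computations the paper leaves implicit.
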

\begin{proof}
	$a)$ is evident from \eqref{eq:phiR-smeared}, \eqref{eq:phiR-concrete}, and $b)$ is a consequence of \eqref{eq:Covariance-zR}. Note that the prefactor $e^{-\la}$ is due to the fact that we consider the current. $c)$ is satisfied because the de Sitter waves $u_P^\pm$ are solutions of the Klein-Gordon equation. 
\end{proof}

The field $\phi_R$ is seen to fail the usual condition of Einstein causality in both, its lightray and its de Sitter coordinate,  due to the presence of the $R$-factors in~\eqref{zccr}.

So in this sense $\phi_R$ does not, by itself, straightforwardly define a local quantum field neither on the lightray nor on de Sitter space. However, the field satisfies a kind of remnant of the locality condition in the lightray variable. As explained in the abstract setting in Section~\ref{section:crossing}, this is best understood in interplay with its TCP reflected partner field.

The definition \eqref{eq:tcp-field-general} translates here to
\begin{align}\label{eq:phi'R}
	\phi_R'(u,X)
	:=
	\Theta\phi_R(-u,-X)\Theta\,.
\end{align}
It is clear from this definition that also $\phi_R'$ has the properties $a)$---$c)$ of Proposition~\ref{proposition:phiR-props}. Explicitly, we have from \eqref{eq:phiR-concrete}
\begin{align}\label{eq:phi'R-concrete}
	\phi_R'(u,X)
	&=
	\int_{\theta, P} \left\{-ie^\te e^{iue^\te}\,(-X\cdot P)_-^{-\alpha+i\overline{\nu}}\cdot z_R^{\prime \dagger}(\te,P) 
	+ie^\te e^{-iue^\te}\,(X\cdot P)_+^{-\alpha-i\nu}\cdot z_R'(\te,P)
	\right\}
	\, .
\end{align}
This is different from $\phi_R(u,X)$ because $z_R' = \Theta z_R \Theta \neq z_R$, i.e. the Zamolodchikov operators do not transform covariantly under $\Theta$.

We have the following concrete version of Theorem~\ref{theorem:commutation}.

\begin{theorem}\label{theorem:commutation-concrete}
	Let $X,X'\in \dS_d$ be arbitrary, and $\nu$ corresponding to a principal or complementary series representation\footnote{For simplicity, we do not consider the discrete series here, although analogous results are expected to hold in that case as well.}. Then, in the sense of distributions,
	\begin{align}
		[\phi_R(u,X),\,\phi'_R(u',X')]=0\quad\text{for }\;u<u'\,,
	\end{align}
	on the space of vectors of finite Fock particle number.
\end{theorem}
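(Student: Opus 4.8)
The plan is to obtain Theorem~\ref{theorem:commutation-concrete} as a specialization of the abstract half-local commutation Theorem~\ref{theorem:commutation}, rather than by repeating the modular-analyticity argument. The key point is that, after smearing in the de Sitter variable, the concrete fields \eqref{eq:phiR-concrete} and \eqref{eq:phi'R-concrete} coincide with the abstract fields $\phi_{R,k}$, $\phi'_{R,k'}$ of Section~\ref{section:crossing} for one-particle vectors $k,k'$ produced by the Fourier--Helgason transform. All of the de Sitter representation theory is already packaged into the crossing symmetry of $R=\Rti^{\nu\nu}$, so no genuinely new analytic input is required in the $X$-variable.

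First I would make the identification of fields precise. Comparing \eqref{eq:phiR-smeared} with the abstract definition \eqref{eq:def-phiR}, and using Lemma~\ref{lemma:fourier-helgason}~$c)$ in the form $(\overline{F_-})^+_\nu=\gamma_\nu\Gamma_\nu F^+_\nu$, one reads off
\[
	\phi_R(f\ot F)=\zd_R(f^+\ot F^+_\nu)+z_R(\overline{f^-}\ot(\overline{F_-})^+_\nu)=\phi_{R,\,F^+_\nu}(f),
\]
where the conjugation on $\K_\nu$ playing the role of $\Gamma$ in Section~\ref{section:crossing} is $\Gamma:=\gamma_\nu\Gamma_\nu$, consistent with the TCP operator $\Theta_\nu=\gamma_\nu\,C\ot\Gamma_\nu$ chosen in Section~\ref{subsection:chiral-models}. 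For the reflected field, unravelling \eqref{eq:phi'R} and substituting $u\mapsto-u$, $X\mapsto-X$ (the measure $d\mu$ being reflection invariant) gives $\phi'_R(g\ot F')=\Theta\phi_R(\overline{g_-}\ot\overline{F'_-})\Theta=\Theta\phi_{R,\,\Gamma (F')^+_\nu}(\overline{g_-})\Theta$, which is exactly the abstract reflected field \eqref{eq:tcp-field-general} evaluated at $k'=(F')^+_\nu$, i.e. $\phi'_R(g\ot F')=\phi'_{R,\,(F')^+_\nu}(g)$. Note that $F^+_\nu\in\K_\nu$ by Lemma~\ref{lemma:fourier-helgason}~$a)$, so these are admissible one-particle vectors.

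Next I would check that the hypotheses of Theorem~\ref{theorem:commutation} are met, namely that $R=\Rti^{\nu\nu}$ is crossing-symmetric in the sense of Definition~\ref{definition:Crossing-Symmetric-YB-Operator}. This is precisely the content of Theorem~\ref{theorem:crossing}: the strip analyticity (R6'') and the crossing relation (R7'') are (R6') and (R7'). Here one should record that the phase $\gamma_\nu$ is immaterial: since $|\gamma_\nu|=1$, replacing $\Gamma_\nu$ by $\Gamma=\gamma_\nu\Gamma_\nu$ in (R7') multiplies both sides by $\overline{\gamma_\nu}\gamma_\nu=1$, so (R7') with $\Gamma$ is equivalent to (R7'') with $\Gamma_\nu$. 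Theorem~\ref{theorem:commutation} then yields $[\phi_R(f\ot F),\,\phi'_R(g\ot F')]=0$ whenever $\supp f$ lies to the left of $\supp g$, on vectors of finite particle number, for every fixed pair $F,F'\in C_0^\infty(\dS_d)$.

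Finally I would pass from the smeared identity to the distributional kernel. Since $\phi_R(f\ot F)$ and $\phi'_R(g\ot F')$ are linear in the test functions $F$ and $F'$ (the field being an operator-valued distribution), the matrix element $\langle\Psi,[\phi_R(f\ot F),\phi'_R(g\ot F')]\,\Phi\rangle$ is a separately continuous multilinear functional of $(f,F,g,F')$ that vanishes whenever $\supp f$ is to the left of $\supp g$; by definition this is the statement that the bi-distribution $[\phi_R(u,X),\phi'_R(u',X')]$ vanishes for $u<u'$ and arbitrary $X,X'$. I expect the only delicate points to be bookkeeping rather than analysis: correctly tracking the phase $\gamma_\nu$ and the two reflections in the identification of $\phi'_R$, and justifying the kernel extraction on the common invariant domain of finite-particle-number vectors. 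No new modular estimates are needed, because the analytic continuation underlying Theorem~\ref{theorem:commutation} takes place entirely in the lightray rapidity $\te$, on which $\Delta$ acts, while the de Sitter factor $\K_\nu$ is inert under $\Delta$ and merely supplies the fixed vectors $F^+_\nu$, $(F')^+_\nu$.
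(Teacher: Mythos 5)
Your proposal is correct, and it is in fact the route the paper itself acknowledges in the opening line of its proof (``This theorem follows from Theorem~\ref{theorem:commutation} as a special case'') before deliberately choosing \emph{not} to take it. The paper instead gives an independent, explicit computation: it expands the mixed commutator contributions on an $n$-particle vector into two integral kernels ${}^\pm K$, shifts the rapidity integration contour to $\im\lambda=\pi$ using the strip analyticity (R6'') together with the exponential damping supplied by $u'-u>0$, applies crossing (R7'') at the level of kernels, and then cancels the resulting intertwiners via $I_{-\nu}=I_\nu^{-1}$ to identify ${}^+K$ with ${}^-K$. Your specialization argument is shorter and cleanly separates the representation-theoretic input (Lemma~\ref{lemma:fourier-helgason}~$c)$, which packages the de Sitter data into one-particle vectors $k=F^+_\nu$ with conjugation $\Gamma=\gamma_\nu\Gamma_\nu$) from the modular/analytic input already established abstractly; your handling of the phase $\gamma_\nu$ is right, and for completeness you should note that the same $|\gamma_\nu|^2=1$ cancellation also shows (R3') survives the replacement $\Gamma_\nu\to\gamma_\nu\Gamma_\nu$, so that $\Rti^{\nu\nu}$ genuinely lies in $\R_{\rm fct}(V_\nu,\Gamma)$ for the modified conjugation. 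What the paper's longer route buys, and what your route hides, is the explicit mechanism: one sees concretely how the kernel-level crossing relation \eqref{eq:crossing-nu1nu2}, the contour shift, and the cancellation of the $I_{\pm\nu}$ lines conspire to produce locality, which serves both as an independent check of Theorems~\ref{theorem:R-dS} and \ref{theorem:crossing} and as a template for settings (e.g.\ the discrete series, or arbitrary combinations $\nu_1\neq\nu_2$) where the abstract theorem's hypotheses are not yet verified.
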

\begin{proof}
	This theorem follows from Theorem~\ref{theorem:commutation} as a special case. However, we give an independent, explicit argument which illustrates how the properties of the integral operator $R$ enter. We focus on the principal series for definiteness. 
	
	We expand both $\phi_R,\phi'_R$ in terms of the Zamolodchikov-Faddeev creation/annihilation operators $z_R,z_R^\dagger$ respectively  their primed counterparts. The commutator $[\phi_R, \phi'_R]$ then gets contributions of the type $[z_R,z'_R], [z_R^\dagger, z_R^{\prime \dagger}]$, as well as $[z_R',z_R^\dagger], [z_R, z_R^{\prime \dagger}]$. It is relatively easy to see that the $[z_R,z_R']$ and $[z_R^\dagger, z_R^{\prime \dagger}]$ contributions vanish separately (see \eqref{eq:zz'comm}). This is not the case, however, for the remaining mixed contributions, where a non-trivial cancellation between both terms, called ``$+$'' and ``$-$'', is required. If we apply these contributions to an $n$-particle state $\Psi_n$, 
	we get a combination of two terms  abbreviated as
\ben\label{2K}
[\phi_R(u,X), \phi'_R(u',X')] \Psi_n = ( {}^+ K - {}^- K) \Psi_n \ . 
\een
Here, each ${}^\pm K$ acts as multiplication operator in $\bth = (\theta_1, \dots, \theta_n)$ and as integral kernel (depending on $X,X' \in \dS_d$ and $u,u' \in \RR$) 
on $(P_1, \dots, P_n)$. Taking into account the $R$-symmetry of the wave functions, the explicit form of those kernels is found to be:
\begin{align}
	{}^+ K_\bth(&P_1, \dots, P_n; P_1', \dots, P_n') 
	\\
	&= 
	\int_\RR \dd \lambda \,e^{2\lambda}e^{+i(u'-u)e^\la}
	\int_{\{Q_j\}} (X' \cdot Q_1)_+^{-\alpha -i\nu} (-X \cdot Q_{n+1})_-^{-\alpha+i\nu}
	\prod_{j=1}^{n} R_{\lambda-\theta_j}(Q_{j+1},P_j ;P_j',Q_j)
	\nonumber
\end{align}
for ``$+$'', whereas for ``$-$'' one has
\begin{align}
	{}^- K_\bth(&P_1, \dots, P_n; P_1', \dots, P_n') 
	\\
	&= 
	\int_\RR \dd \lambda \,e^{2\lambda}e^{-i(u'-u)e^\la}
	\int_{\{Q_j\}} (-X' \cdot Q_1)_-^{-\alpha +i\nu} (X \cdot Q_{n+1})_+^{-\alpha-i\nu}
	\prod_{j=1}^{n} R_{\theta_j-\la}(P_j,Q_j;Q_{j+1},P_j')
	\nonumber
	\,.
\end{align}
To get to the expression for ``$-$'', we have also used properties R1'') and  R3'', case 2) of the kernel $R_\theta$. A graphical expression for both kernels in the notation of Appendix~B is given in the following figure.
\begin{center}
	\includegraphics[width=\textwidth]{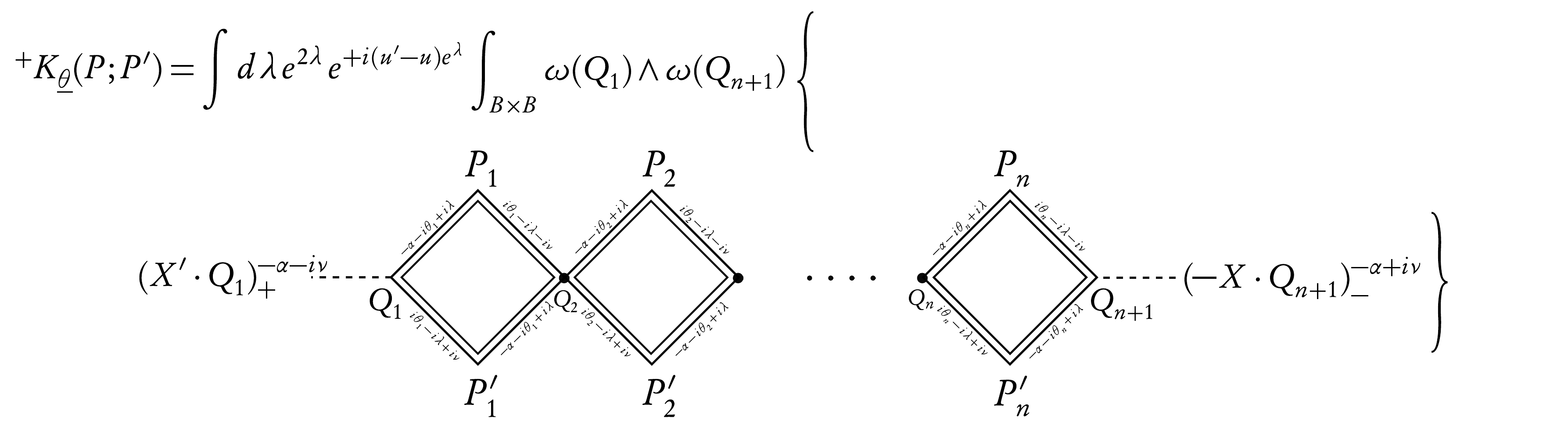}
	\\
	\includegraphics[width=\textwidth]{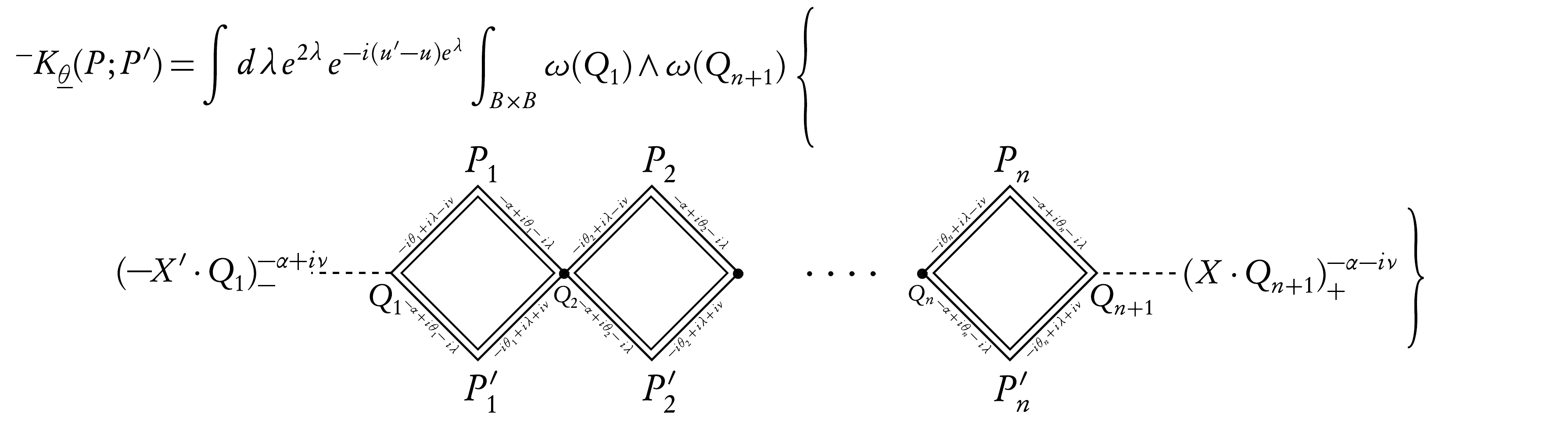}
\end{center}
Here the dots $\bullet Q_j$, $j=2,...,n$, indicate integrations over $B$ as explained in Appendix~$B$, and the two remaining integrals over $Q_1$ and $Q_{n+1}$ are written explicitly. The dashed lines connecting the $R$-kernels to the de Sitter waves simply indicate that these parts share the same de Sitter momentum ($Q_1$ and $Q_{n+1}$, respectively), and the double lined boxes mean the {\em full} integral kernel of $R_{\la-\te_j}$ respectively $R_{\te_j-\la}$, i.e. including all Gamma-factors, the crossing function $\sigma$ \eqref{eq:sigma-main}, and the rescaling $\te\to-\alpha\te/\pi$.

In order to see that the contribution from the ``$+$''-kernel cancels precisely that from the ``$-$'' kernel, we 
now shift the integration contour in ${}^+ K_\bth$ upwards to $\im(\lambda) = +\pi$. The shifted contour lies in the domain of analyticity of each 
$R_{\lambda-\theta_j}$, by (R6''). Furthermore, under $\lambda \to \lambda + i\mu, \mu \in (0,\pi)$, 
we have $i(u'-u) e^\lambda \to i(u'-u)e^\lambda(\cos \mu + i\sin \mu)$. By assumption $u'-u>0, \sin \mu > 0$, so the real part of this 
expression is negative and this provides an exponential damping of the integrand for large values of $\lambda$. The contour shift is thus permissible. 
For $\mu = \pi$, crossing symmetry (R7'') (see \eqref{eq:crossing-nu1nu2}) then implies that ${}^+K$ can also be expressed as
\begin{center}
	\includegraphics[width=\textwidth]{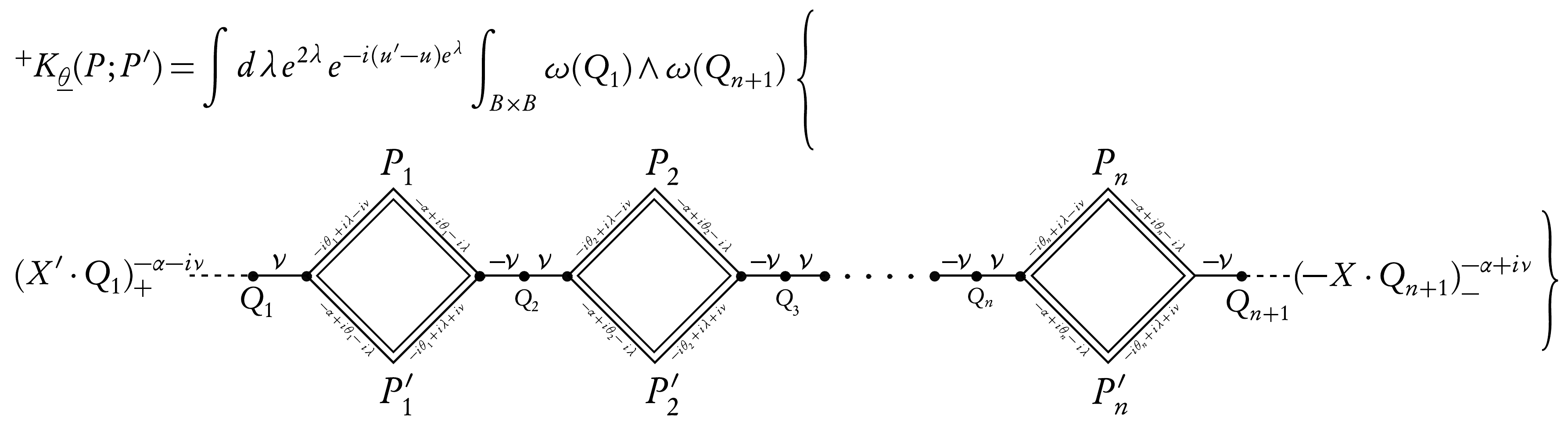}
\end{center}
Here the lines with $\pm\nu$ denote integral operators $I_{\pm\nu}$. Since $I_{-\nu}=I_\nu^{-1}$ (Lemma~\ref{lemma:TPC-principal}), the inner lines cancel. The two outer operators have the effect of switching the sign of $\nu$ in both the de Sitter waves. If we now flip the signs on $X,X'$ (taking into account that this changes the $\pm i\eps$ prescription on the de Sitter waves), it becomes apparent that the above expression coincides with ${}^-K$, so that the two kernels precisely cancel. 
This concludes the proof. 
\end{proof}

The truly local fields/observables of these model are different from the half-local fields $\phi_R$, $\phi_R'$, and can abstractly be described in an operator-algebraic setting. We therefore proceed from the pair of field operators $\phi_R$, $\phi_R'$ to the pair of von Neumann algebras $\M_R$, $\M_R'$, which by the preceding result are localized in half lines in the lightray coordinate. 

In order to build from these basic ``half line'' von Neumann algebras a net of von Neumann algebras $I \mapsto \A_R(I)$, indexed by intervals $I \subset \Rl$, one has to translate $\M_R, \M_R'$ and form intersections. One defines for the interval $I=(a,b)\subset\Rl$
\begin{align}\label{eq:chiral-net}
	\A_R(a,b)
	:=
	V(b,0,1)\M_R\,V(b,0,1)^{-1}\cap 	V(a,0,1)\M_R'\,V(a,0,1)^{-1}\,. 
\end{align}

Then by construction, we have
\begin{proposition}
	The assignment $I \mapsto \A_R(I)$ \eqref{eq:chiral-net} from open intervals to von Neumann algebras forms a $(\PG\times SO^\uparrow(d,1))$-covariant local net of von Neumann algebras on $\Hil^R$:
	\begin{enumerate}
		\item For any open interval $I\subset\Rl$, $(x,\la,\La)\in\PG_0\times SO^\uparrow(d,1)$,
		\begin{align}
			V(x,\la,\La)\A_R(I)V(x,\la,\La)^{-1}
			&=
			\A_R(e^{-\la}\,I+x)
			\,,\\
			\Theta\A_R(I)\Theta
			&=
			\A_R(-I)
			\,.
		\end{align}
		\item For two disjoint intervals $I_1,I_2\subset\Rl$, we have
		\begin{align}
			[\A_R(I_1),\,\A_R(I_2)]=\{0\}.
		\end{align}
	\end{enumerate}
\end{proposition}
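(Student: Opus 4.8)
The plan is to reduce everything to the transformation behaviour of the two half-line algebras $\M_R$ and $\M_R'$ under the representation $V$, and then to combine this with the commutant relation $\M_R'=\Theta\M_R\Theta=(\M_R)'$ established in the Proposition at the end of Section~\ref{section:crossing}. First I would determine how $\M_R$ transforms under an arbitrary group element. Using the covariance of the generating field (Proposition~\ref{proposition:phiR-props}~b)) and the substitution $u'=e^{-\la}u+x$, $X'=\La X$, which absorbs the factor $e^{-\la}$, one gets $V(x,\la,\La)\phi_R(f\ot F)V(x,\la,\La)^{-1}=\phi_R(f_{(x,\la)}\ot F_\La)$ with $f_{(x,\la)}(u)=f(e^\la(u-x))$ and $F_\La=F\circ\La^{-1}$. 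Since $\supp f\subset\Rl_-$ forces $\supp f_{(x,\la)}\subset(-\infty,x)$ while $F_\La$ ranges over all test functions as $F$ does, the generating set of $V(x,\la,\La)\M_R V(x,\la,\La)^{-1}$ coincides with that of $V(x,0,1)\M_R V(x,0,1)^{-1}$; the self-adjointness/reality condition defining the generators is preserved because $V$ is unitary and $\La$ commutes with the reflection $X\mapsto-X$ entering $\phi_R(u,X)^*=\phi_R(u,-X)$. Hence $V(x,\la,\La)\M_R V(x,\la,\La)^{-1}=V(x,0,1)\M_R V(x,0,1)^{-1}$, and the analogous identity holds for $\M_R'$; in particular these algebras are insensitive to $\la$ and $\La$ and depend only on their lightray endpoint.

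Second, for the covariance of the net I would insert these identities into the definition \eqref{eq:chiral-net} and use the group law $(x,\la,\La)(y,0,1)=(x+e^{-\la}y,\la,\La)$ of $\PG_o\times SO^\uparrow(d,1)$. Distributing the conjugation over the intersection and reading off the two shifted endpoints gives $V(x,\la,\La)\A_R(a,b)V(x,\la,\La)^{-1}=\A_R(e^{-\la}(a,b)+x)$, the first claimed relation. For the reflection I would first record that on the one-particle space $\Theta=\gamma_\nu C\ot\Gamma_\nu$ satisfies $\Theta V_1(x,\la,\La)\Theta=V_1(-x,\la,\La)$, since $CU_o(x,\la)C=U_o(-x,\la)$, $\Gamma_\nu$ commutes with $V_\nu$, and $|\gamma_\nu|=1$; this lifts to $\Theta V(x,\la,\La)\Theta=V(-x,\la,\La)$ on $\Hil^R$. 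Conjugating \eqref{eq:chiral-net} through $\Theta$ and using $\Theta\M_R\Theta=\M_R'$ then interchanges the two factors and sends $(a,b)$ to $(-b,-a)$, which yields $\Theta\A_R(I)\Theta=\A_R(-I)$.

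Third, for locality I would take disjoint intervals, say $I_1=(a_1,b_1)$ to the left of $I_2=(a_2,b_2)$ with $b_1\le a_2$. By construction $\A_R(I_1)\subset V(b_1,0,1)\M_R V(b_1,0,1)^{-1}$ and $\A_R(I_2)\subset V(a_2,0,1)\M_R'V(a_2,0,1)^{-1}$, so it suffices to show these two half-line algebras commute. Conjugating by $V(b_1,0,1)^{-1}$, which does not affect vanishing of the commutator, and using the group law reduces the claim to $[\M_R,\,V(c,0,1)\M_R'V(c,0,1)^{-1}]=0$ with $c=a_2-b_1\ge0$. For $c\ge0$ the generators of $V(c,0,1)\M_R'V(c,0,1)^{-1}$ have lightray support in $(c,\infty)\subset\Rl_+$, so by isotony this algebra is contained in $\M_R'=(\M_R)'$, and the commutator vanishes.

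The genuinely new input is already packaged in the two earlier results, the field covariance of Proposition~\ref{proposition:phiR-props} and, above all, the identification $\M_R'=(\M_R)'$ from Section~\ref{section:crossing}, which itself rests on the crossing symmetry of $R$ via Theorem~\ref{theorem:commutation}; given these, the present statement is essentially bookkeeping. The step requiring the most care is the first one: one must verify that the dilation and de Sitter directions genuinely drop out of $\M_R$ and $\M_R'$, so that each algebra depends only on a single lightray endpoint, and that the reality condition defining the generators survives both the full group action and conjugation by $\Theta$. Everything else then reduces to group-law computations and the commutant relation.
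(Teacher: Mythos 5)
Your proof is correct and takes essentially the same approach as the paper, which states the proposition follows ``by construction'' and leaves implicit exactly the bookkeeping you supply: covariance of the generating fields (Proposition~\ref{proposition:phiR-props}), the group law of $\PG_o\times SO^\uparrow(d,1)$, and the relations $\Theta\M_R\Theta=\M_R'=(\M_R)'$ from Section~\ref{section:crossing}. In particular, your reduction of locality to the inclusion $V(c,0,1)\M_R'V(c,0,1)^{-1}\subseteq\M_R'=(\M_R)'$ for $c\geq 0$ is precisely the intended argument.
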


The elements of $\A_R(I)$ may be understood as the local fields/observables of these models. We do not investigate them here, but just mention two important questions in this context: a) Under which conditions are the algebras $\A_R(I)$ ``large'' (for example in the sense that the Fock vacuum $\Om$ is cyclic)? And b): Under which conditions does the net $\A_R$ extend from the real line to the circle, transforming covariantly even under the 1-dimensional conformal group $PSL(2,\Rl)$, acting by fractional transformation $x\mapsto \frac{ax+b}{cx+d}$ with 
$ac-bd = \pm 1$? 

Following the same arguments as in the scalar case \cite{BostelmannLechnerMorsella:2011} (which build on \cite{GuidoLongoWiesbrock:1998}), one can show that i) the subspace $\bHil_{\rm loc}:=\overline{\A_R(a,b)\Om}\subset\bHil_R$ is independent of the considered interval $-\infty<a<b<\infty$, ii) on $\Hil_{\rm loc}$, the representation $V$ extends to PSL$(2,\Rl)\times SO^\uparrow(d,1)$, and iii) the net $I\mapsto\A_R(I)|_{\bHil_{\rm loc}}$ extends to a local conformally covariant net on the circle. A direct characterization of $\Hil_{\rm loc}$ is however difficult in general -- in particular because the nuclearity criteria \cite{BuchholzLechner:2004} that can be applied to the $O(N)$ models \cite{Alazzawi:2014} do not apply here because the representation $V_\nu$ is infinite-dimensional. We leave the analysis of these questions to a future investigation.

\subsection{Euclidean CFTs in $d-1$ dimensions}\label{EuclQFT}

Here we present a variant of our construction which leads to Euclidean conformal field theories in $(d-1)$-dimensions. 
As before, the construction is based on an $SO^\uparrow(d,1)$-invariant Yang-Baxter function $R$ such as \eqref{eq:R-scaled}.\footnote{
We do not rely on the crossing property in this section. Of course, to make a connection to the chiral models in the previous subsection (where crossing symmetry 
was essential), their $R$ functions should coincide, and therefore be crossing symmetric.}

To turn $R$ into a Yang-Baxter operator, we use here the amplification \eqref{eq:R-amp} discussed after Lemma~\ref{lemma:YBF->YBO} instead of the coupling to the representation space of the lightray. That is, we pick some $N\in\Nl$ and real numbers $\theta_1, \dots, \theta_N$. For given $\nu$ in the 
{\em complementary series} representation of $SO^\uparrow(d,1)$ (the conformal group in $(d-1)$-dimensional Euclidean space), we define the one particle space as
\ben
\H_1 = \Cl^N \otimes \K_\nu\,,
\een
and the invariant Yang-Baxter operator $(R\Psi)_{jl}=R_{\te_j-\te_l}\Psi_{lj}$ on $\Hil_1\ot\Hil_1\cong\K_\nu\tp{2}\ot\Cl^N\ot\Cl^N$, referring to an orthonormal basis $\{e_j\}_{j=1}^N$ of $\Cl^N$. The $R$-symmetric Fock space $\Hil^R$ is then defined as before. 

We now choose the orbital base $B\cong\Rl^{d-1}\cup\infty$ to be flat (see Appendix~A), which amounts to parameterizing $P\in C_d^+$ as 
\ben\label{Px}
P = (\half(|\bx|^2+1), \bx, \half(|\bx|^2-1)) \  
\een 
in terms of $\bx \in \RR^{d-1}$, and results in a one-particle space of the form $\Hil_1\cong\Cl^N\ot L^2(B)$. Vectors in this space are $N$-component functions $\bx\mapsto f_j(\bx)$, $j=1,...,N$, and the scalar product is (see \eqref{Idef} and Appendix A)
\begin{align}\label{eq:scpr}
	(f,g)
	=
	c_\nu \sum_{j=1}^N \int_B d^{d-1}\bx\int_B d^{d-1}\by\,\overline{f_j(\bx)}\,|\bx-\by|^{-2\Delta}\,g_j(\by)\,,
\end{align}
where $c_\nu = \pi^{-\alpha}2^{-i\nu} \Gamma(\alpha - i\nu)/\Gamma(i\nu)$ and $\Delta = \alpha - i\nu \in (0, \alpha)$ in the complementary series. 

We therefore have $N$ pairs of creation/annihilation operators $z^\#_{R,j}(f):=z^\#_R(e_j\ot f)$, and the $N$-component quantum fields
\begin{align}\label{phijx}
	\phi_{R,j}(\bx)
	:=
	\zd_{R,j}(\bx)+z_{R,j}(\bx)\,,\qquad  \;j=1, \dots ,N\,.
\end{align}
The $z_{R,j}^\#$ (and $\phi_{R,j}$) are operator-valued distributions, but now defined on the {\em Euclidean space $B$ instead of the ``momentum space'' $C_d^+$}. We we again describe these fields in terms of their distributional kernels $\phi_{R,j}(\bx)$ and $z_{R,j}^\#(\bx)$ at sharp points $\bx\in B$, i.e. 
$\phi_{R,j}(f) = \int \phi_{R,j}(\bx) f(\bx) d^{d-1} \bx$, etc.

As a consequence of the scalar product \eqref{eq:scpr}, the Faddeev-Zamolodchikov operators \eqref{eq:defzdR} satisfy the relations
\bena\label{zccr1}
z^\dagger_i(\bx_1) z_j(\bx_2) - R_{\theta_i-\theta_j} \ z_j(\bx_2) z^\dagger_i(\bx_1) &=& c_\nu \delta_{ij}\,|\bx_1 - \bx_2|^{-2\Delta} \\
z^\dagger_i(\bx_1) z^\dagger_j (\bx_2) - R_{\theta_i-\theta_j} \ z^\dagger_j(\bx_2) z^\dagger_i(\bx_1) &=& 0 \ . 
\eena
By construction, the field operators $\phi_{R,j}$ are real, $\phi_{R,j}(\bx)^*=\phi_{R,j}(\bx)$, and transform in the complementary series representation $V_\nu$, i.e. 
\ben
V_\nu(\Lambda) \phi_{R,j}(\bx) V_\nu(\Lambda)^{-1} = J_\Lambda(\bx)^{-\Delta} \ \phi_{R,j}(\Lambda \cdot \bx) \ , 
\een
where $\Lambda \cdot \bx$ is the usual action of conformal transformations $\Lambda \in SO^\uparrow(d,1)$ on $\RR^{d-1} \cup \infty$, and where $J_\Lambda$ is the 
corresponding conformal factor, see Appendix~A.  The scaling dimension of $\phi_{R,j}$
is therefore $\Delta = \alpha - i\nu \in (0, \alpha)$.

By the same arguments 
as before, the exponentiated fields $e^{i\phi_{R,j}(f)}$ are then well defined for any 
$f \in C_0^\infty(B)$ and any $j=1, \dots, N$. We define corresponding ``Euclidean'' von Neumann algebras
\ben
\E_R(O) = \{ e^{i\phi_{R,j}(f)} \mid f \in C^\infty_{\Rl,0}(O), \ \ \ \ j=1, \dots, N\}'' \,, 
\een
for any open, bounded region $O \subset \RR^{d-1}$. By construction, conformal transformations act geometrically on the net $O \mapsto \E_R(O)$ in the sense that $V_\nu(\Lambda) \E_R(O) V_\nu(\Lambda)^{-1} = \E_R(\Lambda \cdot O)$.
If we choose our discretized rapidities $\{\theta_j\}$ to be spaced equidistantly and formally take $N \to \infty$, then the shifts $\phi_{R,j} \mapsto \phi_{R,j+1}$ correspond to symmetries of the net $\E_R(O)$.

\bigskip

The simplest case of this construction is given when instead of the integral operators $R$, we take the flip $R=F$ on $\K_\nu\ot\K_\nu$. 
In that case, the $R_\theta$ factor drops out of the commutation relation for the $z_F,z_F^\dagger$'s, and one finds that the vacuum correlation functions of the field $\phi_{F,j}$ are of quasi-free form, i.e. 
\begin{align}
	(\Omega, \phi_{F,k_1}(\bx_1) \cdots \phi_{F,k_{2n+1}}(\bx_{2n+1}) \Omega)
	&=0\,,\\
	(\Omega, \phi_{F,k_1}(\bx_1) \cdots \phi_{F,k_{2n}}(\bx_{2n}) \Omega)
	&= 
	c \sum_P   \prod_{(i,j) \in P} \delta_{k_i k_j} |\bx_i - \bx_j|^{-2\Delta}  \ , 
\end{align}
where the sum is over all partitions of the set $\{ 1, ..., 2n\}$ into ordered pairs, $c$ is a real constant, and $\bx_i \neq \bx_j$ for all $i\neq j$ is assumed. These correlation functions correspond to an $N$-dimensional multiplet of a generalized Bosonic free Euclidean field theory.
The correlation functions are not reflection positive~\cite{NeebOlafsson:2014-2} (and so do not define a CFT in $(d-1)$-dimensional Minkowski spacetime
satisfying the usual axioms) apart from the limiting case $\Delta = \alpha$ corresponding to the standard free field. Locality of the field theory is expressed by the fact that the above correlation functions are symmetric under exchanges
$(\bx_i, k_i) \leftrightarrow (\bx_j, k_j)$. At the level of the von Neumann algebras, this amounts to saying that, if $O$ and $O'$ are disjoint, the corresponding 
von Neumann algebras commute
\ben\label{local}
[\E_F(O), \E_F(O')] = \{0\} \ . 
\een
In this sense, the Euclidean quantum field theory defined by the assignment $O \mapsto \E_F(O)$ is ``local''. 

This structure is modified if instead of the flip, we use our integral operators $R$. In that case, the fields $\phi_{R,j}$
are not ``local'' in the sense that the correlation functions are no longer symmetric, and consequently 
\eqref{local} does not hold. To obtain a local Euclidean theory, one could consider the algebras
\ben
\F_R(O) := \E_R(O) \cap \E_R(O')' \ , 
\een
where $O'$ denotes the complement of $O\subset B$, and where $\E_R(O')'$ is the commutant of the 
corresponding von Neumann algebra $\E_R(O')$. It follows directly from the definition that

\begin{proposition}
	The net $B\supset O \mapsto \F_R(O)$ is local and transforms covariantly under the conformal group $SO^\uparrow(d,1)$ in the sense that $V_\nu(\Lambda) \F_R(O) V_\nu(\Lambda)^{-1} = \F_R(\Lambda \cdot O)$.  
\end{proposition}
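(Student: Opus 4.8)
The statement asserts two properties of the net $O\mapsto\F_R(O)=\E_R(O)\cap\E_R(O')'$: covariance under $SO^\uparrow(d,1)$ and locality. Both are formal consequences of three facts already established for the auxiliary net $\E_R$: (i) that $\E_R$ is isotone, $\E_R(O_1)\subset\E_R(O_2)$ whenever $O_1\subset O_2$, which holds because extension of test functions by zero embeds the generating unitaries of $\E_R(O_1)$ among those of $\E_R(O_2)$; (ii) that $\E_R$ transforms geometrically, $V_\nu(\Lambda)\E_R(O)V_\nu(\Lambda)^{-1}=\E_R(\Lambda\cdot O)$; and (iii) that the adjoint action of a unitary is a $*$-automorphism of $\B(\Hil^R)$ and hence commutes with the operations of intersection and of taking commutants. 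The plan is to combine these with elementary von Neumann algebra manipulations.

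For covariance I would push the unitary conjugation through the defining formula. Using (iii),
\begin{align*}
V_\nu(\Lambda)\F_R(O)V_\nu(\Lambda)^{-1}
&=\bigl(V_\nu(\Lambda)\E_R(O)V_\nu(\Lambda)^{-1}\bigr)\cap\bigl(V_\nu(\Lambda)\E_R(O')V_\nu(\Lambda)^{-1}\bigr)'\\
&=\E_R(\Lambda\cdot O)\cap\E_R(\Lambda\cdot O')'\,,
\end{align*}
where the second line uses (ii). It then remains to observe that, since every $\Lambda\in SO^\uparrow(d,1)$ acts as a (homeomorphic) bijection of the compactified base $B\cong\SS^{d-1}$, it maps complements to complements, $\Lambda\cdot O'=(\Lambda\cdot O)'$. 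Substituting this identity gives $\E_R(\Lambda\cdot O)\cap\E_R((\Lambda\cdot O)')'=\F_R(\Lambda\cdot O)$, which is the claim.

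For locality, let $O_1,O_2\subset B$ be disjoint. Then $O_2\subset O_1'$, so by isotony (i) one has $\E_R(O_2)\subset\E_R(O_1')$, whence $\F_R(O_2)\subset\E_R(O_2)\subset\E_R(O_1')$. On the other hand $\F_R(O_1)=\E_R(O_1)\cap\E_R(O_1')'\subset\E_R(O_1')'$. Thus every element of $\F_R(O_1)$ lies in the commutant of $\E_R(O_1')$ while every element of $\F_R(O_2)$ lies in $\E_R(O_1')$, and the two therefore commute, $[\F_R(O_1),\F_R(O_2)]=\{0\}$.

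I expect no serious obstacle: this is the standard mechanism by which the ``intersection with the commutant of the complement'' (dual-net) construction manufactures a local net from a possibly non-local one. The only points that require care are conventions rather than genuine difficulties. Because $B$ is compact, the complement $O'$ of a bounded open $O\subset\Rl^{d-1}$ contains the point at infinity and is not open, so $\E_R(O')$ should be read as the von Neumann algebra generated by $\E_R(\tilde O)$ over all open $\tilde O$ contained in $\mathrm{int}(O')$. One must then check that disjointness of the open regions $O_1,O_2$ actually yields $O_2\subset\mathrm{int}(O_1')$ so that isotony applies: if $x\in O_2\cap\overline{O_1}$ then, $O_2$ being open, a neighborhood of $x$ would lie in $O_2$ yet meet $O_1$, contradicting $O_1\cap O_2=\emptyset$; hence $O_2\cap\overline{O_1}=\emptyset$, i.e. $O_2\subset B\setminus\overline{O_1}=\mathrm{int}(O_1')$. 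With these conventions fixed, both assertions follow directly from the definition, as stated.
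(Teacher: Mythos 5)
Your proof is correct and is precisely the argument the paper intends: the paper offers no details beyond the remark that the claim follows directly from the definition, and your verification --- conjugation by $V_\nu(\Lambda)$ commutes with intersections and commutants while $\Lambda$ maps complements of regions in $B$ to complements, and for disjoint regions isotony gives $\F_R(O_2)\subset\E_R(O_1')$ while $\F_R(O_1)\subset\E_R(O_1')'$ --- is exactly that direct verification. Your care with the convention for $\E_R(O')$ when $O'$ is closed and unbounded (and the check that disjointness of open sets puts $O_2$ inside $\mathrm{int}(O_1')$) is a sensible refinement of a point the paper glosses over.
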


Local operators $\cO(\bx)$ in the conformal field theory defined by this new net should be thought of, roughly speaking, as elements in the intersection of $\F_R(O)$ for arbitrarily small $O$ containing $\bx$, i.e. in a sense
\ben
\cO(\bx) \in \bigcap_{O \supset \bx} \F_R(O) \qquad  \text{(formally).}
\een
Correlation functions $(\Omega, \cO_{a_1}(\bx_1) \cdots \cO_{a_n}(\bx_n) \Omega)$ of such fields would then again be local in the sense of being symmetric in the $(\bx_j, a_j)$. 

To make such statements precise, one should on the one hand make sure that the size of $\F_R(O)$ is sufficiently large, and one should also make precise what is meant by the above intersection, presumably
by making a construction along the lines of~\cite{Bostelmann:2005-2,FredenhagenHertel:1981}.

\section{Conclusion}\label{section:conclusion}

In this work we have constructed Yang-Baxter $R$-operators for certain unitary representations of $SO^\uparrow(d,1)$. The properties of these operators, in particular the Yang-Baxter equation, unitarity, and crossing symmetry make possible two, essentially canonical, constructions: a) A 1-dimensional ``light ray'' CFT, whose internal degrees of freedom 
transform under the given unitary representation and b) A Euclidean CFT in $(d-1)$ dimensions in which the group $SO^\uparrow(d,1)$ acts by conformal transformations. Both a) and 
b) are constructed from one and the same $R$-operator (in a complementary series representation). Theory b) depends on a discretization parameter $N$ corresponding to a 
set of $N$ discretized ``rapidities'', $\{ \theta_j\}$. The operator algebras in cases a) and b) become formally related when $N \to \infty$, but there is no evident -- even formal -- relation for finite $N$. 

The operator algebra a) is related to certain left- and right local fields on the lightray, \eqref{eq:phiR-concrete} and \eqref{eq:phi'R-concrete} whereas the operator algebra in 
case b) to certain fields of the form~\eqref{phijx}. They are built from certain generalized creation/annihilation operators $z^\dagger(\theta, P)$ in case a) and $z^\dagger_j(\bx)$ in 
case b). These operators satisfy a Zamolodchikov-Faddeev algebra into which the $R$-operators enter. $P$ is a de Sitter ``momentum'' which corresponds to $\bx$ as in Fig.~1 resp. eq.~\eqref{Px}. 
The index $j$ corresponds to the $j$-th discretized rapidity, $\theta_j$. The rapidity variable $\theta$ can be thought of--roughly speaking--as ``dual'' to the lightray variable, $u$ (in the sense of Fourier transform), whereas 
$\bx$ is dual to $X$ (de Sitter points) (in the sense of Fourier-Helgason transform~\eqref{eq:F-dS-pm}). 
Thus, when the spacing between $\{\theta_j\}$ goes to zero, the operator algebras formally coincide, and we think of this isomorphism as 
a model of a dS/CFT-type duality. 

In line with this interpretation, one is tempted to think of $N$ as a ``number of colors'' by analogy with the AdS/CFT correspondence. 
However, we note that the algebra in case b) does not have a corresponding symmetry such as $O(N)$ acting on the index $j$. What is restored in the limit as $N \to \infty$ is 
merely a symmetry corresponding to $\mathbb{Z}_N$ at finite $N$. Furthermore, unlike in the standard AdS/CFT correspondence, there is no correspondence at finite $N$. 
Another notable difference to most presentations of the AdS-CFT correspondence is that in our approach, both sides of the correspondence are automatically represented on the 
``same Hilbert space'', i.e. the unitary representation of the symmetry group $SO(d,1)$ is the same from the outset.

For a better understanding of the status and context of our proposed correspondence, it would undoubtedly be necessary to compare it with the quantization of strings in de Sitter space using more traditional approaches, e.g. via a version of the ``lightcone gauge''. Extensive investigations of this nature have been made in the case of the superstring in ${\rm AdS}_5 \times \mathbb{S}^5$, see e.g.~\cite{AF09} for a review. Although the starting point is rather different in those approaches, they also lead in the end to Yang-Baxter operators (describing the scattering of string excitations). So, a comparison ought to be possible at this level\footnote{Note that in~\cite{AF09}, the dispersion relation turns out to be that of a lattice model, rather than relativistic. In our approach, the Yang-Baxter function is a function of $\theta$, which is naturally treated as a rapidity parameter, and which is therefore 
related to a relativistic dispersion relation.}. 

Apart from the fact that we are interested here in de Sitter- rather than anti-de Sitter target spaces, the investigations~\cite{AF09} 
typically deal with a supersymmetric situation. In a supersymmetric setup, we should replace the 
de Sitter group $SO(d,1)$ -- or rather its Lie-algebra $\frak{so}(d,1)$ -- with a suitable super Lie algebra. This super Lie algebra should be a) {\em real}, b) have a bosonic part 
$\frak{so}(d,1) \oplus \frak{r}$, where the $R$-symmetry part $\frak{r}$ is {\em compact} (since there  are no unitary representations otherwise). The possibilities have been classified, see~\cite{dMH13}. For $d \ge 4$, there are only two, displayed in table~\ref{tab:tableNahm}. 

\begin{table}
\begin{center}
\begin{tabular}{|c|c|c|c|c|c|}
  \hline
  $d$ & Lie super algebra & $\frak{r}$ & de Sitter algebra & Spinor representation & Nahm label  \\ 
  \hline \hline
  $4$ & $\frak{osp}(2|1,1)$ & $\frak{u}(1,{\mathbb H})$ & $\frak{so}(4,1) \ \ ({\rm dS}_4)$ & $\HH \otimes_\RR \mathbb{C}^2$ & {\tt XII}$_1$ \\ 
  $6$ & $\frak{f}(4)^{\prime}$ & $\frak{u}(1,{\mathbb H})$ & $\frak{so}(6,1) \ \ ({\rm dS}_6)$ & $\HH \otimes_\RR 2\HH^2$ & {\tt IX}$_1$  \\ 
    \hline 
\end{tabular} \vspace*{.2cm}
\caption{Real Lie superalgebras with compact $R$-symmetry part whose ``spacetime part'' is the de Sitter algebra in $d \ge 4$ dimensions~\cite{dMH13}.
Here $\HH$ refers to the quaternions, so $\frak{u}(1, \HH) \cong \frak{su}(2)$.}
\label{tab:tableNahm}
\end{center}
\end{table}

It would be interesting to investigate a generalization of our methods to these cases. We must leave this to a future investigation.

\bigskip
\bigskip
\noindent
{\bf Acknowledgements:} The research (S.H.) leading to these results has received funding from the European Research Council under the European Union's Seventh Framework Programme (FP7/2007-2013) / ERC grant agreement no QC \& C 259562. S.H. also likes to thank R.~Kirschner from Leipzig for pointing out to him references~\cite{DerkachovKorchemskyManashov:2001,DerkachovManashov:2006,DerkachovManashov:2008,ChicherinDerkachovIsaev:2013} and for 
explanations regarding RLL-relations and related matters. 

\appendix

\section{Canonical choices for the orbital base $\boldsymbol{B}$}

It appears best to perform some of the calculations involving the form $\om$ \eqref{eq:omega} and the orbital base $B$ of the future lightcone $C_d^+$ of $\RR^{d+1}$ by using specific choices for $B$. It is known since the times of Kepler and Newton that there are 
three canonical choices, which correspond to a flat, hyperbolic, or spherical geometry for $B$. 

\begin{enumerate}
\item {\bf (Flat geometry, $\boldsymbol{B \cong \RR^{d-1} \cup \infty}$).} We realize $B$ as the intersection of $C_d^+$ with some arbitrary but fixed {\em null} plane in $\RR^{d+1}$. A parameterization of 
$B$ is in this case given by $\RR^{d-1} \owns \bx \mapsto P = (\half (|\bx|^2+1), \bx, \half(|\bx|^2 - 1)) \in C_d^+$. The induced geometry is seen to be flat. The point-pair invariant 
and $(d-1)$-form $\omega$ are given ($P,P' \in B$) in this case by
\ben
\omega = \dd^{d-1} \bx \ , \quad P \cdot P' = \frac{1}{2} \ |\bx - \bx'|^2 \ . 
\een 
The group of transformations leaving $B$ invariant is evidently $E(d-1)$, the Euclidean group. With the choice $B \cong \RR^{d-1} \cup \infty$, we may identify the representation space $\K_\nu$ with a space of square integrable functions on $B$. Under this identification, the action of $\Lambda \in SO^\uparrow(d,1)$ on a wave function $\psi(\bp)$ is given by 
\ben
(U_\nu(\Lambda) \psi)(\bx) =J_\Lambda(\bx)^{-\alpha-i\nu}  \psi(\Lambda \cdot \bx) \ , 
\een
where $\Lambda \cdot \bx$ denotes the usual action of a conformal group element on $\bx$, and where $J_\Lambda$ is the 
conformal factor of this transformation, $\Lambda^* |\dd \bx|^2 = J_\Lambda^2 |\dd \bx|^2$.

\item {\bf(Spherical geometry, $\boldsymbol{B \cong \SS^{d-1}}$).} We realize $B$ as the intersection of $C_d^+$ with some arbitrary but fixed {\em space like} plane in $\RR^{d+1}$. A parameterization of 
$B$ is in this case given by $\SS^{d-1} \owns \hat p \mapsto P = (1, \hat p) \in C_d^+$. The induced geometry is seen to be a round sphere. The point-pair invariant 
and $(d-1)$-form $\omega$ are given ($P,P' \in B$) in this case by
\ben
\omega = \dd^{d-1} \hat p \ , \quad P \cdot P' = 1-\hat p \cdot \hat p' \ , 
\een 
where we mean the standard integration element of the round sphere. 
The group of transformations leaving $B$ invariant is evidently $O(d-1)$, the rotational group. 

\item {\bf (Hyperbolic geometry, $\boldsymbol{B \cong \HH^{d-1} \times \{\pm 1\}}$).} We realize $B$ as the intersection of $C_d^+$ with some arbitrary but fixed pair of parallel {\em timelike} planes in $\RR^{d+1}$. A parameterization of the two disconnected components of $B$ is in this case given by $\RR^{d-1} \owns \bp \mapsto P = (\sqrt{|\bp|^2+1}, \bp, \pm 1) \in C_d^+$. 
The induced geometry is seen to be hyperbolic for each connected component corresponding to $\pm 1$, respectively. The point-pair invariant 
and $(d-1)$-form $\omega$ are given ($P,P' \in B$) in this case by
\ben
\omega = \frac{\dd^{d-1} \bp}{\sqrt{|\bp|^2+1}} \ , \quad P \cdot P' = \pm 1 + \sqrt{|\bp|^2+1} \sqrt{|\bp'|^2+1} - \bp \cdot \bp' \ , 
\een 
where the integration element is that of hyperbolic space. 
The group of transformations leaving $B$ invariant is evidently $SO(d-1,1)$.
\end{enumerate}

\begin{center}
	\includegraphics[width=140mm]{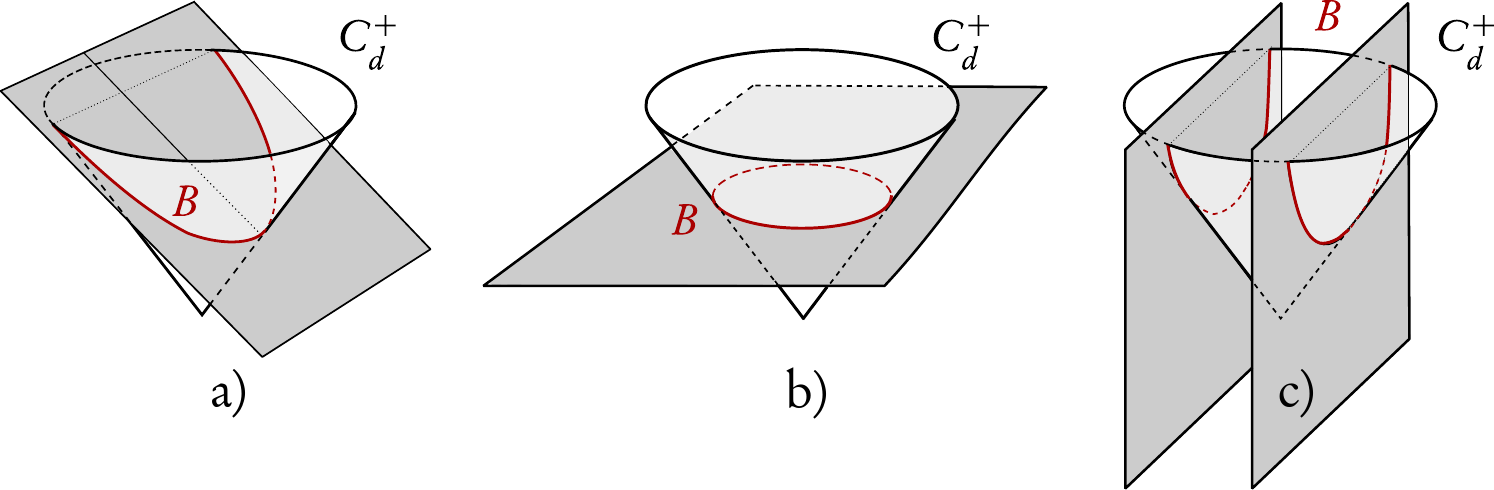}
	\\
	\small a) flat, b) spherical, and c) hyperbolic orbital base.
\end{center}
\newpage

\section{Proofs of Theorem~\ref{theorem:R-dS} and Theorem~\ref{theorem:crossing}}

The proofs of these theorems make use of the following identities: 

\begin{enumerate}

\item ({\bf Symanzik triality relation}):
\ben
\begin{split}\label{eq:triality}
& \int_{\SS^{d-1}} \dd^{d-1} \hat q \, (1-\hat p_1 \cdot \hat q)^{w_1} (1-\hat p_2 \cdot \hat q)^{w_2} (1-\hat p_3 \cdot \hat q)^{w_3} \\
&= (2\pi)^{\alpha} \frac{\Gamma(-w_1') \Gamma(-w_2') \Gamma(-w_3')}{\Gamma(-w_1) \Gamma(-w_2) \Gamma(-w_3)}  
(1-\hat p_2 \cdot \hat p_3)^{w_1'} (1-\hat p_1 \cdot \hat p_3)^{w_2'} (1-\hat p_1 \cdot \hat p_2)^{w_3'} \ , 
\end{split}
\een
for complex parameters $w_1,w_2,w_3\in\Cl$ satisfying $w_1+w_2+w_3= -(d-1),  w_i^{}, w_i'  \notin -\alpha -\Nl_{0}$, where the dual parameters are defined by 
$w_i' := -\alpha - w_i$. The integral on the left side is defined by analytic continuation 
in the parameters by the method described e.g. in~\cite{Hollands:2013}. The proof of the identity follows from formula (5.104) of~\cite{Hollands:2012}
(identical with formula (B22) of~\cite{MarolfMorrison:2011}), after a suitable analytical continuation in the parameters $z_i$ in the formula and an application of the residue theorem. 

The triality relation is best remembered in graphical form as a ``star-triangle relation'',
\begin{center}
	\begin{tikzpicture}
	\draw [thick] (0,0) -- (90:1) node[pos=.7,left]{\scriptsize$w_1$};
	\draw [thick] (0,0) -- (-150:1) node[pos=.7,below]{\scriptsize $w_2$};
	\draw [thick] (0,0) -- (-30:1) node[pos=.8,above]{\scriptsize$w_3$};
	\draw [fill=black] (0,0) circle [radius=0.06];
	\node (p1) at (90:1.5) {$P_1$};
	\node (p2) at (-150:1.5) {$P_2$};
	\node (p3) at (-30:1.5) {$P_3$};
	\node (q) at (-90:0.4) {$Q$};
	\node (=) at (0:2) {$=$};
\end{tikzpicture}
\begin{tikzpicture}
	\draw [thick] (90:1) -- (-30:1) node[pos=.7,above]{\scriptsize$\;\;w_2'$};
	\draw [thick] (-30:1) -- (-150:1) node[pos=.5,below]{\scriptsize $w_1'$};
	\draw [thick] (-150:1) -- (90:1) node[pos=.3,above]{\scriptsize$\!\!\!\!\!w_3'$};
	\node (p1) at (90:1.5) {$P_1$};
	\node (p2) at (-150:1.5) {$P_2$};
	\node (p3) at (-30:1.5) {$P_3$};
\end{tikzpicture}
\end{center}

Here a line with parameter $w$ between two ``momenta'' $P,Q$ denotes a ``propagator'' $\Gamma(-w)\cdot (P\cdot Q)^w$, a dot means integration over that variable wr.t. $(2\pi)^{-\alpha}\om(Q)$, and the product over all lines is understood. 

\item ({\bf Delta function relation}):
\ben\label{Delta}
{\rm anal. cont.}_{w \to 0} \left\{ \frac{\Gamma(\alpha+n -w)}{\Gamma(-n+w)} (1-\hat p_1 \cdot \hat p_2)^{-\alpha+w} \right\} = 
c_n (2\pi)^{\alpha} \, \Delta^n \delta(\hat p_1, \hat p_2) \ , 
\een
where we mean analytic continuation in the sense of distributions 
from the domain $\re(w) > 0$. For a proof and a mathematically precise 
explanation of this kind of analytic continuation, see e.g.~\cite{Hollands:2012,Hollands:2013}. The identity can be demonstrated by applying 
Laplacians to the composition relation below.

The following simple consequence of the triality relation and the delta function relation 
will be needed for the discrete series representations (where $n=0,1,2, \dots$ and $w \notin -\alpha-\mathbb{N}_0$):
\ben
\begin{split}\label{eq:triality1}
& \int_{\SS^{d-1}} \dd^{d-1} \hat q \, (1-\hat p_1 \cdot \hat q)^{n} (1-\hat p_2 \cdot \hat q)^{w} (1-\hat p_3 \cdot \hat q)^{-2\alpha-n-w} \\
&= c_n (2\pi)^{\alpha} \frac{\Gamma(\alpha+w) \Gamma(-\alpha-n-w)}{\Gamma(-w) \Gamma(2\alpha+n+w)}  
\Delta^n \delta(p_2,\hat p_3)(1-\hat p_1 \cdot \hat p_2)^{2\alpha+n} \ . 
\end{split}
\een
On the right side, we mean by $\delta$ the delta function on $\SS^{d-1}$ relative to the standard integration 
measure, and by $\Delta$ the Laplacian on $\SS^{d-1}$. The actual value of the constant $c_n$ is needed only for 
$n=0$, where it is $c_0=1$.

\item ({\bf Composition relation}):
This relation is obtained by taking the limit $w_3 \to 0$ in the triality relation and using the 
delta function relation. One obtains the integral identity  
\ben\label{eq:Composition-Relation}
\begin{split}
& \int_{\SS^{d-1}} \dd^{d-1} \hat q \, (1-\hat p_1 \cdot \hat q)^{ w_1} (1-\hat p_2 \cdot \hat q)^{w_2}  \\
&= (2\pi)^{2\alpha} \frac{\Gamma(\half (w_1-w_2)) \Gamma(\half(w_2-w_1))}{\Gamma(-w_1) \Gamma(-w_2)} \, \delta(\hat p_1, \hat p_2) \ , 
\end{split}
\een 
where it is assumed that $w_i \in \Cl, \sum w_i = -(d-1), w_1, w_2 \notin \Nl_{0}, \half (w_1 - w_2) \notin \ZZ$. The composition relation 
follows in the limit $w_3\to0$ from the triality relation.

\end{enumerate}

After these preparations, we turn to the proofs of Theorem~\ref{theorem:R-dS} and Theorem~\ref{theorem:crossing}.

\bigskip
\noindent{\bf Proof of Theorem~\ref{theorem:R-dS}.} 

\noindent
Some parts of the following proof are similar to the one in \cite{ChicherinDerkachovIsaev:2013} -- in particular, the verification of the Yang-Baxter equation (R4'') via a star-triangle (triality) relation.

We first restrict attention to the case when all $\nu_i$ correspond to principal or complementary series representations. 

\medskip

\noindent{\bf (R2'')} This invariance property holds by the definition of $R_\te$ in terms of Lorentz invariant inner products, and by making use of exactly the same line of argument as in eq.~\eqref{unitary}. 

\medskip

\noindent{\bf (R3'', part I)} The flip operator simply exchanges the variables, so that on the level of integral kernels, we have $(F^{\nu_2\nu_1}\,R_\theta^{\nu_1\nu_2}\,F^{\nu_2\nu_1})(P_1,P_2;P_1',P_2')=R_\theta^{\nu_1\nu_2}(P_2,P_1;P_2',P_1')$. By inspection of the kernel and the constant $c_{\nu_1\nu_2}(\te)=c_{\nu_2\nu_1}(\te)$ \eqref{eq:c}, one then sees that the first equation in (R3'') holds.

\medskip

\noindent{\bf (R1''), (R3'', part II), (R5''):} For unitarity (R1'') and the TCP symmetry (R3'', part II), we have to distinguish the principal and complementary series, because the scalar products and conjugations are different for the two series.

In case $\nu_1,\nu_2\in\Rl$ both belong to principal series representations, the scalar product is given by \eqref{inn}, and we therefore have
\begin{align*}
	(R^{\nu_1\nu_2}_\te)^*(P_1,P_2;Q_1,Q_2)
	&=
	\overline{R^{\nu_1\nu_2}_\te(Q_1,Q_2;P_1,P_2)}
	=
	R^{\nu_2\nu_1}_{-\te}(P_1,P_2;Q_1,Q_2)
	\,.
\end{align*}
Here the second step follows by direct inspection of the kernel, taking into account $\nu_1,\nu_2\in\Rl$. We thus have $(R^{\nu_1\nu_2}_\te)^*=R^{\nu_2\nu_1}_{-\te}$ in this case. 

In case both $\nu_1$ and $\nu_2$ belong to the complementary series, the scalar product is more complicated, but the conjugations $\Gamma_{\nu_1},\Gamma_{\nu_2}$ are simply complex conjugations (Lemma~\ref{lemma:TCP-complementary}), so that the TCP symmetry (R3'', part II) amounts to $\overline{R_\te^{\nu_1\nu_2}(P_1,P_2;Q_1,Q_2)}=R_{-\te}^{\nu_1\nu_2}(P_1,P_2;Q_1,Q_2)$. This equation holds true because $i\nu_1,i\nu_2$ are real for the complementary series.

The verification of $(R^{\nu_1\nu_2}_\te)^*=R^{\nu_2\nu_1}_{-\te}$ for the complementary series, and the TCP symmetry for the principal series require calculations. 

Let us consider the TCP symmetry for the principal series, $\nu_1,\nu_2\in\Rl$. Then the conjugations $\Gamma_{\nu_k}$ are given by complex conjugation and the integral operators $I_{\nu_k}$ \eqref{Idef1} (Lemma~\ref{lemma:TPC-principal}). Inserting the definitions and making use of the graphical notation introduced with the triality relation, one finds that the integral kernel of $((\Gamma_{\nu_2}\ot\Gamma_{\nu_1})R^{\nu_1\nu_2}_\te(\Gamma_{\nu_1}\ot\Gamma_{\nu_2}))(P_1,P_2;Q_1,Q_2)$ is given by
\begin{center}
\begin{tikzpicture}
	\draw [thick] (0,0) -- (2,0) node[pos=.5,below]{\scriptsize$-\alpha-i\nu_1$};
	\draw [thick] (2,0) -- (4,0) node[pos=.5,below]{\scriptsize$-\alpha-i\te+\nu_{12}^-$};
	\draw [thick] (4,0) -- (6,0) node[pos=.5,below]{\scriptsize$-\alpha+i\nu_{2}$};
	\draw [thick] (0,2) -- (2,2) node[pos=.5,above]{\scriptsize$-\alpha-i\nu_2$};
	\draw [thick] (2,2) -- (4,2) node[pos=.5,above]{\scriptsize$-\alpha-i\te-\nu_{12}^-$};
	\draw [thick] (4,2) -- (6,2) node[pos=.5,above]{\scriptsize$-\alpha+i\nu_1$};
	\draw [thick] (2,0) -- (2,2) node[pos=.85,yshift=+1.7ex, sloped,left]{\scriptsize$i\te+\nu_{12}^+$};
	\draw [thick] (4,0) -- (4,2) node[pos=.25,yshift=-1.7ex,sloped,right]{\scriptsize$i\te-\nu_{12}^+$};
	\draw [fill=black] (2,0) circle [radius=0.06];
	\draw [fill=black] (4,0) circle [radius=0.06];
	\draw [fill=black] (2,2) circle [radius=0.06];
	\draw [fill=black] (4,2) circle [radius=0.06];
	\node (p2) at (-.4,0) {$P_2$};
	\node (q2) at (6.4,0) {$Q_2$};
	\node (p1) at (-.4,2) {$P_1$};
	\node (q1) at (6.4,2) {$Q_1$};
\end{tikzpicture}
\end{center}

times $\left\{\Gamma(i\nu_1)\Gamma(i\nu_2)\Gamma(-i\nu_1)\Gamma(-i\nu_2)\Gamma(-i\te-\nu_{12}^+)\Gamma(-i\te+\nu_{12}^+)\Gamma(-i\te-\nu_{12}^-)\Gamma(-i\te+\nu_{12}^-)\right\}^{-1}$.

As a shorthand notation, we wrote $\nu_{kl}^\pm:=\frac{i}{2}(\nu_k\pm\nu_l)$ in the diagrams.

By repeated application of the triality relation, we convert the above diagram to
\begin{center}
\begin{tikzpicture}
	\draw [thick] (0,0) -- (2,0) node[pos=.5,below]{\scriptsize$-\alpha-i\nu_1$};
	\draw [thick] (2,0) -- (4,0) node[pos=.5,below]{\scriptsize$-\alpha-i\te+\nu_{12}^-$};
	\draw [thick] (4,0) -- (6,0) node[pos=.5,below]{\scriptsize$-\alpha+i\nu_{2}$};
	\draw [thick] (0,2) -- (4,2) node[pos=.5,above]{\scriptsize$-\alpha-i\te-\nu_{12}^+$};
	\draw [thick] (2,0) -- (4,2) node[pos=.5,sloped,above]{\scriptsize$i\nu_{2}$};
	\draw [thick] (4,2) -- (6,2) node[pos=.5,above]{\scriptsize$-\alpha+i\nu_1$};
	\draw [thick] (2,0) -- (0,2) node[pos=.45,sloped,above]{\scriptsize$i\te+\nu_{12}^-$};
	\draw [thick] (4,0) -- (4,2) node[pos=.25,yshift=-1.7ex,sloped,right]{\scriptsize$i\te-\nu_{12}^+$};
	\draw [fill=black] (2,0) circle [radius=0.06];
	\draw [fill=black] (4,0) circle [radius=0.06];
	\draw [fill=black] (4,2) circle [radius=0.06];
	\node (p2) at (-.4,0) {$P_2$};
	\node (q2) at (6.4,0) {$Q_2$};
	\node (p1) at (-.4,2) {$P_1$};
	\node (q1) at (6.4,2) {$Q_1$};
	\node (=) at (7,1) {$=$};
	\node (==) at (-1,1){$=$};
\end{tikzpicture}\begin{tikzpicture}
	\draw [thick] (0,0) -- (2,0) node[pos=.5,below]{\scriptsize$-\alpha-i\nu_1$};
	\draw [thick] (2,0) -- (6,0) node[pos=.5,below]{\scriptsize$-\alpha-i\te+\nu_{12}^+$};
	\draw [thick] (0,2) -- (4,2) node[pos=.5,above]{\scriptsize$-\alpha-i\te-\nu_{12}^+$};
	\draw [style=dashed,thick] (2,0) -- (4,2);
	\draw [thick] (4,2) -- (6,2) node[pos=.5,above]{\scriptsize$-\alpha+i\nu_1$};
	\draw [thick] (2,0) -- (0,2) node[pos=.45,sloped,above]{\scriptsize$i\te+\nu_{12}^-$};
	\draw [thick] (4,2) -- (6,0) node[pos=.5,yshift=.2ex,sloped,above]{\scriptsize$i\te-\nu_{12}^-$};
	\draw [fill=black] (2,0) circle [radius=0.06];
	\draw [fill=black] (4,2) circle [radius=0.06];
	\node (p2) at (-.4,0) {$P_2$};
	\node (q2) at (6.4,0) {$Q_2$};
	\node (p1) at (-.4,2) {$P_1$};
	\node (q1) at (6.4,2) {$Q_1$};
\end{tikzpicture}

\noindent\begin{tikzpicture}
	\draw [thick] (0,0) -- (6,0) node[pos=.5,below]{\scriptsize$-\alpha-i\te-\nu_{12}^-$};
	\draw [thick] (0,0) -- (0,2) node[pos=.5,sloped,above]{\scriptsize$i\te-\nu_{12}^+$};
	\draw [thick] (0,2) -- (4,2) node[pos=.5,sloped,above]{\scriptsize$-\alpha-i\te-\nu_{12}^+$};
	\draw [thick] (0,2) -- (6,0) node[pos=.5,sloped,below]{\scriptsize$i\nu_1$};
	\draw [thick] (4,2) -- (6,2) node[pos=.45,sloped,above]{\scriptsize$-\alpha+i\nu_1$};
	\draw [thick] (4,2) -- (6,0) node[pos=.25,yshift=1.7ex,sloped,right]{\scriptsize$i\te-\nu_{12}^-$};
	\draw [fill=black] (4,2) circle [radius=0.06];
	\node (p2) at (-.4,0) {$P_2$};
	\node (q2) at (6.4,0) {$Q_2$};
	\node (p1) at (-.4,2) {$P_1$};
	\node (q1) at (6.4,2) {$Q_1$};
	\node (=) at (7,1) {$=$};
	\node (==) at (-1,1){$=$};
\end{tikzpicture}\begin{tikzpicture}
	\draw [thick] (0,0) -- (6,0) node[pos=.5,below]{\scriptsize$-\alpha-i\te-\nu_{12}^-$};
	\draw [thick] (0,0) -- (0,2) node[pos=.5,sloped,above]{\scriptsize$i\te-\nu_{12}^+$};
	\draw [thick] (0,2) -- (6,2) node[pos=.5,sloped,above]{\scriptsize$-\alpha-i\te+\nu_{12}^+$};
	\draw [style=dashed,thick] (0,2) -- (6,0);
	\draw [thick] (6,2) -- (6,0) node[pos=.25,yshift=1.7ex,sloped,right]{\scriptsize$i\te+\nu_{12}^+$};
	\node (p2) at (-.4,0) {$P_2$};
	\node (q2) at (6.4,0) {$Q_2$};
	\node (p1) at (-.4,2) {$P_1$};
	\node (q1) at (6.4,2) {$Q_1$};
\end{tikzpicture}
\end{center}

Here the dotted lines result from two propagators with opposite powers that cancel each other. The first dotted line produces a factor of $\Gamma(i\nu_2)\Gamma(-i\nu_2)$, and the second dotted line produces a factor of $\Gamma(i\nu_1)\Gamma(-i\nu_1)$. Taking into account these factors (and the $\Gamma$-factor from the initial diagram), one then realizes that the last step represents the integral kernel of $R^{\nu_1\nu_2}_{-\te}$. This finishes the proof of (R3'', part II) for two principal series representations. For the mixed case (one principal series representation and one complementary series representation), the proof is similar. 

Returning to (R1'') for two complementary series representations, one finds that because of the appearance of the integral operators $I_\nu$ in the scalar product, the adjoint is given by
\begin{align}\label{eq:R-adjoint-cc}
	(R^{\nu_1\nu_2}_\te)^*(P_1,P_2;Q_1,Q_2)
	=
	\overline{((I_{\nu_2}\ot I_{\nu_1})R^{\nu_1\nu_2}_\te(I_{-\nu_1}\ot I_{-\nu_2}))(Q_1,Q_2;P_1,P_2)}
	\,.
\end{align}
Passing to the graphical notation, this kernel is given by the exact same diagram as in the TCP symmetry proof for the principal series, but with the replacements $\nu_1\to-\nu_1$, $\nu_2\to-\nu_2$, $P_1\to Q_1$, $P_2\to Q_2$. Converting the last diagram in the earlier calculation into an integral kernel then yields $R^{\nu_2\nu_1}_{-\te}(P_1,P_2;Q_1,Q_2)$. Thus, as in the case of two principal series representations, we have $(R^{\nu_1\nu_2}_\te)^*=R^{\nu_2\nu_1}_{-\te}$. Again, the proof for the mixed case is similar.

To finish the proof of (R1'') and (R5''), it now remains to show $R_\te^{\nu_1\nu_2}R_{-\te}^{\nu_2\nu_1}=1$. This follows (for all combinations of principal/complementary series representations) by application of the composition relation \eqref{eq:Composition-Relation}.

\medskip

\noindent{\bf(R4'')} The integral identity underlying the Yang-Baxter relation is the triality relation. One first calculates that the left and right hand sides of \eqref{YBE-with-nus} coincide on arbitrary $\psi\in\K_{\nu_1}\ot\K_{\nu_2}\ot\K_{\nu_3}$ if and only if the following two integral kernels (with the graphical notation introduced for the triality relation) coincide:
\begin{center}
	\includegraphics[width=150mm]{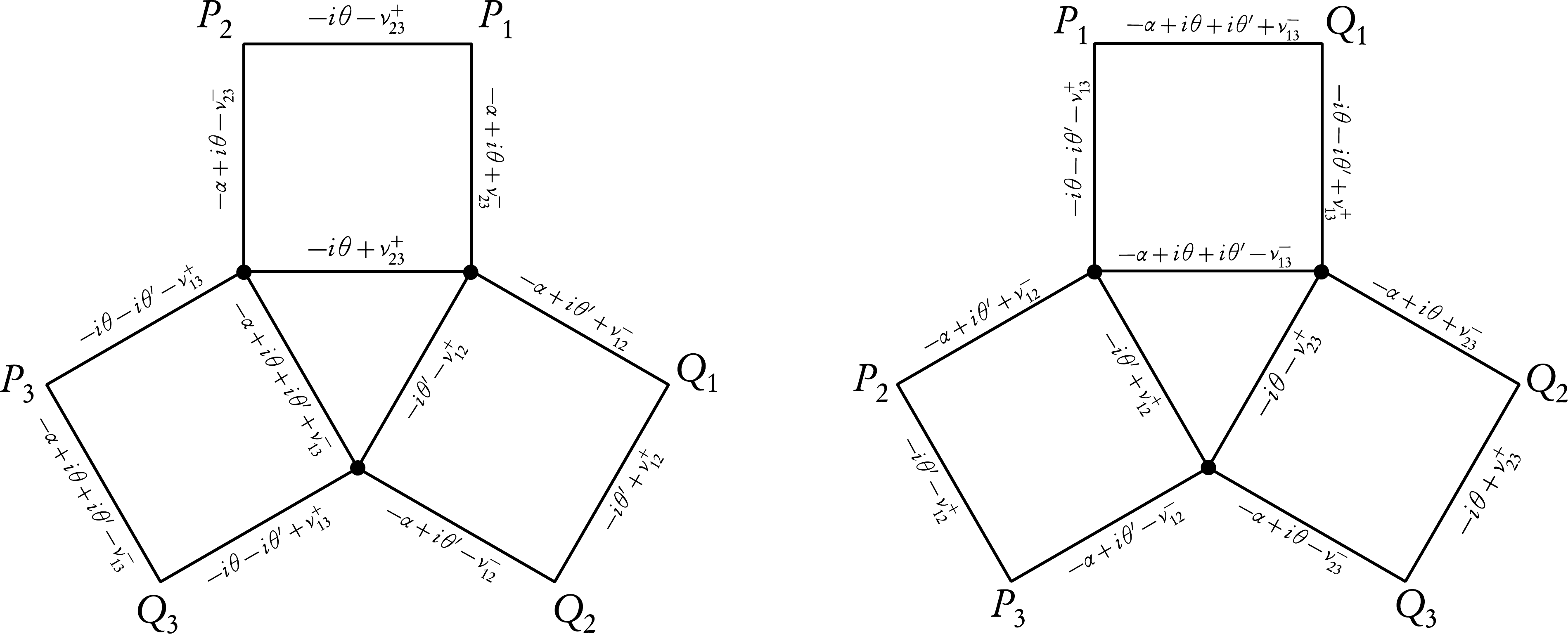}
\end{center}
As a shorthand notation, we wrote here again $\nu_{kl}^\pm:=\frac{i}{2}(\nu_k\pm\nu_l)$.

Note that these integral kernels differ from the ones arising from the Yang-Baxter equation by factors of $(2\pi)$ and Gamma functions of the parameters, but the overall factors are the same for both diagrams. 

We now use the triality relation to convert these diagrams into a more symmetrical form. Beginning with the left diagram, we first convert the interior triangle to a star, and then the three resulting stars into triangles. This shows that the left diagram above coincides with

\begin{center}
	\includegraphics[width=150mm]{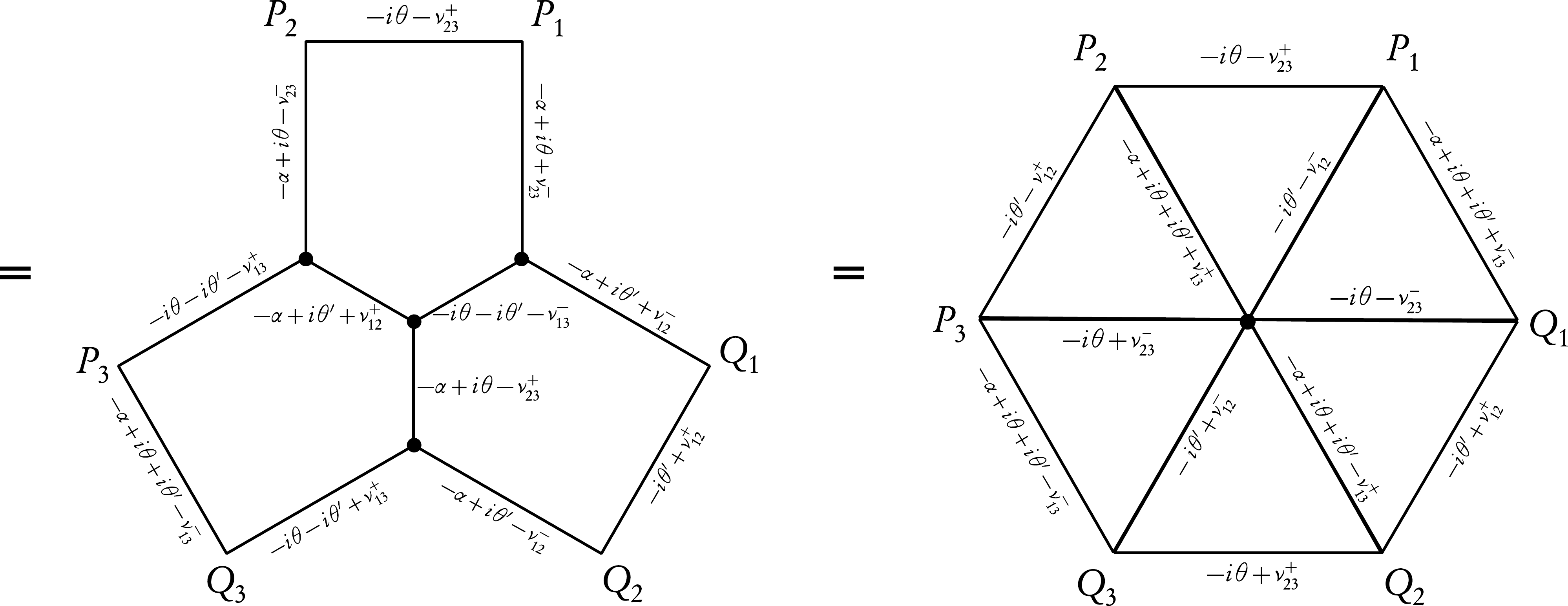}.
\end{center}

Analogous operations yield equality of the right diagram above with
\begin{center}
	\includegraphics[width=150mm]{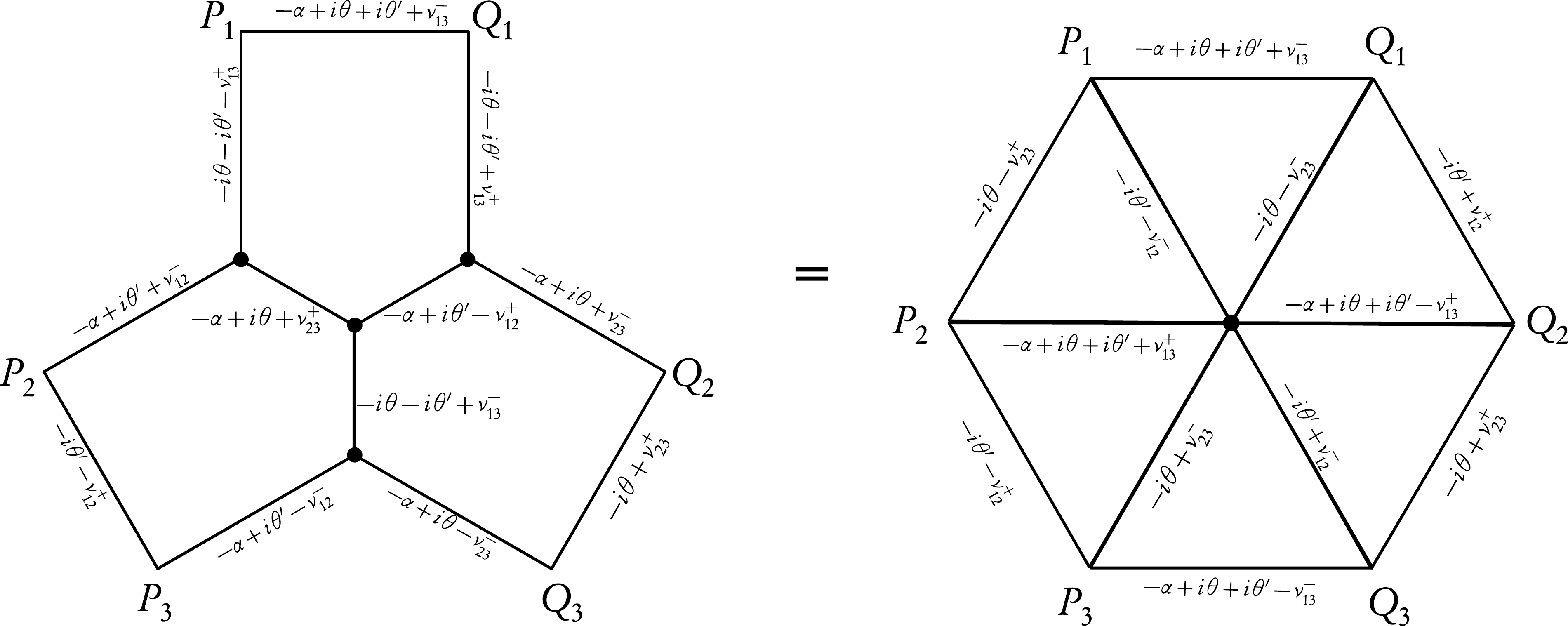}.
\end{center}
The two resulting hexagon diagrams are congruent and hence give identical integral kernels.

\medskip

Setting all $\nu$-parameters to one and the same value, the properties (R1'')--(R5'') imply the properties (R1')--(R5') of an invariant Yang-Baxter function for the representation $V_\nu$ and conjugation $\Gamma_\nu$. (The properties here are slightly stronger because of the division of (R3') into two separate equalities in (R3'').)

Finally, the normalization \eqref{eq:R-Normalization} follows by setting $\nu_1=\nu_2$ in the kernel \eqref{Rdef-kernel}, and taking the limit $\te\to0$ with the delta function relation \eqref{Delta}. 

\medskip

The case when one or more $\nu$-parameters correspond to a discrete series representation can be reduced to the previous cases by perturbing the 
corresponding $\nu$-parameters slightly from their discrete values along the real axis. Then the same arguments as given for the complementary series 
go through, and the desired identities are obtained in the limit where the relevant $\nu$-parameters go to their discrete values. 
One has to take care, however, to apply all identities to suitable wave functions, and, for the complementary series variables, use the constraint
\eqref{discr} before taking the limit. One also has to check that this constraint is actually preserved by the $R$-operator. This follows from \eqref{eq:triality1}.

\hfill$\square$

\bigskip
\noindent{\bf Proof of Theorem~\ref{theorem:crossing}.}

\noindent We will use the shorthand notation $\nu_{12}^\pm:=\frac{i}{2}(\nu_1\pm\nu_2)$ throughout the proof, and we first assume that 
both representations belong to the principal series, $\nu_1,\nu_2\in\Rl$. 

The integral kernel \eqref{Rdef-kernel} is entire analytic in $\te$ for non-coinciding momenta, and for $-\alpha<\im\te<0$ and principal series representations, all singularities are integrable. 
Thus the matrix elements of $R_\te$ are analytic functions on the strip $S_\alpha=\{\te\,:\,-\alpha<\im\te<0\}$, and moreover bounded in $\te$ on this domain.

We now explain the reason for the particular form of the factor $\sigma_{\nu_1\nu_2}$ \eqref{eq:sigma-main}. Inserting \eqref{eq:R-scaled}, we see that (R7'') is equivalent to 
\begin{align}\label{eq:crossing-B}
	\sigma_{\nu_1\nu_2}(-i\alpha-\te)\cdot \langle\xi_2\ot\psi_1,R^{\nu_1\nu_2}_{-i\alpha-\te}\,(\psi_1' \ot\xi'_2)\rangle
	=
	\sigma_{\nu_2\nu_1}(\te)\cdot		\langle\psi_1\ot\Gamma_{\nu_2}\xi_2',\,R_{\te}^{\nu_2\nu_1}(\Gamma_{\nu_2}\xi_2\ot\psi_1')\rangle
	\,.
\end{align}
Since $\nu_1,\nu_2\in\Rl$ belong to principal series representations, the conjugations are $\Gamma_{\nu_k}=CI_{\nu_k}=I_{-\nu_k}C$, where $C$ denotes pointwise complex conjugation. Using this and Lemma~\ref{lemma:TPC-principal}~$a)$, one checks that \eqref{eq:crossing-B} amounts on the level of integral kernels to, $\te\in\Rl$, $P_1,P_2,Q_1,Q_2\in C_d^+$,
\begin{align}\label{eq:crossing-nu1nu2}
	\frac{\sigma_{\nu_1\nu_2}(-i\alpha-\te)}{\sigma_{\nu_2\nu_1}(\te)}
	\cdot R^{\nu_1\nu_2}_{-i\alpha-\te}(P_1,P_2;Q_1,Q_2)
	=
	\left((1\ot I_{\nu_2})R_{\te}^{\nu_2\nu_1}(I_{-\nu_2}\ot1)\right)(P_2,Q_2;P_1,Q_1)
	\,.
\end{align}
In the graphical notation, the right hand side can be transformed with the triality relation into
\begin{center}
	\includegraphics[width=\textwidth]{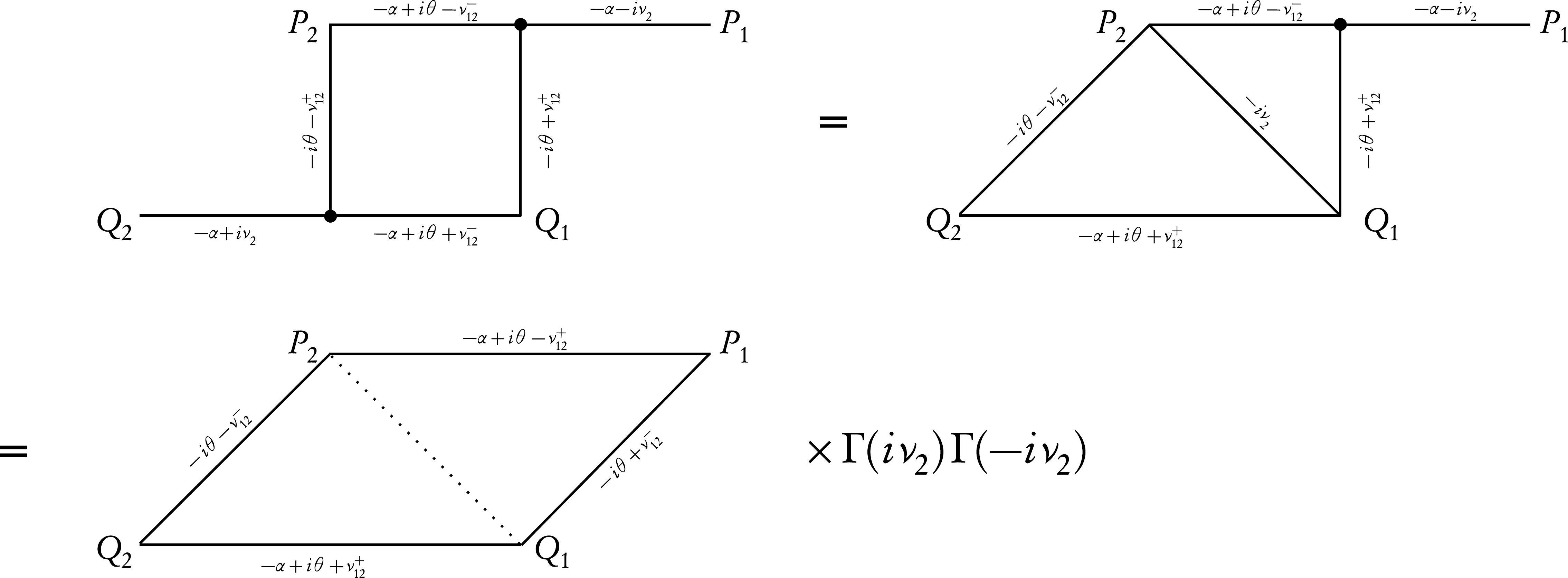}
\end{center}
A factor of $\left\{\Gamma(i\nu_2)\Gamma(-i\nu_2)\Gamma(i\te-\nu_{12}^-)\Gamma(i\te+\nu_{12}^-)\Gamma(i\te+\nu_{12}^+)\Gamma(i\te-\nu_{12}^+)\right\}^{-1}$ has been suppressed in all these diagrams. Comparing with the analytically continued matrix elements on the left hand side of \eqref{eq:crossing-nu1nu2}, one then finds that \eqref{eq:crossing-nu1nu2} holds if
\begin{align}\label{eq:func-eq}
	\frac{\sigma_{\nu_1\nu_2}(-i\alpha-\te)}{\sigma_{\nu_2\nu_1}(\te)}
	&=
	\frac{\Gamma(\alpha-i\te-\nu_{12}^-)\Gamma(\alpha-i\te+\nu_{12}^-)\Gamma(\alpha-i\te-\nu_{12}^+)\Gamma(\alpha-i\te+\nu_{12}^+)}{\Gamma(i\te-\nu_{12}^-)\Gamma(i\te+\nu_{12}^-)\Gamma(i\te-\nu_{12}^+)\Gamma(i\te+\nu_{12}^+)}
	\,.
\end{align}
In order not to spoil the analyticity of the matrix elements of $R^{\nu_1\nu_2}$, we have to choose $\sigma_{\nu_1\nu_2}$ analytic on the strip $S_\alpha$. Furthermore, $\sigma_{\nu_1\nu_2}$ must satisfy the requirements of Proposition~\ref{proposition:tweak-R}, i.e. it must be symmetric in $\nu_1,\nu_2$, and $\overline{\sigma_{\nu_1\nu_2}(\te)}=\sigma_{\nu_1\nu_2}(\te)^{-1}=\sigma_{\nu_1\nu_2}(-\te)$ for $\te\in\Rl$.

We claim that all requirements are satisfied by 
\begin{align}\label{eq:sigma}
	\sigma_{\nu_1\nu_2}(\te)
	=
	\frac{\Gamma(c-i\te)^2}{\Gamma(c+i\te)^2}\cdot e^{if_{\nu_1 \nu_2}(\te)}\,,
	\qquad
	f_{\nu_1\nu_2}(\te):=
	\frac{1}{2\pi}\int_0^\infty dp\,\frac{\sin(\te p)}{p}\,\frac{\ghat_{\nu_1\nu_2}(p)}{\cosh\frac{\alpha p}{2}}\,,
\end{align}
where $c>0$ is a real sufficiently large parameter that will be chosen later, and  
\begin{align}\label{eq:ghat}
	\ghat_{\nu_1\nu_2}(p) 
	&:=
	i\int_\Rl dt\,e^{ipt}\,
	\frac{d}{dt}\log
	g_{\nu_1\nu_2}(t-\tfrac{i\alpha}{2})\,,\\
	g_{\nu_1\nu_2}(t-\tfrac{i\alpha}{2})
	&:=
	\frac{\Gamma(\alpha/2-it-\nu_{12}^-)\Gamma(\alpha/2-it+\nu_{12}^-)\Gamma(\alpha/2-it-\nu_{12}^+)\Gamma(\alpha/2-it+\nu_{12}^+)}{\Gamma(\alpha/2+ it-\nu_{12}^-)\Gamma(\alpha/2 + it+\nu_{12}^-)\Gamma(\alpha/2+ it-\nu_{12}^+)\Gamma(\alpha/2 + it+\nu_{12}^+)} \\
	& \hspace{.8cm} 
	\cdot \frac{\Gamma(\alpha/2 + c+it)^2\Gamma(-\alpha/2 + c +it)^2}{\Gamma(\alpha/2 + c-it)^2\Gamma(-\alpha/2+c-it)^2}
	\,.
	\nonumber
\end{align}
To verify this claim, we need to examine the functions $g_{\nu_1\nu_2}$ and $\ghat_{\nu_1\nu_2}$.
One first checks the poles of the Gamma functions and sees that $g_{\nu_1\nu_2}$ is analytic in the strip $S_\alpha$ if $c>0$ is sufficiently large, for instance if we 
take $c=\alpha$. Its logarithmic derivative can trivially be expressed in terms of the Digamma function $\psi=\Gamma'/\Gamma$ as
\begin{align}\label{Gdef}
	& G_{\nu_1\nu_2}(t-\tfrac{i\alpha}{2})
	:=
	i\frac{d}{dt}\log
	g_{\nu_1\nu_2}(t-\tfrac{i\alpha}{2}) \\
	=&
	+\psi(\alpha/2-it-\nu_{12}^-)+\psi(\alpha/2-it+\nu_{12}^-)+\psi(\alpha/2-it-\nu_{12}^+)+\psi(\alpha/2-it+\nu_{12}^+)+
	\nonumber\\
	&
	+\psi(\alpha/2+it-\nu_{12}^-)+\psi(\alpha/2+it+\nu_{12}^-)+\psi(\alpha/2+it-\nu_{12}^+)+\psi(\alpha/2+ it+\nu_{12}^+)-
	\nonumber
	\\
	&
	-2\psi(c+\alpha/2+it)-2\psi(c-\alpha/2+it)-2\psi(c+\alpha/2-it)-2\psi(c-\alpha/2-it) \nonumber
	\,.
\end{align}
Clearly, this function $G_{\nu_1 \nu_2}(\theta)$ is analytic in $S_\alpha$ as well. By taking into account the asymptotic expansion $\psi(z)\sim\log z-\frac{1}{2z}+{\cal O}(\frac{1}{z^2})$ as $z\to\infty$ in $|\arg(z)|<\pi$, and going through all terms, one also finds that $|G_{\nu_1\nu_2}(t)|$ vanishes quadratically in $\re(t)$ in the strip~$S_\alpha$ for $|t| \to \infty$. Thus $\ghat_{\nu_1\nu_2}$ \eqref{eq:ghat} is well-defined.

Furthermore, we have the symmetry properties $G_{\nu_1\nu_2}(t)=G_{\nu_1\nu_2}(-i\alpha-t)$ and $\overline{G_{\nu_1\nu_2}(-\bar{t})}=G_{\nu_1\nu_2}(t)$, $t\in S_\alpha$, which imply that $\ghat_{\nu_1\nu_2}$ \eqref{eq:ghat} is even and real (for real arguments $p$). Since $\ghat_{\nu_1\nu_2}$ also decays fast because it is the Fourier transform of a smooth function, we see that $f_{\nu_1\nu_2}$ \eqref{eq:sigma} is well-defined, odd, and real (for real $\te$). It then follows that $\sigma_{\nu_1\nu_2}$ \eqref{eq:sigma} satisfies the requirements of Prop.~\ref{proposition:tweak-R}.

It remains to check that $\sigma_{\nu_1\nu_2}$ is bounded and analytic in $S_\alpha$, and that \eqref{eq:func-eq} holds. Regarding analyticity, the integrand of $f_{\nu_1\nu_2}$ is entire in $\te$, and we may estimate the growth of the sine function by $e^{|\im(\te)|p}$. This growing factor is compensated by the falloff of $|g_{\nu_1\nu_2}(p)/\cosh\tfrac{\alpha p}{2}|$. Clearly $|1/\cosh\frac{\alpha p}{2}|\leq 2\,e^{-\alpha p/2}$, and furthermore $|\ghat_{\nu_1\nu_2}(p)|$ decays like $e^{-\alpha p/2}$ as well -- this latter fact follows from a contour shift in the Fourier integral \eqref{eq:ghat}. Together with the remaining decay of the Digamma functions, this establishes the analyticity of $f_{\nu_1\nu_2}$ in the strip~$S_\alpha$. By analogous arguments, one also shows that $e^{if_{\nu_1\nu_2}(\te)}$ is bounded in the strip.

\medskip

To verify \eqref{eq:func-eq}, we compute the Fourier transform $\fti_{\nu_1\nu_2}$ of $f_{\nu_1\nu_2}$. We will use that since $\ghat_{\nu_1\nu_2}$ is even, we have $f_{\nu_1\nu_2}(\te)=\frac{1}{4\pi i}\int_\Rl dp\,\frac{e^{i\te p}}{p}\frac{\ghat_{\nu_1\nu_2}(p)}{\cosh\frac{\alpha p}{2}}$, and we will also make use of $\ghat_{\nu_1\nu_2}(p)=\sqrt{2\pi}\,e^{-\alpha p/2}\tilde{G}_{\nu_1\nu_2}(p)$. This gives
\begin{align*}
	(e^{\alpha q}+1)\,\fti_{\nu_1\nu_2}(q)
	&=
	\frac{\tilde{G}_{\nu_1\nu_2}(q)}{iq}
	\,,
\end{align*}
and after an inverse Fourier transformation, we arrive at
\begin{align*}
	f_{\nu_1\nu_2}(-i\alpha-\te)-f_{\nu_1\nu_2}(\te)
	&=
	-i\,\log g_{\nu_1\nu_2}(\te)
	\Longrightarrow
	\frac{e^{if_{\nu_1\nu_2}(-i\alpha-\te)}}{e^{if_{\nu_1\nu_2}(\te)}}
	=
	g_{\nu_1\nu_2}(\te)
	\,.
\end{align*}
Using the definitions of $\sigma_{\nu_1\nu_2}$ and $g_{\nu_1\nu_2}$, the desired equality \eqref{eq:func-eq} then follows. 

\medskip

We must also check the analyticity and boundedness of the $\Gamma$ factors in the definition of $R^{\nu_1\nu_2}$ and $\sigma_{\nu_1\nu_2}$. These follow from the well-known facts that $\Gamma$ is non-vanishing, 
has poles at the non-positive integers, and the standard asymptotic formula ($|y| \to \infty$)
\ben
|\Gamma(x+iy)| \sim (2\pi)^{\half} \ |y|^{x-\half} e^{-\pi|y|/2} \ .  
\een
It follows that the $\Gamma$ factors are bounded by $O(|\theta|^{(d-1)})$ for large $|\theta|$. 
 The $\theta$-dependence of coming from the exponentials is analytic $\theta$ once we 
form the matrix elements of $R_\theta$, and these exponentials are also clearly bounded in $\theta$. If we take matrix 
elements with smooth wave functions, we get from these factors decay as $|\theta|^{-n}$ where $n$ is as large as we wish.
Thus, all pieces in $R_\theta$ are analytic  in the strip $S_\alpha$ and decay faster than any inverse power 
$|\theta|^{-n}$ for $|\theta | \to \infty$ if we take matrix elements with smooth wave functions. 

\medskip

To derive the infinite product formula for $\sigma_{\nu_1 \nu_2}(\theta)$ quoted in \eqref{eq:sigma-main} is rather lengthy, and we only sketch the main steps. First, we expand the Digamma functions in 
the definition of $G_{\nu_1 \nu_2}$ using the well-known series
\ben
\psi(z) = -\gamma_E + \sum_{n=0}^\infty \left( \frac{1}{n+1} - \frac{1}{n+z} \right) \ . 
\een
Substituting this series for each of the terms in the expression~\eqref{Gdef} for $G_{\nu_1 \nu_2}$, we find that all contributions from Euler's constant $\gamma_E$ and from the sums over $1/(n+1)$ cancel each other. We next calculate $\hat g_{\nu_1 \nu_2}(p)$ 
by performing the integral \eqref{eq:ghat} over $t$ separately for each term in the series (this is admissible, because both the series and the integral are absolutely convergent). The resulting integrals all have the form
\ben
\int_{-\infty}^\infty dt \ \frac{(\beta+n) e^{itp}}{(\beta+n -it)(\beta+n+it)} = -\pi \ e^{-(n+\beta)|p|} \ , 
\een
where the residue theorem was used, and where $\beta$ stands for the various constants that appear. The sum over $n$ can then be easily done with the aid of a geometric series, resulting in the expression
\ben\label{fint}
f_{\nu_1 \nu_2}(\theta) = 4 \int_0^\infty dp \ \frac{\sin(\theta p)}{p(1+e^{\alpha p})(1-e^{-p})} [\cosh(\nu_{12}^+ p) + \cosh(\nu_{12}^-p) - e^{-cp}(1+e^{\alpha p})] \ . 
\een 
In order to perform this integral, we expand out the factors $(1+e^{\alpha p})^{-1}, (1-e^{-p})^{-1}$ using a geometric series, resulting altogether in a double series indexed by natural numbers $n,m$. 
The integral can be pulled inside this double series and can then be performed fairly easily for each term. Each such term turns out to be a logarithm, so the double series of these logarithms becomes a 
logarithm of a doubly infinite product. The end result can be written as
\begin{align}
f_{\nu_1 \nu_2}(\theta) &= \frac{1}{i} \log \prod_{n,m=0}^\infty  \bigg\{ \prod_{p,q=\pm} \frac{(-i\theta + \alpha(2n+1) + m - p \nu_{12}^q)(i\theta + \alpha(2n+2) + m - p \nu_{12}^q)}{(i\theta + \alpha(2n+1) + m - p \nu_{12}^q)(-i\theta + \alpha(2n+2) + m - p \nu_{12}^q)}   \nonumber \\
& \qquad\qquad\qquad \cdot \frac{(-i \theta + \alpha(2n+2)+m-c)(i\theta + \alpha(2n) + m -c)}{(i\theta + \alpha(2n+2) +m-c)(-i\theta + \alpha(2n) + m -c)} \bigg\} \ . 
\end{align}
The product over $m$ can be performed with the aid of the well-known infinite product 
\ben
\Gamma(z+1) = \prod_{m=1}^\infty \left( 1+ \frac{z}{m} \right)^{-1} e^{z/m} \ , 
\een
and this results in the formula \eqref{eq:sigma-main} for $\sigma_{\nu_1 \nu_2}$ quoted in the main text after choosing for $c$ the value $c=\alpha$.

In case we consider two coinciding complementary series representations $\nu_i=\nu$, we have to take into account the different conjugation and different scalar product \eqref{Idef}. One then finds that the same analytic properties, and in particular the same functional equation \eqref{eq:func-eq} 
are required for the factor $\sigma_{\nu\nu}$. 
The solution is given by the same infinite product formula quoted in the main text \eqref{eq:sigma-main}, but we now need to choose $c=\alpha+i\nu$. This guarantees in particular 
absolute convergence of the infinite product, as one may see using standard asymptotic expansions of the Gamma function. 
\hfill$\square$

\footnotesize


\begin{thebibliography}{GHSS09}
\providecommand{\url}[1]{\texttt{#1}}
\providecommand{\urlprefix}{ }

\bibitem[AF09]{AF09}
G.~Arutyunov and S.~Frolov,
  ``Foundations of the $AdS_5 \times S^5$ Superstring. Part I,''
  J.\ Phys.\ A {\bf 42}, 254003 (2009)

\bibitem[AAR01]{AbdallaAbdallaRothe:2001}
E.~Abdalla, M.~Abdalla and K.~Rothe.
\newblock \emph{{Non-perturbative methods in two-dimensional quantum field
  theory}}.
\newblock World Scientific (2001)

\bibitem[AFZ79]{ArinshteinFateevZamolodchikov:1979}
A.~E. Arinshtein, V.~A. Fateev and A.~B. Zamolodchikov.
\newblock {Quantum S-Matrix of the (1+1)-Dimensional Toda Chain}.
\newblock \emph{Phys. Lett.} B87 (1979) 389--392

\bibitem[Ala14]{Alazzawi:2014}
S.~Alazzawi.
\newblock \emph{{Deformations of Quantum Field Theories and the Construction of
  Interacting Models}}.
\newblock Ph.D. thesis, University of Vienna (2014).
\newblock ArXiv:1503.00897
\newline\urlprefix\url{http://www.arxiv.org/abs/1503.00897}

\bibitem[Bei12]{Bei12}
N.~Beisert {\it et al.},
  ``Review of AdS/CFT Integrability: An Overview,''
  Lett.\ Math.\ Phys.\  {\bf 99}, 3 (2012)
  
\bibitem[Bom16]{Bom16}
D.~Bombardelli {\it et al.},
  ``An integrability primer for the gauge-gravity correspondence: An introduction,''
  J.\ Phys.\ A {\bf 49}, no. 32, 320301 (2016)

\bibitem[BBS01]{BorchersBuchholzSchroer:2001}
H.~Borchers, D.~Buchholz and B.~Schroer.
\newblock {Polarization-free generators and the {S}-matrix}.
\newblock \emph{Comm. Math. Phys.} 219 (2001) 125--140
\newline\urlprefix\url{http://arxiv.org/abs/hep-th/0003243}

\bibitem[BC12]{BostelmannCadamuro:2012}
H.~Bostelmann and D.~Cadamuro.
\newblock {An operator expansion for integrable quantum field theories}.
\newblock \emph{J. Phys. A: Math. Theor.} 46 (2012) 095401

\bibitem[BFK06]{BabujianFoersterKarowski:2006}
H.~M. Babujian, A.~Foerster and M.~Karowski.
\newblock {The Form Factor Program: a Review and New Results - the Nested SU(N)
  Off-Shell Bethe Ansatz}.
\newblock \emph{SIGMA} 2 (2006) 082
\newline\urlprefix\url{http://arxiv.org/abs/hep-th/0609130}

\bibitem[BGL02]{BrunettiGuidoLongo:2002}
R.~Brunetti, D.~Guido and R.~Longo.
\newblock {Modular localization and Wigner particles}.
\newblock \emph{Rev. Math. Phys.} 14 (2002) 759--786
\newline\urlprefix\url{http://arxiv.org/abs/math-ph/0203021}

\bibitem[BL04]{BuchholzLechner:2004}
D.~Buchholz and G.~Lechner.
\newblock {Modular nuclearity and localization}.
\newblock \emph{Annales Henri Poincar{\'e}} 5 (2004) 1065--1080
\newline\urlprefix\url{http://arxiv.org/abs/math-ph/0402072}

\bibitem[BLM11]{BostelmannLechnerMorsella:2011}
H.~Bostelmann, G.~Lechner and G.~Morsella.
\newblock {Scaling limits of integrable quantum field theories}.
\newblock \emph{Rev. Math. Phys.} 23 (2011) 1115--1156
\newline\urlprefix\url{http://arxiv.org/abs/1105.2781}

\bibitem[BM96]{BrosMoschella:1996}
J.~Bros and U.~Moschella.
\newblock {Two-point Functions and Quantum Fields in de Sitter Universe}.
\newblock \emph{Rev. Math. Phys.} 8 (1996) 327--392
\newline\urlprefix\url{http://arxiv.org/abs/gr-qc/9511019}

\bibitem[Bos05]{Bostelmann:2005-2}
H.~Bostelmann.
\newblock {Operator product expansions as a consequence of phase space
  properties}.
\newblock \emph{J.Math.Phys.} 46 (2005) 082304
\newline\urlprefix\url{http://arxiv.org/abs/math-ph/0502004v3}

\bibitem[BT15]{BischoffTanimoto:2013}
M.~Bischoff and Y.~Tanimoto.
\newblock {Integrable QFT and Longo-Witten endomorphisms}.
\newblock \emph{Annales Henri Poincar{\'e}} 16 (2015) 569--608
\newline\urlprefix\url{http://arxiv.org/abs/1305.2171}

\bibitem[BW76]{BisognanoWichmann:1976}
J.~J. Bisognano and E.~H. Wichmann.
\newblock {On the Duality Condition for Quantum Fields}.
\newblock \emph{J. Math. Phys.} 17 (1976) 303--321

\bibitem[CDI13]{ChicherinDerkachovIsaev:2013}
D.~Chicherin, S.~Derkachov and A.~P. Isaev.
\newblock {Conformal algebra: R-matrix and star-triangle relation}.
\newblock \emph{JHEP} 1304:20 (2013)
\newline\urlprefix\url{http://arxiv.org/abs/1206.4150v2}

\bibitem[dMH13]{dMH13}
P.~de Medeiros and S.~Hollands,
  ``Conformal symmetry superalgebras,''
  Class.\ Quant.\ Grav.\  {\bf 30} (2013) 175016

\bibitem[DKM01]{DerkachovKorchemskyManashov:2001}
S.~E. Derkachov, G.~P. Korchemsky and A.~N. Manashov.
\newblock {Noncompact Heisenberg spin magnets from high-energy QCD: I. Baxter
  Q-operator and Separation of Variables}.
\newblock \emph{Nucl.Phys.B} 617 (2001) 375--440
\newline\urlprefix\url{http://arxiv.org/abs/hep-th/0107193v2}

\bibitem[DM06]{DerkachovManashov:2006}
S.~E. Derkachov and A.~N. Manashov.
\newblock {R-Matrix and Baxter Q-Operators for the Noncompact SL(N,C) Invariant
  Spin Chain}.
\newblock \emph{SIGMA} 2 (2006) 084
\newline\urlprefix\url{http://arxiv.org/abs/nlin/0612003v1}

\bibitem[DM11]{DerkachovManashov:2008}
S.~E. Derkachov and A.~N. Manashov.
\newblock {Noncompact sl(N) spin chains: BGG-resolution, Q-operators and
  alternating sum representation for finite dimensional transfer matrices}.
\newblock \emph{Lett. Math. Phys.} 97 (2011) 185--202
\newline\urlprefix\url{http://arxiv.org/abs/1008.4734v2}

\bibitem[Dur70]{Duren:1970}
P.~Duren.
\newblock \emph{{Theory of $H^p$ Spaces}}.
\newblock {Dover Books on Mathematics}. Dover Publications, Inc., New York
  (1970)

\bibitem[EM14]{EpsteinMoschella:2014}
H.~Epstein and U.~Moschella.
\newblock {de Sitter tachyons and related topics} (2014)
\newline\urlprefix\url{http://arxiv.org/abs/1403.3319v2}

\bibitem[Fad84]{Faddeev:1984}
L.~D. Faddeev.
\newblock \emph{{Quantum completely integrable models in field theory}},
  volume~1 of \emph{{Mathematical Physics Reviews}}, 107--155 (1984).
\newblock In Novikov, S.p. ( Ed.): Mathematical Physics Reviews, Vol. 1,
  107-155

\bibitem[FH81]{FredenhagenHertel:1981}
K.~Fredenhagen and J.~Hertel.
\newblock {Local algebras of observables and point-like localized fields}.
\newblock \emph{Comm. Math. Phys.} 80 (1981) 555

\bibitem[FOS83]{FrohlichOsterwalderSeiler:1983}
J.~Fr{\"o}hlich, K.~Osterwalder and E.~Seiler.
\newblock {On Virtual representations of symmetric spaces and their analytic
  continuation}.
\newblock \emph{Annals Math.} 118 (1983) 461--489
\newline\urlprefix\url{http://www.jstor.org/stable/2006979}

\bibitem[GHSS09]{GuicaHartmannSongStrominger:2008}
M.~Guica, T.~Hartman, W.~Song and A.~Strominger.
\newblock {The Kerr/CFT Correspondence}.
\newblock \emph{Phys. Rev. D} 80 (2009) 124008
\newline\urlprefix\url{http://arxiv.org/abs/0809.4266v1}

\bibitem[GLW98]{GuidoLongoWiesbrock:1998}
D.~Guido, R.~Longo and H.~Wiesbrock.
\newblock {Extensions of conformal nets and superselection structures}.
\newblock \emph{Comm. Math. Phys.} 192 (1998) 217--244
\newline\urlprefix\url{http://arxiv.org/abs/hep-th/9703129}

\bibitem[GW09]{GoodmanWallach:2009}
R.~Goodman and N.~R. Wallach.
\newblock \emph{{Symmetry, Representations, and Invariants}}.
\newblock Springer (2009)

\bibitem[Haa96]{Haag:1996}
R.~Haag.
\newblock \emph{{Local Quantum Physics - Fields, Particles, Algebras}}.
\newblock Springer, second edition (1996)

\bibitem[Hol12]{Hollands:2012}
S.~Hollands.
\newblock {Massless interacting quantum fields in de Sitter spacetime}.
\newblock \emph{Annales Henri Poincare} 13 (2012) 1039--1081

\bibitem[Hol13]{Hollands:2013}
S.~Hollands.
\newblock {Correlators, Feynman diagrams, and quantum no-hair in de Sitter
  spacetime}.
\newblock \emph{Comm. Math. Phys.} 319 (2013) 1--68

\bibitem[H98]{H98}
C.~M.~Hull,
  ``Timelike T duality, de Sitter space, large N gauge theories and topological field theory,''
  JHEP {\bf 9807}, 021 (1998)

\bibitem[Iag93]{Iagolnitzer:1993}
D.~Iagolnitzer.
\newblock \emph{{Scattering in Quantum Field Theories}}.
\newblock Princeton University Press, Princeton (1993)

\bibitem[Ket00]{Ketov:2000}
S.~V. Ketov.
\newblock \emph{{Quantum Non-linear Sigma-Models}}.
\newblock Springer (2000)

\bibitem[KR86]{KadisonRingrose:1986}
R.~V. Kadison and J.~R. Ringrose.
\newblock \emph{{Fundamentals of the Theory of Operator Algebras II - Advanced
  Theory}} (1986)

\bibitem[Lan75]{Lang:1975}
S.~Lang.
\newblock \emph{{$SL_2(R)$}}.
\newblock Springer (1975)

\bibitem[Lec03]{Lechner:2003}
G.~Lechner.
\newblock {Polarization-free quantum fields and interaction}.
\newblock \emph{Lett. Math. Phys.} 64 (2003) 137--154
\newline\urlprefix\url{http://arxiv.org/abs/hep-th/0303062}

\bibitem[Lec06]{Lechner:2006}
G.~Lechner.
\newblock \emph{{On the construction of quantum field theories with factorizing
  S-matrices}}.
\newblock Ph.D. thesis, University of G{\"o}ttingen (2006)
\newline\urlprefix\url{http://arxiv.org/abs/math-ph/0611050}

\bibitem[Lec15]{Lechner:AQFT-book:2015}
G.~Lechner.
\newblock \emph{{Algebraic Constructive Quantum Field Theory: Integrable Models and Deformation Techniques}}. In: Advances in Algebraic Quantum Field Theory, R.~Brunetti~et.al. (eds.), 397--449,
\newblock Springer (2015)

\bibitem[LM95]{LiguoriMintchev:1995}
A.~Liguori and M.~Mintchev.
\newblock {Fock Spaces with Generalized Statistics}.
\newblock \emph{Lett. Math. Phys.} 33 (1995) 283--295

\bibitem[LS14]{LechnerSchutzenhofer:2013}
G.~Lechner and C.~Sch{\"u}tzenhofer.
\newblock {Towards an operator-algebraic construction of integrable global
  gauge theories}.
\newblock \emph{Annales Henri Poincar{\'e}} 15 (2014) 645--678
\newline\urlprefix\url{http://arxiv.org/abs/1208.2366v1}

\bibitem[Mal98]{Maldacena:1998}
J.~M. Maldacena.
\newblock {The large N limit of superconformal field theories and
  supergravity}.
\newblock \emph{Adv. Theor. Math. Phys.} 2 (1998) 231

\bibitem[MM11]{MarolfMorrison:2011}
D.~Marolf and I.~A. Morrison.
\newblock {The IR stability of de Sitter QFT: results at all orders}.
\newblock \emph{Phys.Rev.} 84 (2011) 044040
\newline\urlprefix\url{http://arxiv.org/abs/1010.5327v2}

\bibitem[NO14]{NeebOlafsson:2014-2}
K.-H. Neeb and G.~Olafsson.
\newblock {Reflection positivity and conformal symmetry}.
\newblock \emph{J. Funct. Anal.} 266 (2014) 2174--2224
\newline\urlprefix\url{http://arxiv.org/pdf/1206.2039}

\bibitem[Sch66]{Schwartz:1966}
L.~Schwartz.
\newblock \emph{{Theorie des Distributions}}.
\newblock Hermann (1966)

\bibitem[Sch95]{Schmudgen:1995}
K.~Schm{\"u}dgen.
\newblock {An operator-theoretic approach to a cocycle problem in the complex
  plane}.
\newblock \emph{Bull. London Math. Soc.} 27 (1995) 341--346

\bibitem[Smi92]{Smirnov:1992}
F.~A. Smirnov.
\newblock \emph{{Form Factors in Completely Integrable Models of Quantum Field
  Theory}}.
\newblock World Scientific, Singapore (1992)

\bibitem[Str01]{Strominger:2006}
A.~Strominger.
\newblock {The dS/CFT Correspondence}.
\newblock \emph{JHEP} 0110: (2001) 341--346
\newline\urlprefix\url{http://arxiv.org/abs/hep-th/0106113v2}

\bibitem[SW00]{SchroerWiesbrock:2000-1}
B.~Schroer and H.~Wiesbrock.
\newblock {Modular constructions of quantum field theories with interactions}.
\newblock \emph{Rev. Math. Phys.} 12 (2000) 301--326
\newline\urlprefix\url{http://arxiv.org/abs/hep-th/9812251}

\bibitem[VK91]{VilenkinKlimyk:1991}
N.~Vilenkin and A.~Klimyk.
\newblock \emph{{Representations of Lie Groups and Special Functions Vol I}}.
\newblock Kluwer (1991)

\bibitem[Wit98]{Witten:1998}
E.~Witten.
\newblock {Anti De Sitter Space And Holography}.
\newblock \emph{Adv. Theor .Math. Phys.} 2 (1998) 253--291
\newline\urlprefix\url{http://arxiv.org/abs/hep-th/9802150v2}

\bibitem[Zam78]{ZamolodchikovZamolodchikov:1978}
A.~Zamolodchikov.
\newblock {Relativistic factorized S-matrix in two dimensions having O(N)
  isotopic symmetry}.
\newblock \emph{Nuclear Physics B} 133 (1978) 525--535

\bibitem[Zam79]{ZamolodchikovZamolodchikov:1979}
A.~B. Zamolodchikov.
\newblock {Factorized {S}-matrices in two dimensions as the exact solutions of
  certain relativistic quantum field models}.
\newblock \emph{Annals Phys.} 120 (1979) 253--291

\end{thebibliography}

\end{document}